\documentclass[envcountsect,envcountsame,oribibl,orivec]{llncs}
\usepackage{amssymb}
\usepackage{amsmath}
\pagestyle{plain}
\usepackage{url}
\vfuzz2pt

  \addtolength{\textwidth}{0.3cm}

\usepackage{enumitem}
\usepackage{colonequals}
\usepackage{MnSymbol}

\usepackage{graphics}
\usepackage{graphicx}
\usepackage{hyperref}

\bibliographystyle{plainurl}
\makeatletter
\newcommand{\todo}[1]{\marginpar{\textbf{TODO\footnotemark}}\@latex@warning{TODO: #1}\footnotetext{ #1}}
\makeatother
\usepackage{etoolbox}
\makeatletter
\let\llncs@addcontentsline\addcontentsline
\patchcmd{\maketitle}{\addcontentsline}{\llncs@addcontentsline}{}{}
\patchcmd{\maketitle}{\addcontentsline}{\llncs@addcontentsline}{}{}
\patchcmd{\maketitle}{\addcontentsline}{\llncs@addcontentsline}{}{}
\setcounter{tocdepth}{2}
\makeatother
\usepackage{hyperref}
\usepackage{bookmark}
%

\usepackage{tikz}
\newcommand*\circled[1]{\tikz[baseline=(char.base)]{
		\node[shape=circle,draw,inner sep=0.4pt] (char) {#1};}}

\newcommand{\Prop}{\textsf{Prop}}
\newcommand{\Formulae}{\textsf{Fml}}

\newcommand{\lfalse}{\bot}
\newcommand{\lneg}{\neg}

\newcommand{\propax}{\ensuremath{(\textsf{Taut})}}
\newcommand{\lrule}[2]{\displaystyle{\frac{#1}{#2}}}
\newcommand{\mprule}{\ensuremath{(\textsf{MP})}}
\newcommand{\limplies}{\rightarrow}
\newcommand{\liff}{\leftrightarrow}

\newcommand{\lnext}{\bigcirc}
\newcommand{\lalways}{\Box}
\newcommand{\leventually}{\Diamond}
\newcommand{\lonce}{\diamondminus}
\newcommand{\lsofar}{\boxminus}
\newcommand{\luntil}{{\,\mathcal{U}\,}}
\newcommand{\lsince}{{\,\mathcal{S}\,}}
\newcommand{\ltime}{{\,\texttt{time}\,}}
\newcommand{\ltrue}{{\,\texttt{true}\,}}

\newcommand{\wprevious}{\circled{\textit{w}}}
\newcommand{\sprevious}{\circled{\textit{s}}}

\newcommand{\kax}{\ensuremath{\textsf{-k}}}
\newcommand{\nextkax}{\ensuremath{(\lnext\kax)}}
\newcommand{\alwayskax}{\ensuremath{(\lalways\kax)}}
\newcommand{\funax}{\ensuremath{(\textsf{fun})}}

\newcommand{\mixaxone}{\ensuremath{(\textsf{mix1})}}
\newcommand{\mixaxtwo}{\ensuremath{(\textsf{mix2})}}
\newcommand{\indax}{\ensuremath{(\textsf{ind})}}
\newcommand{\uoneax}{\ensuremath{(\luntil\textsf{1})}}
\newcommand{\utwoax}{\ensuremath{(\luntil\textsf{2})}}
\newcommand{\necrule}{\ensuremath{\textsf{-nec}}}
\newcommand{\nextnecrule}{\ensuremath{(\lnext\necrule)}}
\newcommand{\alwaysnecrule}{\ensuremath{(\lalways\necrule)}}
\newcommand{\prevnecrule}{\ensuremath{(\wprevious\necrule)}}
\newcommand{\sofarnecrule}{\ensuremath{(\lsofar\necrule)}}
\newcommand{\uindrule}{\ensuremath{(\luntil\textsf{-R})}}
\newcommand{\sindrule}{\ensuremath{(\lsince\textsf{-R})}}
\newcommand{\prevkax}{\ensuremath{(\wprevious\kax)}}
\newcommand{\sofarkax}{\ensuremath{(\lsofar\kax)}}
\newcommand{\swprevax}{\ensuremath{(\textsf{sw})}}
\newcommand{\fpax}{\ensuremath{(\textsf{FP})}}
\newcommand{\pfax}{\ensuremath{(\textsf{PF})}}
\newcommand{\initialax}{\ensuremath{(\textsf{initial})}}
\newcommand{\sofarindax}{\ensuremath{(\lsofar\textsf{-ind})}}
\newcommand{\soneax}{\ensuremath{(\lsince\textsf{1})}}
\newcommand{\stwoax}{\ensuremath{(\lsince\textsf{2})}}

\newcommand{\LTL}{\textsf{LTL}}
\newcommand{\LTLp}{\LTL^{\sf P}}
\newcommand{\LPLTLp}{\textsf{LPLTL}^{\sf P}}
\newcommand{\LPLTL}{\textsf{LPLTL}}

\newcommand{\JLTL}{\textsf{LTL}^{\sf J}}

\newcommand{\lknows}{\mathsf{K}}

\newcommand{\SFive}{\textsf{S5}}

\newcommand{\nlax}{\ensuremath{(\textsf{nl})}}

\newcommand{\jnlax}{\ensuremath{(\textsf{jnl})}}
\newcommand{\jnfax}{\ensuremath{(\textsf{jnf})}}

\newcommand{\nfax}{\ensuremath{(\textsf{nf})}}

\newcommand{\CTerms}{\textsf{Const}}
\newcommand{\VTerms}{\textsf{Var}}
\newcommand{\Terms}{\textsf{Tm}}
\newcommand{\Ag}{\textsf{Ag}}

\newcommand{\jbox}[1]{\left[#1\right]\!}
\newcommand{\tapp}{\cdot}
\newcommand{\tinspect}{!}

\newcommand{\tnext}{\Rrightarrow}
\newcommand{\tprev}{\Lleftarrow}
\newcommand{\talwaysaccess}{\Downarrow}
\newcommand{\tgeneralize}{\Uparrow}
\newcommand{\tnextaccess}{\downarrow}

\newcommand{\thenceforthaccess}{\Downarrow_P}
\newcommand{\thenceforthgeneralize}{\Uparrow_P}
\newcommand{\tprevaccess}{\downarrow_P}
\newcommand{\tsprevright}{\Rrightarrow_P}
\newcommand{\tsprevleft}{\Lleftarrow_P}
\newcommand{\twprevright}{\Rightarrow_P}

\newcommand{\LP}{\textsf{LP}}

\newcommand{\appax}{\ensuremath{(\textsf{application})}}
\newcommand{\sumax}{\ensuremath{(\textsf{sum})}}
\newcommand{\posintax}{\ensuremath{(\textsf{positive introspection})}}

\newcommand{\refax}{\ensuremath{(\textsf{reflexivity})}}
\newcommand{\constnecrule}{\ensuremath{(\textsf{ax}\necrule)}}
\newcommand{\fpappax}{\ensuremath{(\textsf{FP-application})}}
\newcommand{\fpsumax}{\ensuremath{(\textsf{FP-sum})}}
\newcommand{\fpposintax}{\ensuremath{(\textsf{FP-positive introspection})}}

\newcommand{\iteratedconstnecrule}{\ensuremath{(\textsf{iax}\necrule)}}

\newcommand{\CS}{\textsf{CS}}

\newcommand{\numberofagents}{h}
\newcommand{\agent}{i}


\newcommand{\alwaysaccessprinciple}{\ensuremath{(\lalways\textsf{-access})}}
\newcommand{\generalizeprinciple}{\ensuremath{(\textsf{generalize})}}
\newcommand{\nextaccessprinciple}{\ensuremath{(\lnext\textsf{-access})}}
\newcommand{\nextrightshiftprinciple}{\ensuremath{(\lnext\textsf{-right})}}
\newcommand{\nextleftshiftprinciple}{\ensuremath{(\lnext\textsf{-left})}}

\newcommand{\pastgeneralizeprinciple}{\ensuremath{(\lsofar\textsf{-generalize})}}
\newcommand{\pastaccessprinciple}{\ensuremath{(\lsofar\textsf{-access})}}
\newcommand{\wprevaccessprinciple}{\ensuremath{(\wprevious\textsf{-access})}}
\newcommand{\wprevrightshiftprinciple}{\ensuremath{(\wprevious\textsf{-right})}}
\newcommand{\sprevrightshiftprinciple}{\ensuremath{(\sprevious\textsf{-right})}}
\newcommand{\sprevleftshiftprinciple}{\ensuremath{(\sprevious\textsf{-left})}}


\newcommand{\generalizeevidence}{\ensuremath{(generalize\text{-}\evidence)}}
\newcommand{\alwaysaccessevidence}{\ensuremath{(\lalways\text{-access-}\evidence)}}
\newcommand{\nextaccessevidence}{\ensuremath{(\lnext\text{-access-}\evidence)}}
\newcommand{\nextrightshiftevidence}{\ensuremath{(\lnext\text{-right-}\evidence)}}
\newcommand{\nextleftshiftevidence}{\ensuremath{(\lnext\text{-left-}\evidence)}}

\newcommand{\pastgeneralizeevidence}{\ensuremath{(\lsofar\text{-generalize-}\evidence)}}
\newcommand{\pastaccessevidence}{\ensuremath{(\lsofar\text{-access-}\evidence)}}
\newcommand{\wprevaccessevidence}{\ensuremath{(\wprevious\text{-access-}\evidence)}}
\newcommand{\wprevrightshiftevidence}{\ensuremath{(\wprevious\text{-right-}\evidence)}}
\newcommand{\sprevrightshiftevidence}{\ensuremath{(\sprevious\text{-right-}\evidence)}}
\newcommand{\sprevleftshiftevidence}{\ensuremath{(\sprevious\text{-left-}\evidence)}}

\newcommand{\jnfevidence}{\ensuremath{(\textsf{jnf-}\evidence)}}
\newcommand{\jnlevidence}{\ensuremath{(\textsf{jnl-}\evidence)}}


\newcommand{\nextleftshiftR}{\ensuremath{(\lnext\text{-left-R})}}
%


\newcommand{\localstates}{\mathbf{L}}
\newcommand{\runs}{\mathcal{R}}
\newcommand{\system}{\mathcal{I}}
\newcommand{\accrel}{\sim}
\newcommand{\evidence}{\mathcal{E}}
\newcommand{\valuation}{\nu}
\newcommand{\entails}{\vDash}
\newcommand{\proves}{\vdash}

\newcommand{\M}{\mathcal{M}}

\newcommand{\N}{\mathbb{N}}
\newcounter{enumsave}
\setcounter{enumsave}{0}




\renewcommand{\phi}{\varphi}

\newcommand{\Sub}{\mathsf{Sub}}
\newcommand{\Subf}{\mathsf{Subf}}
\newcommand{\MCS}{\mathsf{MCS}}

\newcommand{\B}{\mathcal{B}}
\newcommand{\Op}[1]{\Phi^{#1}}
\newcommand{\OpB}{\Op{\B}_i}

\newcommand{\RO}[4]{#1 R_\lnext #2\ [#3, #4]}

\begin{document}

\title{Linear Temporal Justification Logics with Past Operators}
\author{Meghdad Ghari\thanks{This research was in part supported by a grant from IPM (No. 96030426).}
}
\institute{Department of Philosophy, Faculty of Literature and Humanities,\\
	University of Isfahan, Isfahan, Iran \\ and \\ School of Mathematics,
	Institute for Research in Fundamental Sciences (IPM), \\ P.O.Box: 19395-5746, Tehran, Iran \\ \email{ghari@ipm.ir}
}

\maketitle

\begin{abstract}
In this paper we present various temporal justification logics involving both past and future time modalities. We combine Artemov's logic of proofs with linear temporal logic (with both past and future operators), and establish its soundness and completeness. Then we investigate several principles describing the interaction of justification and time.
\end{abstract}

 \section{Introduction}
 \label{sec:Introduction}

Linear temporal logics of knowledge are useful for reasoning about situations where the knowledge of an agent  is changed over time \cite{FHMV95,HvdMV04,HZ92}. The temporal component in such systems is usually  interpreted over a discrete linear model of time with finite past and infinite future; in this case $(\N,<)$ can be chosen as the flow of time (for a logic of knowledge and branching time see \cite{vdMW03}). And the knowledge component is typically modeled using the modal logic $\SFive$.

This paper continues the study of temporal justification logics from \cite{Bucheli15,BucheliGhariStuder2017}. Temporal justification logic is a new family of temporal logics of knowledge in which the knowledge of agents is modeled using a justification logic. Justification logics are modal-like logics that provide a framework for reasoning about epistemic justifications (see \cite{Art08RSL,ArtFit11SEP} for a survey). The language of multi-agent justification logics extends the language of propositional logic by justification terms and expressions of the form $\jbox{t}_i \phi$, with the intended meaning ``$t$   is agent $i$'s justification for $\phi$.''  The \emph{Logic of Proofs} $\LP$ was the first logic in the family of justification logics, introduced by  Artemov in \cite{Art95TR,Art01BSL}. The logic of proofs is a justification counterpart of the modal epistemic logic {\sf S4}.

It is known that linear temporal logic with only future time operators is weak to fully express some properties of systems, such as \textit{unique initial states} and \textit{synchrony} (cf. \cite{HvdMV04,FrenchMeydenReynolds2004}). Neither  of the temporal justification logics of \cite{Bucheli15} and \cite{BucheliGhariStuder2017} contains past time operators in their languages. The aim of this paper is to add past time operators to the temporal justification logic of \cite{BucheliGhariStuder2017}, and to study principles describing the interaction of justifications and time.

 \section{Language}
 \label{sec:Syntax}

  In the following, let $\numberofagents$ be a fixed number of agents, $\Ag = \{1, \ldots, \numberofagents \}$ the set of all agents, $\CTerms$ a countable set of justification constants, $\VTerms$ a countable set of justification variables, and $\Prop$ a countable set of atomic propositions.

 The set of justification terms $\Terms$ is defined inductively by
 \[
  t \coloncolonequals c \mid x \mid \; \tinspect t \mid \;  t + t \mid t \tapp t \, ,
 \]
 where $c \in \CTerms$ and $x \in \VTerms$.

 The set of formulas $\Formulae$ is inductively defined by 
 \[
  \phi \coloncolonequals P \mid \lfalse \mid \phi \limplies \phi \mid \lnext \phi \mid \wprevious \phi \mid \phi \luntil \phi \mid \phi \lsince \phi \mid \jbox{t}_\agent\phi \, , 
 \]
 where $i \in \Ag$, $t \in \Terms$,  and $P \in \Prop$. The temporal operators $\lnext,\wprevious, \luntil,\lsince$ are respectively called \textit{next (or tomorrow), weak previous (or weak yesterday), until}, and \textit{since}. An \textit{until formula} is a formula of the form $\phi \luntil \psi$ for some formulas $\phi$ and $\psi$, and a \textit{justification assertion} is a formula of the form $\jbox{t}_i \phi$ for some formula $\phi$ and term $t$.

We use the following usual abbreviations:
 \begin{align*}
  \lneg \phi &\colonequals \phi \limplies \lfalse &
  \top &\colonequals \lneg \lfalse &\\
  \phi \lor \psi &\colonequals \lneg \phi \limplies \psi &
  \phi \land \psi &\colonequals \lneg (\lneg \phi \lor \lneg \psi) \\
  \phi \liff \psi &\colonequals (\phi \limplies \psi) \land (\psi \limplies \phi) & \sprevious \phi &\colonequals \neg \wprevious \neg \phi
   \\
  \leventually \phi &\colonequals \top \luntil \phi & \lalways \phi &\colonequals \lneg \leventually \lneg \phi  \\
 \lonce \phi &\colonequals \top \lsince \phi  & \lsofar \phi &\colonequals \lneg \lonce\lneg \phi.
 \end{align*}

The temporal operators $\sprevious, \lalways, \leventually, \lsofar, \lonce$ are respectively called \textit{strong previous, always from now on (or henceforth), sometime (or eventuality), has-always-been}, and \textit{once}.

Associativity and precedence of connectives, as well as the corresponding omission of brackets, are handled in the usual manner.

Subformulas are defined as usual. The set of subformulas $\Sub(\chi)$ of a formula $\chi$ is inductively given by:
 \begin{align*}
 \Sub(P) &\colonequals \{P\}  & \Sub(\bot) &\colonequals \{\bot\}\\
 \Sub(\phi \to \psi) &\colonequals \{\phi \to \psi\} \cup \Sub(\phi) \cup \Sub(\psi) & \Sub(\jbox{t}_\agent\phi) &\colonequals \{\jbox{t}_\agent\phi\} \cup \Sub(\phi) \\
  \Sub(\phi \luntil \psi ) &\colonequals \{\phi \luntil \psi\} \cup \Sub(\phi) \cup \Sub(\psi) & \Sub(\lnext \phi) &\colonequals \{\lnext \phi\} \cup \Sub(\phi) \\
  \Sub(\phi \lsince \psi ) &\colonequals \{\phi \lsince \psi\} \cup \Sub(\phi) \cup \Sub(\psi) & \Sub(\wprevious\phi) &\colonequals \{\wprevious\phi\} \cup \Sub(\phi).
 \end{align*}

For a set $S$ of formulas, $\Sub(S)$ denotes the set of all subformulas of
the formulas from $S$.

The combined language of justification logic and temporal logic allows for expressing some properties of systems that are not expressible in the known logics of knowledge and time. For example,

\begin{itemize}
	\item ``$t$ justifies $\phi$ for agent $i$ until $\psi$ holds" can be expressed by $(\jbox{t}_i \phi) \luntil \psi$.  

	\item ``$t$ justifies $\phi$ for agent $i$ since $\psi$ holds" can be expressed by $(\jbox{t}_i \phi) \lsince \psi$. 
	
	\item ``$t$ is agent $i$'s \textit{conclusive evidence} that $\phi$ is true" can be expressed by $\lalways \jbox{t}_i \phi$ or even by $\lsofar \jbox{t}_i \phi \wedge \lalways  \jbox{t}_i \phi$.
	
	\item ``If agent $i$ knows that $\phi$ for reason $t$, then  $\phi$ is always true" can be expressed by $\jbox{t}_i \phi \to \lalways \phi$. 
	
	\item ``Agent $i$ will have not forgotten her justification $t$ for $\phi$ by tomorrow, providing she possesses the justification now" can be expressed by $\jbox{t}_i \phi \to \lnext \jbox{t}_i  \phi$. 
	
	\item ``Agent $i$ will learn that $t$ is a justification for $\phi$  tomorrow, but she does not know it now" can be expressed by $\neg \jbox{t}_i \phi \wedge \lnext \jbox{t}_i  \phi$. 
\end{itemize}

More connections between justification and time will be explored in Sections \ref{sec:Connecting principles}, \ref{sec:No forgetting and no learning}, and \ref{sec:JLTL}.
 \section{Axioms}
 \label{sec:Axioms}

  The axiom system for temporal justification logic consists of three parts, namely propositional logic, temporal logic, and justification logic.

 \subsection*{Propositional Logic}
 For propositional logic, we take
 \begin{enumerate}
  \setcounter{enumi}{\theenumsave}
  \item all propositional tautologies \hfill \propax
  \setcounter{enumsave}{\theenumi}
 \end{enumerate}
 as axioms and the rule modus ponens, as usual:
 \[
   \lrule{\vdash \phi \qquad \vdash \phi \limplies \psi}{\vdash \psi}\,\mprule \, .
 \]

 \subsection*{Temporal Logic}
 For the temporal part, we use a system of~\cite{Gabbay,Gol87,Gor99} and  \cite{PnueliLichtensteinZuck1985,PnueliLichtenstein2000,FrenchMeydenReynolds2004}
 with axioms \\
 
\noindent {\bf Axioms for the future operators:}
 \begin{enumerate}
  \setcounter{enumi}{\theenumsave}
  \item $\lnext( \phi \limplies \psi) \limplies (\lnext \phi \limplies \lnext \psi)$ \hfill \nextkax
  \item $\lalways( \phi \limplies \psi) \limplies (\lalways \phi \limplies \lalways \psi)$ \hfill \alwayskax
  \item $\lnext \lneg \phi \liff \lneg \lnext \phi$ \hfill \funax
  \item $\lalways (\phi \limplies \lnext \phi) \limplies (\phi \limplies \lalways \phi)$ \hfill \indax
  \item $\phi \luntil \psi \limplies \leventually \psi$ \hfill \uoneax
  \item $\phi \luntil \psi \liff \psi \lor (\phi \land \lnext(\phi \luntil \psi))$ \hfill \utwoax
  \setcounter{enumsave}{\theenumi}
\end{enumerate}
\noindent {\bf Axioms for the past operators:}
  \begin{enumerate}
  	\setcounter{enumi}{\theenumsave}
  \item $\lsofar( \phi \limplies \psi) \limplies (\lsofar \phi \limplies \lsofar \psi)$ \hfill \sofarkax
  \item $\wprevious( \phi \limplies \psi) \limplies (\wprevious \phi \limplies \wprevious \psi)$ \hfill \prevkax
  \item $\sprevious \phi \rightarrow \wprevious\phi$ \hfill \swprevax
  \item $\lonce \wprevious \bot$ \hfill \initialax
    \item $\lsofar (\phi \limplies \wprevious \phi) \limplies (\phi \limplies \lsofar \phi)$ \hfill \sofarindax
  \item $\phi \lsince \psi \rightarrow \lonce \psi$ \hfill \soneax
  \item $\phi \lsince \psi \liff \psi \lor (\phi \land \sprevious(\phi \lsince \psi))$ \hfill \stwoax
  \setcounter{enumsave}{\theenumi}
 \end{enumerate}
\noindent {\bf Axioms for the interaction of the future and past operators:}
\begin{enumerate}
	\setcounter{enumi}{\theenumsave}

	\item $\phi \rightarrow \lnext \sprevious \phi$ \hfill \fpax
	\item $\phi \rightarrow \wprevious \lnext \phi$ \hfill \pfax

	\setcounter{enumsave}{\theenumi}
\end{enumerate}
 and rules
 \[
  \lrule{\vdash \phi}{\vdash \lnext \phi}\,\nextnecrule \, , \qquad\lrule{\vdash \phi}{\vdash \wprevious \phi}\, \prevnecrule \, , \qquad \lrule{\vdash \phi}{\vdash \lalways\phi}\,\alwaysnecrule \, ,\qquad \lrule{\vdash \phi}{\vdash \lsofar\phi}\,\sofarnecrule.
 \]

Let $\LTLp$ denote the axiomatic system given by the above axioms and rules.
 \subsection*{Justification Logic}
 Finally, for the justification logic part, we use a multi-agent version of the Logic of Proofs~\cite{Art01BSL,BucKuzStu11JANCL,Ghari-TCS-2014,TYav08TOCS} with axioms 
 \begin{enumerate}
 	\setcounter{enumi}{\theenumsave}
 	\item $\jbox{t}_\agent (\phi \limplies \psi) \limplies (\jbox{s}_\agent \phi \limplies \jbox{t \tapp s}_\agent \psi)$ \hfill \appax
 	\item $\jbox{t}_\agent \phi \rightarrow \jbox{t + s}_\agent \phi$, $\jbox{s}_\agent \phi \limplies  \jbox{t + s}_\agent \phi$ \hfill \sumax
 	\item $\jbox{t}_\agent \phi \limplies \phi$ \hfill \refax
 	\item $\jbox{t}_\agent \phi \limplies \jbox{\tinspect t}_\agent \jbox{t}_\agent \phi$ \hfill \posintax
 	\setcounter{enumsave}{\theenumi}
 \end{enumerate}

and the iterated axiom necessitation rule
\[
\lrule{\jbox{c_{j_n}}_{i_n}\ldots\jbox{c_{j_1}}_{i_1} \phi \in \CS}{\vdash \jbox{c_{j_n}}_{i_n}\ldots\jbox{c_{j_1}}_{i_1} \phi}\ \iteratedconstnecrule
\]
where the \textit{constant specification} $\CS$ is a set of formulas of the form
 $$\jbox{c_{j_n}}_{i_n}\ldots\jbox{c_{j_1}}_{i_1} \phi$$ 
 where $n \geq 1$, $i_1,\ldots,i_n$ are arbitrary agents,  $c_{j_n},\ldots,c_{j_1}$ are justification constants, and  $\phi$ is an axiom instance of  propositional logic, temporal logic, or justification logic. Moreover, a constant specification $\CS$ should be downward closed in the sense that whenever $\jbox{c_{j_n}}_{i_n}\jbox{c_{j_{n-1}}}_{i_{n-1}}\ldots\jbox{c_{j_1}}_{i_1} \phi \in \CS$, then $\jbox{c_{j_{n-1}}}_{i_{n-1}}\ldots\jbox{c_{j_1}}_{i_1} \phi \in \CS$ for $n >1$.

\begin{definition}
	A constant specification $\CS$ for a justification logic $\mathsf{L}$ is \textit{axiomatically appropriate} provided, for every axiom instance $\phi$ of $\mathsf{L}$ and for every $n \geq 1$, and every $i_1, \ldots, i_n \in \Ag$,   $\jbox{c_{j_n}}_{i_n}\ldots\jbox{c_{j_1}}_{i_1} \phi \in \CS$ for some justification constants $c_{j_n},\ldots,c_{j_1}$.
\end{definition}

\begin{remark}
	It is perhaps worth noting that the temporal justification logics of \cite{Bucheli15, BucheliGhariStuder2017} are formalized using the following axiom necessitation rule
	 \[
	   \lrule{\jbox{c}_\agent \phi \in \CS}{\vdash \jbox{c}_\agent \phi}\, \constnecrule \, .
	 \]
	 We prefer $\iteratedconstnecrule$ to $\constnecrule$ because the iterated axiom necessitation rule enables us to prove the internalization property (see Section \ref{sec:Internalization}). All the results of this paper, except the results of Section \ref{sec:Internalization}, continue to hold if the logics are formalized by the rule $\constnecrule$.
\end{remark}

For a given constant specification $\CS$, we use $\LPLTLp_\CS$ to denote the Hilbert system given by the axioms and rules for propositional logic, temporal logic, and justification logic as presented above.
We write $\vdash_\CS \phi$  if a formula $\phi$ is derivable in $\LPLTLp_\CS$.
 
The definition of derivation from a set of premises is standard. A formula $\phi$ is derivable from the set of assumptions $\Gamma$, written $\Gamma \vdash_\CS \phi$, iff $\phi$ is in $\Gamma$, or is one of the axioms of $\LPLTLp_\CS$, or follows from derivable formulas through applications of the rules $\mprule$, $\constnecrule$, and necessitation, where  necessitation rules can be applied only to derivations without assumptions. In other words: 

\[ \frac{\phi \in \Gamma}{\Gamma \vdash_\CS \phi} \, , \qquad
\frac{\phi \in Axiom}{\Gamma \vdash_\CS \phi} \, , \qquad
\frac{\Gamma \vdash_\CS \chi~~~\Delta \vdash_\CS \chi \rightarrow \psi}{\Gamma,\Delta \vdash_\CS \psi} \, , \qquad \lrule{\jbox{c}_\agent \phi \in \CS}{\Gamma \vdash \jbox{c}_\agent \phi} \, ,\]

\[  \lrule{\vdash_\CS \phi}{\Gamma \vdash_\CS \lnext \phi} \, ,  \qquad\qquad \lrule{\vdash_\CS \phi}{\Gamma \vdash_\CS \wprevious \phi} \, , \qquad\qquad \lrule{\vdash_\CS \phi}{\Gamma \vdash_\CS \lalways\phi} \, , \qquad\qquad \lrule{\vdash_\CS \phi}{\Gamma \vdash_\CS \lsofar\phi} \,.
\]

Note that the Deduction Theorem holds in $\LPLTLp_\CS$. It is easy to show that:
\[
\Gamma \vdash_\CS \phi \text{ iff there exist $\psi_1,\ldots,\psi_n \in \Gamma$ such that 
	$\vdash_\CS (\psi_1 \land \cdots \land \psi_n) \to \phi$.}
\]

Temporal justification logic $\LPLTL$ of \cite{BucheliGhariStuder2017} is a fragment of $\LPLTLp$ without past operators $\wprevious$ and $\lsince$, and without axioms and rules involving past operators.

The axiomatization for linear time temporal logic given in~\cite{Gabbay,Gol87,Gor99,KrogerMerz2008} includes the following axioms 
\[
\lalways \phi \to (\phi \land \lnext \lalways \phi),
\]
\[
\lsofar \phi \to (\phi \wedge \wprevious \lsofar \phi).
\]
The following lemma shows that we do not need these axioms since in our formalization $\lalways$ and $\lsofar$ are defined operators. 

\begin{lemma}\label{lem:mix:1}
The following formulas are provable in $\LTLp$:

\begin{enumerate}
	\item $\lalways \phi \to (\phi \land \lnext \lalways \phi)$.
	\item $\lalways \phi \to  \lnext \phi$.
	\item $\lsofar \phi \to (\phi \wedge \wprevious \lsofar \phi)$.
	\item $\lsofar \phi \to  \wprevious \phi$.
\end{enumerate}

In item 1, $\mprule$ is the only rule that is used in the derivation.
\end{lemma}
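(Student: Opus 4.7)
The plan is to exploit the fact that, under these definitions, $\Box$ and $\boxminus$ are not primitive, so the ``mix'' principles reduce to one-step unfoldings of the until/since fixed-point axioms \utwoax and \stwoax.

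For item~1 I would start from the instance of \utwoax with $\phi \luntil \psi$ set to $\top \luntil \neg\phi$, namely
\[
\top \luntil \neg\phi \;\liff\; \neg\phi \,\vee\, \bigl(\top \wedge \lnext(\top \luntil \neg\phi)\bigr).
\]
By definition this is $\Diamond \neg\phi \liff \neg\phi \vee \lnext \Diamond \neg\phi$. Contraposing, using \funax to push $\neg$ through $\lnext$, and unfolding $\Box \phi = \neg \Diamond \neg\phi$, propositional logic yields $\Box \phi \liff \phi \wedge \lnext \Box \phi$, of which the claimed implication is one half. Only \utwoax, \funax, \propax and \mprule are invoked, which matches the side remark about item~1.

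For item~2 I would use item~1 twice: from $\Box \phi \to \phi$ apply \nextnecrule to get $\lnext(\Box \phi \to \phi)$, then \nextkax gives $\lnext \Box \phi \to \lnext \phi$; composing with $\Box \phi \to \lnext \Box \phi$ (the other half of item~1) yields $\Box \phi \to \lnext \phi$. Items~3 and~4 are the exact past-operator analogues. For item~3 I would instantiate \stwoax at $\top \lsince \neg\phi$ to obtain $\diamondminus \neg\phi \liff \neg\phi \vee \sprevious \diamondminus \neg\phi$, then contrapose and unfold $\sprevious$ via $\sprevious \chi := \neg \wprevious \neg \chi$; the inner double negation disappears and we get $\boxminus \phi \liff \phi \wedge \wprevious \boxminus \phi$. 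For item~4 I would mimic item~2: from item~3 we have $\boxminus \phi \to \phi$, so \prevnecrule and \prevkax yield $\wprevious \boxminus \phi \to \wprevious \phi$, which combined with $\boxminus \phi \to \wprevious \boxminus \phi$ finishes the proof.

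There is no real obstacle here; the only care needed is the bookkeeping of negations when moving between $\sprevious$ and $\wprevious$ in item~3 (since $\sprevious$ is defined, not primitive), and the observation that items~2 and~4 genuinely do require the necessitation rules for $\lnext$ and $\wprevious$ respectively, whereas item~1 does not, because items~2 and~4 must propagate $\Box\phi\to\phi$ (resp.\ $\boxminus\phi\to\phi$) underneath a temporal operator.
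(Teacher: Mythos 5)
Your proposal is correct and follows essentially the same route as the paper: item 1 by instantiating \utwoax at $\top \luntil \lnot\phi$, contraposing and using \funax{} with only \mprule{}, item 2 by combining the two conjuncts of item 1 via \nextnecrule{} and \nextkax{}, and items 3--4 as the past-operator analogues via \stwoax{}, \prevnecrule{} and \prevkax{} (the paper simply states these as ``similar''). Your added remark that the double negation arising from the defined $\sprevious$ in item 3 is handled purely propositionally, and that only items 2 and 4 need necessitation, matches the paper's side condition.
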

\begin{proof}
\begin{enumerate}
	\item 
$\lalways \phi$ stands for $\lnot ( \top \luntil \lnot\phi)$.
Hence from $\utwoax$ we get
\[
  \lnot \phi \lor \lnext (\top \luntil \lnot \phi) \to \top \luntil \lnot \phi.
\]
Taking the contrapositive yields
\[
  \lnot ( \top \luntil \lnot \phi) \to \lnot (\lnot \phi \lor \lnext (\top \luntil \lnot \phi)).
\]
By propositional reasoning and $\funax$ we get
\[
 \lnot ( \top \luntil \lnot \phi) \to  (\phi \land \lnext \lnot (\top \luntil \lnot \phi)),
\]
which is
\[
  \lalways \phi \to (\phi \land \lnext \lalways \phi).
\]

\item From item 1 and propositional reasoning we get
\begin{align}
\lalways \phi \limplies \phi \label{eq:mixax1}\\
\lalways \phi \limplies \lnext\lalways \phi \label{eq:mixax2}
\end{align}
From~\eqref{eq:mixax1} and $\nextnecrule$ we get 
\[
\lnext(\lalways \phi \limplies \phi)
\]
which, in turn, using $\nextkax$ and propositional reasoning gives
\[
\lnext\lalways \phi \limplies \lnext \phi \, .
\]
By propositional reasoning and using~\eqref{eq:mixax2} we obtain 
\[
\lalways \phi \limplies \lnext \phi.
\]

\item Similar to item 1.

\item Similar to item 2.\qed
\end{enumerate}
\end{proof}

\begin{lemma}\label{lem: properties of previous operators}
	The following formulas are provable in $\LTLp$:
	\begin{enumerate}
		\item $\sprevious \phi \rightarrow \neg \wprevious \bot$.
		
		\item $\wprevious(\phi_1 \vee \ldots \vee \phi_n) \leftrightarrow (\wprevious \phi_1 \vee \ldots \vee \wprevious \phi_n)$.
		
		\item $\sprevious(\phi_1 \vee \ldots \vee \phi_n) \leftrightarrow (\sprevious \phi_1 \vee \ldots \vee \sprevious \phi_n)$.
		
		\item $\wprevious(\phi_1 \vee \phi_2 \vee \ldots \vee \phi_{n-1} \vee \phi_n) \leftrightarrow (\sprevious \phi_1 \vee \sprevious \phi_2 \vee\ldots \vee \sprevious \phi_{n-1} \vee \wprevious \phi_n)$.
		
		\item $\wprevious (\phi_1 \wedge \ldots \wedge \phi_n) \leftrightarrow (\wprevious \phi_1 \wedge \ldots \wedge \wprevious \phi_n)$.
		
		\item $\lnext (\phi_1 \wedge \ldots \wedge \phi_n) \leftrightarrow (\lnext \phi_1 \wedge \ldots \wedge \lnext \phi_n)$.
		
		\item  $\lnext (\phi_1 \vee \ldots \vee \phi_n) \leftrightarrow (\lnext \phi_1 \vee  \ldots \vee \lnext \phi_n)$.
	\end{enumerate}

\end{lemma}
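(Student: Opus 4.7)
My plan is to prove the items in an order that lets later ones reuse earlier ones: 5 and 6 first, then 1, then 7, then the central item 4, and finally items 2 and 3, which reduce to 4 and 5. The uniform tactic is to take a suitable propositional tautology, apply $\prevnecrule$ or $\nextnecrule$ to it, and then chain applications of $\prevkax$ or $\nextkax$ to push the temporal operator inward. For item 5, the right-to-left direction uses $\prevnecrule$ on the tautology $\phi_1 \to (\phi_2 \to \cdots \to (\phi_n \to \phi_1 \wedge \cdots \wedge \phi_n))$ and strips off each implication via $\prevkax$ and $\mprule$; the left-to-right direction iterates $\phi_1 \wedge \cdots \wedge \phi_n \to \phi_i$ under $\wprevious$. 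Item 6 is the identical argument with $\lnext$, $\nextnecrule$, $\nextkax$ in place. Item 1 follows from $\bot \to \neg \phi$ by $\prevnecrule$ and $\prevkax$, giving $\wprevious \bot \to \wprevious \neg \phi$, whose contrapositive is item 1.

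For item 7 the right-to-left direction is by monotonicity; the left-to-right direction contraposes, uses $\funax$ to pull negations out of each $\lnext \phi_i$, combines via item 6 to produce $\lnext(\neg \phi_1 \wedge \cdots \wedge \neg \phi_n)$, and then applies $\funax$ once more to reach a contradiction with $\lnext(\phi_1 \vee \cdots \vee \phi_n)$. The central step is item 4. Starting from the propositional tautology
\[
(\phi_1 \vee \cdots \vee \phi_n) \to (\neg \phi_1 \to \neg \phi_2 \to \cdots \to \neg \phi_{n-1} \to \phi_n),
\]
an application of $\prevnecrule$ followed by iterated $\prevkax$ yields
\[
\wprevious(\phi_1 \vee \cdots \vee \phi_n) \to (\wprevious \neg \phi_1 \to \cdots \to \wprevious \neg \phi_{n-1} \to \wprevious \phi_n),
\]
which rearranges by pure propositional reasoning---recalling that $\sprevious \chi$ abbreviates $\neg \wprevious \neg \chi$---to the left-to-right direction of item 4. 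The converse direction of item 4 uses monotonicity of $\wprevious$ together with $\swprevax$ to absorb each $\sprevious \phi_i$ into $\wprevious \phi_i$ and then into $\wprevious(\phi_1 \vee \cdots \vee \phi_n)$. Item 2 then follows from item 4 by applying $\swprevax$ to each of the $\sprevious \phi_i$ disjuncts, and item 3 follows from item 5 by contraposition, using De Morgan and the definition of $\sprevious$.

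The main obstacle is item 4: because $\wprevious$ is non-functional at the initial moment of time, it cannot be distributed over disjunctions the way $\lnext$ can via $\funax$. The maneuver above circumvents this by recasting the disjunction as an iterated implication chain, which is exactly the form $\prevkax$ handles natively; the residual $\sprevious$ disjuncts that appear are then absorbed by $\swprevax$. All remaining items reduce to routine normal-modal reasoning.
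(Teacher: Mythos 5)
Your proof is correct, and in fact the paper states this lemma without giving any proof at all, so your argument supplies exactly the details the author omits. All of your steps check out: items 5 and 6 are the standard distribution of a normal modality over conjunction via necessitation and the $K$-axioms \prevkax\ and \nextkax; item 1 is the contrapositive of $\wprevious\bot\to\wprevious\neg\phi$; item 7 correctly exploits \funax, which is available only for $\lnext$; and item 3 reduces to item 5 through De Morgan and the definition $\sprevious\phi = \neg\wprevious\neg\phi$. The one genuinely non-routine point is item 4, where $\wprevious$ cannot be treated like $\lnext$ because there is no functionality axiom for the past (at the initial moment $\wprevious$ is vacuously true while $\sprevious$ is false), and your implication-chain maneuver---necessitating $(\phi_1\vee\cdots\vee\phi_n)\to(\neg\phi_1\to\cdots\to(\neg\phi_{n-1}\to\phi_n))$, pushing $\wprevious$ through with \prevkax, and reading the resulting nested implication as the disjunction $\sprevious\phi_1\vee\cdots\vee\sprevious\phi_{n-1}\vee\wprevious\phi_n$---is precisely the right way to get the strong/weak split, with \swprevax\ and monotonicity giving the converse. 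The only cosmetic compression is in item 2, where the right-to-left direction (plain monotonicity of $\wprevious$) should be mentioned explicitly alongside the \swprevax\ step, but that is routine and does not affect correctness.
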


\begin{lemma}\label{lem: relationship beetwin since and until}
	The following formulas are provable in $\LTLp$:
	\begin{enumerate}
		\item $\phi \wedge \psi \luntil \sigma \limplies \psi \luntil (\sigma \wedge (\sprevious \psi) \lsince \phi)$.
		
		\item $\phi \wedge \psi \lsince \sigma \limplies \psi \lsince (\sigma \wedge (\lnext \psi) \luntil \phi)$.
		
	\end{enumerate}
	
\end{lemma}

\begin{lemma}\label{lem: derivable ind-rules}
	The following rules are derivable in $\LTLp$:
\[
\lrule{\vdash \phi \to \psi \quad \vdash\phi \to \lnext \phi}{\vdash \phi \to \lalways \psi}
\qquad
\lrule{\vdash \phi \to \lnext \phi}{\vdash \phi \to \lalways \phi}
\]

\[
\lrule{\vdash \phi \to \psi \quad \vdash\phi \to \wprevious \phi}{\vdash \phi \to \lsofar \psi}
\qquad
\lrule{\vdash \phi \to \wprevious \phi}{\vdash \phi \to \lsofar \phi}
\]
	
\end{lemma}

\begin{lemma}
The following rules are derivable in $\LTLp$:
\[  
\lrule{\chi \limplies \lneg \psi \land \lnext \chi}{\chi \limplies \lneg(\phi \luntil \psi)}\, 
\qquad 
\lrule{\chi \limplies \lneg \psi \land \wprevious \chi}{\chi \limplies \lneg(\phi \lsince \psi)}\, 
\]
\[  
\lrule{\chi \limplies \lneg \psi \land \lnext (\chi \lor (\lnot \phi \land \lnot \psi)) }{\chi \limplies \lneg(\phi \luntil \psi)}\,\uindrule \,
 \qquad
 \lrule{\chi \limplies \lneg \psi \land \wprevious (\chi \lor (\lnot \phi \land \lnot \psi)) }{\chi \limplies \lneg(\phi \lsince \psi)}\,\sindrule \, . 
\]
\end{lemma}

\section{Maximal consistent sets}
\label{sec:Maximal consistent sets}

All the results of this section hold for extensions of $\LTLp$, i.e. $\LPLTL$, $\LPLTLp$, and all extensions introduced in Sections \ref{sec:Connecting principles}, \ref{sec:Internalization}, \ref{sec:JLTL}. Let $\mathsf{L}$ be an extension of $\LTLp$, and let $\vdash_\CS$ denote derivability in $\mathsf{L}_\CS$, where $\CS$ is a constant specification for ${\sf L}$.

For a formula $\chi$, let 
$$A_\chi := \Sub(\chi) \cup \Sub(\top \lsince \wprevious \bot) ,$$
$$\Sub^+(\chi) := A_\chi \cup \{ \neg \psi \ |\  \psi \in A_\chi \}.$$

\begin{definition}
	Let $\CS$ be a constant specification for $\mathsf{L}$.
	\begin{itemize}
		\item A set $\Gamma$ of formulas is called \emph{$\mathsf{L}_\CS$-consistent} (or simply $\CS$-consistent) if $\Gamma \not\vdash_\CS \bot$.
		
		\item A set $\Gamma$ of formulas is called \textit{maximal} if it has no $\mathsf{L}_\CS$-consistent proper extension of formulas.
		
		\item A set $\Gamma \subseteq \Sub^+(\chi)$  is called \textit{$\chi$-maximal} if it has no $\mathsf{L}_\CS$-consistent proper extension of formulas from $\Sub^+(\chi)$.
	\end{itemize}

\end{definition} 

Let $\MCS_\chi$ denote the set of all $\chi$-maximally $\mathsf{L}_\CS$-consistent subsets of $\Sub^+(\chi)$. Note that $\MCS_\chi$ is a finite set.

Let $\MCS$ denote the set of all maximally $\mathsf{L}_\CS$-consistent sets, and for $\Gamma\in \MCS$, let $$\overline{\Gamma} := \Gamma \cap \Sub^+(\chi).$$ 

\begin{lemma}\label{lem:characterization of MCS-X}
\[
\MCS_\chi = \{\overline{\Gamma} \mid \Gamma\in\MCS \}.
\]
\end{lemma}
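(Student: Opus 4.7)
The plan is to prove the two inclusions separately, both by fairly direct set-manipulation arguments. First I would handle the easier inclusion $\{\overline{\Gamma} \mid \Gamma \in \MCS\} \subseteq \MCS_\chi$: fix $\Gamma \in \MCS$ and argue that $\overline{\Gamma} = \Gamma \cap \Sub^+(\chi)$ is $\chi$-maximally $\mathsf{L}_\CS$-consistent. Consistency of $\overline{\Gamma}$ is inherited from $\Gamma$ at once, since any derivation of $\bot$ from $\overline{\Gamma}$ is also one from $\Gamma$. For $\chi$-maximality I would take any $\psi \in \Sub^+(\chi) \setminus \overline{\Gamma}$, note that $\psi \notin \Gamma$, and invoke maximality of $\Gamma$ to get $\neg\psi \in \Gamma$; hence $\overline{\Gamma} \cup \{\psi\} \subseteq \Gamma \cup \{\psi\}$ is inconsistent, so no proper extension of $\overline{\Gamma}$ inside $\Sub^+(\chi)$ is consistent.

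For the converse inclusion $\MCS_\chi \subseteq \{\overline{\Gamma} \mid \Gamma \in \MCS\}$ I would use a Lindenbaum-style extension. Given $\Delta \in \MCS_\chi$, enumerate all formulas of the language as $\psi_1, \psi_2, \ldots$ and build an increasing chain $\Delta = \Delta_0 \subseteq \Delta_1 \subseteq \cdots$ by adjoining $\psi_{n+1}$ to $\Delta_n$ whenever the result is $\mathsf{L}_\CS$-consistent, and otherwise leaving $\Delta_n$ unchanged. Set $\Gamma := \bigcup_{n \ge 0} \Delta_n$. The compactness observation stated in Section~\ref{sec:Axioms} (that $\Gamma \vdash_\CS \phi$ iff some finite conjunction of elements of $\Gamma$ provably implies $\phi$) ensures that $\Gamma$ is consistent, and by construction it is maximal, so $\Gamma \in \MCS$. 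To conclude $\overline{\Gamma} = \Delta$, one inclusion is immediate from $\Delta \subseteq \Gamma$ together with $\Delta \subseteq \Sub^+(\chi)$; for the other, if some $\psi \in \Sub^+(\chi) \cap \Gamma$ were outside $\Delta$, then $\chi$-maximality of $\Delta$ would make $\Delta \cup \{\psi\}$ inconsistent, contradicting consistency of $\Gamma \supseteq \Delta \cup \{\psi\}$.

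The one subtle point, and in my judgment the main place to be careful, is that the necessitation rules $\nextnecrule$, $\prevnecrule$, $\alwaysnecrule$, $\sofarnecrule$, and $\iteratedconstnecrule$ may only be applied to derivations without assumptions; without this restriction the usual Lindenbaum construction would be in jeopardy, because necessitation applied to an assumption can produce wild consequences. Here, however, the restriction is exactly what guarantees that the compactness reduction to finite conjunctions of assumptions remains valid, so the Lindenbaum step proceeds in the standard way and genuinely yields a maximally consistent extension. No other step looks challenging.
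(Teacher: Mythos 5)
Your overall decomposition is the same as the paper's: one inclusion by a Lindenbaum extension of $\Delta \in \MCS_\chi$ to some $\Gamma \in \MCS$ followed by the identification $\Delta = \overline{\Gamma}$, the other by showing directly that $\overline{\Gamma}$ is $\CS$-consistent and $\chi$-maximal for every $\Gamma \in \MCS$. Your second paragraph is sound (there the monotonicity you use goes the right way: a subset of a consistent set is consistent), and your remark about the necessitation rules being applicable only to premise-free derivations, which keeps the deduction-theorem/compactness reduction intact, is correct.

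The gap is in the $\chi$-maximality step of your first paragraph. From $\psi \in \Sub^+(\chi)\setminus\overline{\Gamma}$ you obtain $\neg\psi \in \Gamma$ and then declare $\overline{\Gamma}\cup\{\psi\}$ inconsistent because it is contained in the inconsistent set $\Gamma\cup\{\psi\}$. Inconsistency does not pass from a set to its subsets, so this inference is invalid as stated; what you need is a formula contradicting $\psi$ that lies in $\overline{\Gamma}$ itself, and $\neg\psi \in \Gamma$ does not deliver this, because $\Sub^+(\chi) = A_\chi \cup \{\neg\sigma \mid \sigma \in A_\chi\}$ is not closed under negation: if $\psi$ is itself of the form $\neg\sigma$ with $\sigma \in A_\chi$, then $\neg\psi = \neg\neg\sigma$ need not belong to $\Sub^+(\chi)$ and hence need not be in $\overline{\Gamma}$. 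The repair is exactly the case distinction the paper performs: if $\psi \in A_\chi$, then $\neg\psi \in \Sub^+(\chi)$, so $\neg\psi \in \overline{\Gamma}$ and $\overline{\Gamma}\cup\{\psi\}$ is inconsistent; if $\psi = \neg\sigma$ with $\sigma \in A_\chi$, then $\neg\neg\sigma \in \Gamma$ gives $\sigma \in \Gamma$, hence $\sigma \in \overline{\Gamma}$, and again $\overline{\Gamma}\cup\{\psi\}$ is inconsistent. With this case analysis inserted your argument goes through and coincides with the paper's proof.
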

\begin{proof}
     $(\subseteq)$ Let $\Delta \in \MCS_\chi$. Then $\Delta$ can be extended to a maximal $\CS$-consistent set $\Gamma \in \MCS$. It is easy to show that $\Delta = \Gamma \cap \Sub^+(\chi)$,  and thus $\Delta = \overline{\Gamma}$.
     
     $(\supseteq)$ It is sufficient to show that for each $\Gamma \in \MCS$ the set $\overline{\Gamma}$ is $\CS$-consistent and $\chi$-maximal. The $\CS$-consistency of $\overline{\Gamma}$ follows from the $\CS$-consistency of $\Gamma$. In order to show the $\chi$-maximality of $\overline{\Gamma}$, suppose towards a contradiction that $\overline{\Gamma}$ has a $\CS$-consistent proper extension $\Sigma \subseteq \Sub^+(\chi)$. Let $\phi \in \Sigma \setminus \overline{\Gamma}$. Thus $\phi \not \in \Gamma$, and hence $\neg \phi \in \Gamma$. Since $\phi \in \Sub^+(\chi)$ we can distinguish the following cases:
          \begin{itemize}
     	\item $\phi \in A_\chi$. In this case $\neg \phi \in \Sub^+(\chi)$, and hence $\neg \phi \in \overline{\Gamma} \subseteq \Sigma$, which contradicts $\phi \in \Sigma$.
     	
     	\item $\phi = \neg \psi$ and $\psi \in A_\chi$. In this case $\neg \phi = \neg \neg \psi \in \Gamma$, and hence $\psi \in \Gamma$. Thus $\psi \in \overline{\Gamma} \subseteq \Sigma$, which contradicts $\neg \psi \in \Sigma$. \qed
     \end{itemize}
\end{proof}

\begin{lemma}\label{lem: Facts about MCS-chi}
	Let  $\overline{\Gamma} \in \MCS_\chi$.
	
	\begin{enumerate}
		\item If 
		$\overline{\Gamma} \vdash_\CS \phi$, then $\vdash_\CS \bigwedge \overline{\Gamma} \to \phi$.
		
		\item If $\phi \in A_\chi$ and  $\phi \not\in \overline{\Gamma}$, then $\neg \phi \in \overline{\Gamma}$.
		
		\item If $\phi \in \Sub^+(\chi)$ and  $\overline{\Gamma} \vdash_\CS \phi$, then $\phi \in \overline{\Gamma}$.
		
		\item If $\psi \in \Sub^+(\chi)$, $\phi \in \overline{\Gamma}$ and $\vdash_\CS \phi \rightarrow \psi$, then $\psi \in \overline{\Gamma}$.
		
	\end{enumerate}
\end{lemma}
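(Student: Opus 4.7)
The plan is to prove the four items in order, reducing each one to standard reasoning about maximally $\CS$-consistent sets plus, whenever convenient, the bridge provided by Lemma~\ref{lem:characterization of MCS-X}. The fact that $\Sub^+(\chi)$ is finite (finitely many subformulas of $\chi$ plus the fixed finitely many subformulas of $\top \lsince \wprevious \bot$, each closed under a single outermost negation) will be used throughout without further comment, ensuring in particular that $\overline{\Gamma}$ is finite and that conjunctions like $\bigwedge \overline{\Gamma}$ are well-formed.

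For (1), I will invoke the syntactic form of the deduction theorem stated just before this section: from $\overline{\Gamma} \vdash_\CS \phi$ there exist $\psi_1, \ldots, \psi_n \in \overline{\Gamma}$ with $\vdash_\CS (\psi_1 \wedge \cdots \wedge \psi_n) \to \phi$. Since $\bigwedge \overline{\Gamma}$ propositionally implies $\psi_1 \wedge \cdots \wedge \psi_n$, a single application of \propax{} and \mprule{} yields $\vdash_\CS \bigwedge \overline{\Gamma} \to \phi$. For (2), I extend $\overline{\Gamma}$ via Lemma~\ref{lem:characterization of MCS-X} to some $\Gamma \in \MCS$ with $\overline{\Gamma} = \Gamma \cap \Sub^+(\chi)$. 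Then $\phi \notin \overline{\Gamma}$ together with $\phi \in \Sub^+(\chi)$ forces $\phi \notin \Gamma$, and maximality of $\Gamma$ gives $\neg \phi \in \Gamma$; the hypothesis $\phi \in \Sub(\chi) \subseteq A_\chi$ ensures by the definition of $\Sub^+$ that $\neg \phi \in \Sub^+(\chi)$, so $\neg \phi \in \Gamma \cap \Sub^+(\chi) = \overline{\Gamma}$.

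For (3), I will argue directly from $\chi$-maximality rather than lifting to $\MCS$: if $\phi \in \Sub^+(\chi)$ but $\phi \notin \overline{\Gamma}$, then $\overline{\Gamma} \cup \{\phi\}$ is a proper extension of $\overline{\Gamma}$ within $\Sub^+(\chi)$, so by $\chi$-maximality it is $\CS$-inconsistent, giving $\overline{\Gamma} \vdash_\CS \neg \phi$. Together with the assumed $\overline{\Gamma} \vdash_\CS \phi$ this contradicts the $\CS$-consistency of $\overline{\Gamma}$. Finally, (4) is immediate: $\phi \in \overline{\Gamma}$ gives $\overline{\Gamma} \vdash_\CS \phi$, modus ponens with the theorem $\phi \to \psi$ gives $\overline{\Gamma} \vdash_\CS \psi$, and (3) then places $\psi$ in $\overline{\Gamma}$. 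The only subtle point I anticipate is the asymmetry between $\Sub(\chi)$ and $\Sub^+(\chi)$ in (2): weakening the hypothesis to $\phi \in \Sub^+(\chi)$ would allow $\phi$ to already be of the form $\neg \psi$ with $\neg \phi = \neg\neg\psi \notin \Sub^+(\chi)$, and the argument would break; otherwise the proof is a routine combination of the deduction theorem, propositional reasoning, and Lemma~\ref{lem:characterization of MCS-X}.
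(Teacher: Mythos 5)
Your proof is correct, and it is precisely the standard argument that the paper leaves implicit (its own proof of this lemma just says the items are standard): item 1 via the syntactic compactness fact stated before the lemma, item 2 via Lemma~\ref{lem:characterization of MCS-X} and maximality in $\MCS$, and items 3--4 via $\chi$-maximality, the Deduction Theorem, and consistency. Your remark on why item 2 requires $\phi \in \Sub(\chi)$ rather than $\phi \in \Sub^+(\chi)$ is also the right observation about the only delicate point.
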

\begin{proof}
	The proof of all items are standard. \qed 
\end{proof}

The following definition is inspired by the work of  Gabbay et al. \cite{Gabbay-Pnueli-Shelah-Stavi-1980}.

\begin{definition}
	The relation $R_\lnext$ on $\MCS_\chi$ is defined as follows:
	\[
	X R_\lnext Y 
	\text{ if{f} }
	\text{there exists $\Gamma, \Delta \in \MCS$ such that $X = \overline{\Gamma}$ and $Y = \overline{\Delta}$ and  }  \{ \phi \ |\  \lnext \phi \in \Gamma\} \subseteq \Delta.
	\]
	\textbf{Notation.}	The notation $\RO{X}{Y}{\overline{\Gamma}}{\overline{\Delta}}$ means that $X, Y \in \MCS_\chi$, $\Gamma, \Delta \in \MCS$, $X = \overline{\Gamma}$, $Y = \overline{\Delta}$ and $\{ \phi \ |\  \lnext \phi \in \Gamma\} \subseteq \Delta$. 
\end{definition}

Note that for $\Gamma, \Delta \in \MCS$ if $\{ \phi \ |\  \lnext \phi \in \Gamma\} \subseteq \Delta$, then $\overline{\Gamma} R_\lnext \overline{\Delta}$. Hence, if $\RO{X}{Y}{\overline{\Gamma}}{\overline{\Delta}}$, then $\overline{\Gamma} R_\lnext \overline{\Delta}$. In addition, it is easy to show that if $\RO{X}{Y}{\overline{\Gamma}}{\overline{\Delta}}$, then $\{ \phi \ |\  \lnext \phi \in \Gamma\} = \Delta$ (see item 1 of Lemma \ref{lem:R-next}).

From the above definition we immediately get the following lemma.

\begin{lemma}\label{lem:R-next is serial}
	The relation $R_\lnext$ is serial. 
	That is for each $X \in \MCS_\chi$, there exists $Y \in \MCS_\chi$ such that $X R_\lnext Y$.
\end{lemma}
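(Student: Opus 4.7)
Fix $\overline{\Gamma} \in \MCS_\chi$ and, by Lemma~\ref{lem:characterization of MCS-X}, pick a full maximally $\CS$-consistent set $\Gamma \in \MCS$ with $\overline{\Gamma} = \Gamma \cap \Sub^+(\chi)$. The plan is to build $\overline{\Delta}$ as the restriction to $\Sub^+(\chi)$ of the natural ``tomorrow'' set
\[
\Delta := \{\phi \mid \lnext \phi \in \Gamma\},
\]
then verify $\overline{\Gamma}\, R_\lnext\, \overline{\Delta}$ using the definition directly.

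The first step is to show that $\Delta$ is $\CS$-consistent. Suppose otherwise; then by the Deduction Theorem there exist $\phi_1,\ldots,\phi_n \in \Delta$ with $\vdash_\CS (\phi_1 \land \cdots \land \phi_n) \to \bot$. Applying $\nextnecrule$ and distributing $\lnext$ with $\nextkax$, one obtains $\vdash_\CS (\lnext \phi_1 \land \cdots \land \lnext \phi_n) \to \lnext \bot$, after using Lemma~\ref{lem: properties of previous operators}(6) to move $\lnext$ inside the conjunction. The key small fact here is $\vdash_\CS \lneg \lnext \bot$, which follows by instantiating $\funax$ with $\phi = \top$ (giving $\lnext \bot \liff \lneg \lnext \top$) and combining with $\lnext \top$, derivable from $\vdash_\CS \top$ by $\nextnecrule$. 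Hence $\vdash_\CS \lneg(\lnext \phi_1 \land \cdots \land \lnext \phi_n)$, contradicting $\lnext \phi_i \in \Gamma$ for each $i$.

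The second step is to check that $\Delta$ is in fact a maximally $\CS$-consistent set (so Lemma~\ref{lem:characterization of MCS-X} applies). For any formula $\phi$, either $\lnext \phi \in \Gamma$ or $\lneg \lnext \phi \in \Gamma$; in the latter case $\funax$ yields $\lnext \lneg \phi \in \Gamma$, so $\lneg \phi \in \Delta$. Thus $\Delta$ decides every formula, and by the consistency just established it is maximally $\CS$-consistent, i.e.\ $\Delta \in \MCS$.

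Finally, set $\overline{\Delta} := \Delta \cap \Sub^+(\chi)$, which lies in $\MCS_\chi$ by Lemma~\ref{lem:characterization of MCS-X}. By the very definition of $\Delta$, $\{\lnext \phi \mid \phi \in \Delta\} \subseteq \Gamma$, hence $\overline{\Gamma}\, R_\lnext\, \overline{\Delta}$. The only nontrivial point is the small derivation showing $\vdash_\CS \lneg \lnext \bot$ together with the distribution of $\lnext$ over finite conjunctions; everything else is routine use of Lindenbaum-style extension and $\funax$.
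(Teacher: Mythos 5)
Your proof is correct and follows essentially the same route as the paper: form the ``tomorrow'' set $\{\phi \mid \lnext\phi \in \Gamma\}$ and show it is $\CS$-consistent via $\nextnecrule$, $\nextkax$ and the $\funax$-derived fact $\vdash_\CS \lneg\lnext\bot$. The only (cosmetic) difference is that you observe, again via $\funax$, that this set is already maximally consistent, so $\overline{\Gamma}\,R_\lnext\,\overline{\Delta}$ holds immediately by the definition of $\Delta$, whereas the paper extends the set by Lindenbaum to some $\Delta\in\MCS$ and leaves the (equally easy, $\funax$-based) verification of $\overline{\Gamma}\,R_\lnext\,\overline{\Delta}$ implicit.
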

\begin{proof}
	For $X \in \MCS_\chi$, by Lemma \ref{lem:characterization of MCS-X}, there exists $\Gamma \in \MCS$ such that $X =\overline{\Gamma}$. Let
	\[
	\Delta \colonequals \{ \phi \mid \lnext \phi \in \Gamma \}.
	\]
	We prove that $\Delta \in \MCS$, and so $\overline{\Gamma} R_\lnext \overline{\Delta}$ as desired. We first show that $\Delta$ is $\CS$-consistent. If $\Delta$ is not $\CS$-consistent, then
	\[
	\vdash_\CS \phi_1 \wedge \ldots \wedge \phi_n \rightarrow \bot
	\]
	for some $\lnext\phi_1, \ldots, \lnext\phi_n \in \Gamma$. Thus 
	\[
	\vdash_\CS \lnext\phi_1 \wedge \ldots \wedge \lnext\phi_n \rightarrow \lnext\bot.
	\]
	Hence $\lnext \bot \in \Gamma$. Since $\vdash_\CS \lnext \bot \to \bot$, we have 
	$\bot \in \Gamma$ which is a contradiction. 
	
	In order to show the maximality of $\Delta$, suppose towards a contradiction that $\Delta$ has a $\CS$-consistent proper extension $\Sigma$. Let $\phi \in \Sigma \setminus \Delta$. Thus $\phi \not \in \Delta$, and hence $\neg \lnext \phi \in \Gamma$. From the latter it follows that $\lnext \neg \phi \in \Gamma$, and hence $\neg \phi \in \Delta \subseteq \Sigma$ which is a contradiction.  
	\qed
\end{proof}

\begin{lemma}\label{lem:R-next}
	Let $\RO{X}{Y}{\overline{\Gamma}}{\overline{\Delta}}$.
	\begin{enumerate}
		\item  $\lnext \phi \in \Gamma$ if{f} $\phi \in \Delta$.
		
		\item $\phi \in \Gamma$ if{f} $\sprevious \phi \in \Delta$.
		
		\item $\phi \in \Gamma$ if{f} $\wprevious \phi \in \Delta$.
	\end{enumerate}
\end{lemma}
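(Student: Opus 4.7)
The plan is to prove the three items in order, with item 1 serving as a bridge across $R_\lnext$ for items 2 and 3, while the past-future interaction axioms $\fpax$ and $\swprevax$ do the substantive work of converting $\lnext$-statements into $\sprevious/\wprevious$-statements.

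For item 1, the $(\Leftarrow)$ direction is immediate from the definition of $R_\lnext$. For $(\Rightarrow)$, I would argue by contraposition: if $\phi \notin \Delta$, then by maximality $\neg \phi \in \Delta$, so the defining condition of $R_\lnext$ yields $\lnext \neg \phi \in \Gamma$, and $\funax$ together with consistency of $\Gamma$ forces $\lnext \phi \notin \Gamma$.

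For item 2, I would use item 1 as a bridge. For $(\Rightarrow)$, applying $\fpax$ to $\phi \in \Gamma$ gives $\lnext \sprevious \phi \in \Gamma$, and item 1 transfers this to $\sprevious \phi \in \Delta$. For $(\Leftarrow)$, again by contraposition: if $\neg \phi \in \Gamma$, then applying the same two steps (instance of $\fpax$ at $\neg \phi$, then item 1) produces $\sprevious \neg \phi \in \Delta$; by $\swprevax$ we then get $\wprevious \neg \phi \in \Delta$, which contradicts $\sprevious \phi \in \Delta$ since $\sprevious \phi$ is defined as $\neg \wprevious \neg \phi$. Item 3 should then fall out of item 2: for $(\Rightarrow)$, combine item 2 with $\swprevax$; for $(\Leftarrow)$, assume $\phi \notin \Gamma$, apply item 2 to $\neg \phi$ to obtain $\sprevious \neg \phi \equiv \neg \wprevious \phi$ in $\Delta$, and derive a contradiction with $\wprevious \phi \in \Delta$.

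The main obstacle, such as it is, lies in item 2 $(\Leftarrow)$: one has to recognize that the way to travel from $\Delta$ back to $\Gamma$ is to apply $\fpax$ to the \emph{negation} $\neg \phi$ and then bounce across via item 1, and that $\swprevax$ is precisely what converts the resulting $\sprevious \neg \phi$ into the $\wprevious \neg \phi$ needed to contradict the unfolded definition of $\sprevious \phi$. Everything else amounts to propositional bookkeeping on maximal consistent sets together with the duality $\sprevious \phi \equiv \neg \wprevious \neg \phi$.
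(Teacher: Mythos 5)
Your proposal is correct and follows essentially the same route as the paper: item 1 by the definition of $R_\lnext$ plus contraposition with \funax, item 2 via \fpax\ and item 1, and item 3 from item 2 together with \swprevax. The only (cosmetic) difference is that in the converse directions of items 2 and 3 you transport the negated formula across to $\Delta$ via item 1 and derive the contradiction there using \swprevax\ and the duality $\sprevious\psi = \lneg\wprevious\lneg\psi$, whereas the paper derives the contradiction inside $\Gamma$ using the provable implications $\lnext\sprevious\lneg\phi \limplies \lneg\lnext\sprevious\phi$ and $\lnext\sprevious\lneg\phi \limplies \lneg\lnext\wprevious\phi$; both rest on the same axioms.
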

\begin{proof}
	\begin{enumerate}
		\item The proof of the only if direction follows from the definition of $R_\lnext$. For the if direction, suppose that $\phi \in \Delta$, and suppose towards a contradiction that $\lnext \phi \not \in \Gamma$. Thus $\neg \lnext \phi \in \Gamma$, and hence $\lnext \neg \phi \in \Gamma$. Since $\{ \phi \ |\  \lnext \phi \in \Gamma\} \subseteq \Delta$, we get  $\neg \phi \in \Delta$, which would contradict the assumption.
		
		\item If $\phi \in \Gamma$, then by the axiom $\fpax$ we get $\lnext \sprevious\phi \in \Gamma$, and hence by $\{ \phi \ |\  \lnext \phi \in \Gamma\} \subseteq \Delta$ we get $\sprevious \phi \in \Delta$. For the converse, suppose that $\sprevious \phi \in \Delta$. Then, by item 1, $\lnext \sprevious \phi \in \Gamma$. Assume to obtain a contradiction that  $\phi \not \in \Gamma$. Then $\neg \phi \in \Gamma$, and by the axiom $\fpax$ we get  $\lnext \sprevious \neg \phi \in \Gamma$. Since $\vdash_\CS \lnext \sprevious \neg \phi \rightarrow \neg \lnext \sprevious  \phi$, we arrive at a contradiction  $\neg \lnext \sprevious  \phi \in \Gamma$.
		
		\item The only if direction is obtained from item 2 and axiom $\swprevax$. For the converse suppose $\wprevious \phi \in \Delta$. By item 1, it follows that $\lnext \wprevious \phi \in \Gamma$. Assume to obtain a contradiction that $\phi \not \in \Gamma$. Then $\neg \phi \in \Gamma$, and hence $\lnext \sprevious \neg \phi \in \Gamma$. Since $\vdash_\CS \lnext \sprevious \neg \phi \rightarrow \neg \lnext \wprevious \phi$, we arrive at a contradiction $\neg \lnext \wprevious \phi \in \Gamma$.     \qed
	\end{enumerate}
\end{proof}

\begin{definition}
	Given $X \in \MCS_\chi$, $X$ is called \textit{initial} if $\wprevious \bot \in X$.
\end{definition}

\begin{lemma}\label{lem:initial state}
	Given $X \in \MCS_\chi$, $X$ is not initial if{f} there exists $Y \in \MCS_\chi$ such that $Y R_\lnext X$.
\end{lemma}
\begin{proof}
	Suppose that $X \in \MCS_\chi$ is not initial. There is $\Gamma \in \MCS$ such that $X = \overline{\Gamma}$. Let
	\[
	\Lambda \colonequals \{ \lnext\phi \mid  \phi \in \Gamma \}.
	\]
	We first prove that $\Lambda$ is $\CS$-consistent.	If $\Lambda$ is not $\CS$-consistent, then
	\[
	\vdash_\CS \lnext\phi_1 \wedge \ldots \wedge \lnext\phi_n \rightarrow \bot
	\]
	for some $\phi_1, \ldots, \phi_n \in \Gamma$. Using item 1 of Lemma \ref{lem: properties of previous operators}, we have
	\[
	\vdash_\CS \wprevious\lnext\phi_1 \wedge \ldots \wedge \wprevious\lnext\phi_n \rightarrow \wprevious\bot.
	\]
	By axiom $\pfax$
	\[
	\vdash_\CS  \phi_1 \wedge \ldots \wedge \phi_n \rightarrow \wprevious\bot.
	\]
	Hence $\wprevious \bot \in \Gamma$. Since $\wprevious \bot \in  \Subf^+(\chi)$, we have $\wprevious \bot \in \overline{\Gamma}$, contradicting our assumption that $\overline{\Gamma}$ is not initial. Thus $\Lambda$ is $\CS$-consistent, and it can be extended to a maximally $\CS$-consistent set $\Delta \in \MCS$. It is easy to show that $\overline{\Delta} R_\lnext \overline{\Gamma}$. Finally put $Y := \overline{\Delta}$.
	
	Conversely, suppose that there exists $Y \in \MCS_\chi$ such that $Y R_\lnext X$. \linebreak Let $\RO{Y}{X}{\overline{\Delta}}{\overline{\Gamma}}$. Suppose towards a contradiction that $\wprevious \bot \in \overline{\Gamma}$. By item 3 of Lemma \ref{lem:R-next}, we get $\bot \in \Delta$, which is a contradiction. \qed 
\end{proof}

\begin{lemma}\label{lem:initial state does not contain strong previous formulas}
	If $\overline{\Gamma}$ is initial, then  $\sprevious \phi \not \in \Gamma$ for all formulas $\phi$.
\end{lemma}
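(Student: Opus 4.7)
The plan is a short proof by contradiction that reduces directly to item 1 of Lemma \ref{lem: properties of previous operators}, which gives us the key schema $\vdash_\CS \sprevious \phi \to \neg\wprevious \bot$.

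First I would suppose, towards a contradiction, that there exists some formula $\phi$ with $\sprevious \phi \in \Gamma$. The relevant set here is the full maximally $\CS$-consistent $\Gamma \in \MCS$ whose restriction is $\overline{\Gamma} = \Gamma \cap \Sub^+(\chi)$ (using Lemma \ref{lem:characterization of MCS-X} to justify the existence of such a $\Gamma$). Since $\overline{\Gamma}$ is initial, we have $\wprevious \bot \in \overline{\Gamma}$, and therefore $\wprevious \bot \in \Gamma$.

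Next I would invoke item~1 of Lemma \ref{lem: properties of previous operators}, which gives $\vdash_\CS \sprevious \phi \to \neg \wprevious \bot$. By maximal $\CS$-consistency of $\Gamma$ and modus ponens, the assumption $\sprevious \phi \in \Gamma$ forces $\neg \wprevious \bot \in \Gamma$. Combined with $\wprevious \bot \in \Gamma$ from the previous step, this contradicts the $\CS$-consistency of $\Gamma$, completing the argument.

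I do not expect any real obstacle: the whole point of including $\Sub(\top \lsince \wprevious \bot)$ in the definition of $A_\chi$ (and hence in $\Sub^+(\chi)$) is precisely to guarantee that $\wprevious \bot \in \Sub^+(\chi)$, so that the notion of initial for $\overline{\Gamma}$ transfers cleanly to membership in the full MCS $\Gamma$. The only thing to be careful about is keeping the bar/no-bar distinction straight; everything else is a one-line propositional argument using an already-established theorem of $\LTLp$.
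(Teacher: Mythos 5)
Your proof is correct and is essentially identical to the paper's: both assume $\sprevious\phi\in\Gamma$, use $\wprevious\bot\in\overline{\Gamma}\subseteq\Gamma$, apply item~1 of Lemma~\ref{lem: properties of previous operators} to get $\neg\wprevious\bot\in\Gamma$, and contradict the $\CS$-consistency of $\Gamma$. No gaps.
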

\begin{proof}
	Suppose $\overline{\Gamma}$ is initial. Then $\wprevious \bot \in \Gamma$. Now the result follows from the fact that  $\vdash_\CS   \wprevious \bot \limplies \neg \sprevious \phi$. \qed
\end{proof}

The following lemma provides another helpful characterization for the relation $R_\lnext$. In fact, this characterization is used as the definition  of $R_\lnext$ in \cite{HvdMV04}.

\begin{lemma}\label{lem:equivalence definition of R-next}
	Let $X,Y \in \MCS_\chi$. Then
	\[
	X R_\lnext Y
	\quad\text{ if{f} }\quad 
	\not\vdash_\CS \bigwedge X \to \lnot \lnext \bigwedge Y.
	\]
\end{lemma}
\begin{proof}
	$(\Rightarrow)$ Suppose that $X R_\lnext Y$, and thus $\RO{X}{Y}{\overline{\Gamma}}{\overline{\Delta}}$ for some $\Gamma, \Delta \in \MCS$. By item 1 of Lemma \ref{lem:R-next}, we get $\lnext \bigwedge \overline{\Delta} \in \Gamma$, and hence
	\( 
	\bigwedge \overline{\Gamma} \wedge \lnext \bigwedge \overline{\Delta} \in \Gamma.
	\)		
	Therefore, 
	\[
	\not \vdash_\CS \neg (\bigwedge \overline{\Gamma} \wedge \lnext \bigwedge \overline{\Delta}),
	\]
	and hence
	\[
	\nvdash_\CS \bigwedge \overline{\Gamma} \to \lnot \lnext \bigwedge \overline{\Delta}.
	\]

	$(\Leftarrow)$ Suppose $\nvdash_\CS \bigwedge X \to \lnot \lnext \bigwedge Y.$ Thus there is $\Gamma \in \MCS$ such that $\bigwedge X \wedge \lnext \bigwedge Y \in \Gamma$. Thus $X \subseteq \overline{\Gamma}$, which immediately implies $X = \overline{\Gamma}$, by the $\chi$-maximality of $X$. Now let
	\( 
	\Delta \colonequals \{ \phi \mid \lnext \phi \in \Gamma \}.
	\)
	From the proof of Lemma \ref{lem:R-next is serial}, it follows that $\Delta \in \MCS$ and $\overline{\Gamma} R_\lnext \overline{\Delta}$. On the other hand, if $\phi \in Y$, then $\lnext \phi \in \Gamma$, and hence $\phi \in \overline{\Delta}$. Thus $Y \subseteq \overline{\Delta}$, which immediately implies $Y = \overline{\Delta}$, by the $\chi$-maximality of $Y$. Therefore, $X R_\lnext Y$. 
	\qed
\end{proof}

\begin{lemma}\label{lem:equivalence definition of R-next by sprevious}
	Let $X, Y \in \MCS_\chi$.
	\[
	X R_\lnext Y 
	\quad\text{ if{f} }\quad 
	\nvdash_\CS \bigwedge Y \to \lnot \sprevious \bigwedge X.
	\]
\end{lemma}
\begin{proof}
	The proof follows from Lemma \ref{lem:equivalence definition of R-next} and the following fact
	\begin{align*}
		\vdash_\CS \phi \to \neg \lnext \psi \quad\mbox{iff}\quad \vdash_\CS \psi \to \neg \sprevious \phi. 
	\end{align*}
	\qed
\end{proof}

\begin{lemma}\label{lem:corollary of R-next}
	Let $X,Y \in \MCS_\chi$, $X R_\lnext Y$, and $\phi \in A_\chi$.
	
	\begin{enumerate}
		\item  If\/ $X \vdash_\CS \lnext \phi$, then $\phi \in Y$.
		
		\item  If\/ $X \vdash_\CS \neg \lnext \phi$, then $\neg \phi \in Y$.
	\end{enumerate}
\end{lemma}
\begin{proof}
	\begin{enumerate}
		\item 
		Suppose toward a contradiction that $\phi \not \in Y$. Thus, by Lemma \ref{lem: Facts about MCS-chi}, $\neg \phi \in Y$. Since $X \vdash_\CS \lnext \phi$, we have
		\mbox{$
			\vdash_\CS \bigwedge X \to \lnext \phi.
			$}
		Hence 
		$
		\vdash_\CS \bigwedge X \to \lnext \neg\neg \phi.
		$
		Therefore 
		$\vdash_\CS \bigwedge X \to \lnext \neg \bigwedge Y.
		$
		Thus 
		\[
		\vdash_\CS \bigwedge X \to \neg \lnext  \bigwedge Y,
		\]
		which would contradict $X R_\lnext Y$.
		
		\item Similar to the proof of part 1.    \qed
	\end{enumerate}
\end{proof}

\begin{lemma}\label{l:4.4}
	Let $X \in \MCS_\chi$ and let 
	$R_\lnext(X) \colonequals \{ Y \in  \MCS_\chi \ |\ X R_\lnext  Y\}$.
	We have 
	\[
	\vdash_\CS \bigwedge X \to \lnext \bigvee \big\{ \bigwedge Y \ |\ Y \in R_\lnext(X) \big\}.
	\]
\end{lemma}
\begin{proof} 
	By Lemma \ref{lem:equivalence definition of R-next}, for all $X, Y \in \MCS_\chi$ we have
	\begin{equation*}
		(\text{not } X R_\lnext Y) 
		\quad\text{implies}\quad
		\vdash_\CS \bigwedge X \to \lnot \lnext \bigwedge Y.
	\end{equation*}
	Thus 
	\[
	\vdash_\CS \bigwedge X \to \bigwedge\{ \lnot \lnext \bigwedge Y~|~Y \in \MCS_\chi \text{~and not } X R_\lnext Y\},
	\]
	and hence
	\begin{equation}\label{eq:next:1}
		\vdash_\CS \bigwedge X \to \neg \bigvee \{ \lnext \bigwedge Y~|~ Y \in \MCS_\chi \text{~and not } X R_\lnext Y\}.
	\end{equation}
	We also have (cf. \cite[Lemma 4.1]{HvdMV04})
	\begin{equation}\label{eq:next:2}
		\vdash_\CS \bigvee \big\{ \bigwedge Y \ |\ Y \in  \MCS_\chi \big\}.
	\end{equation}
	From \eqref{eq:next:2} by $\nextnecrule$ we get
	\[
	\vdash_\CS \lnext\bigvee \big\{ \bigwedge Y \ |\ Y \in  \MCS_\chi \big\}.
	\]
	By item 2 of Lemma \ref{lem: properties of previous operators}, we get
	\begin{equation*}
		\vdash_\CS \bigvee \big\{ \lnext \bigwedge Y \ |\ Y \in  \MCS_\chi \big\}.
	\end{equation*}
	Hence
	\[
	\vdash_\CS \bigwedge X \to \bigvee \big\{ \lnext \bigwedge Y \ |\ Y \in  \MCS_\chi \big\},
	\]
	from which it follows that
	\[
	\vdash_\CS \bigwedge X \to \bigvee \big\{ \lnext \bigwedge Y \ |\ Y \in  \MCS_\chi \text{~and~} X R_\lnext Y \big\} \vee  \bigvee \big\{ \lnext \bigwedge Y \ |\ Y \in  \MCS_\chi \text{~and not~} X R_\lnext Y \big\}.
	\]
	By \eqref{eq:next:1} we infer
	\[
	\vdash_\CS \bigwedge X \to \bigvee \big\{ \lnext \bigwedge Y \ |\ Y \in  \MCS_\chi  \text{ and } X R_\lnext Y \big\}
	\]
	and thus, by item 2 of Lemma \ref{lem: properties of previous operators}, we get
	\[
	\vdash_\CS \bigwedge X \to \lnext \bigvee \big\{\bigwedge Y \ |\ Y \in  R_\lnext(X)\}. 
	\]
	\qed
\end{proof}

\begin{lemma}\label{lem:wprevious-relation}
	Let $X \in \MCS_\chi$ and let 
	$R_\lnext^{-1}(X) \colonequals \{ Y \in  \MCS_\chi \ |\ Y  R_\lnext X \}$.
	We have 
	\[
	\vdash_\CS \bigwedge X \to \wprevious \bigvee \big\{ \bigwedge Y \ |\ Y \in R_\lnext^{-1}(X) \big\}.
	\]
\end{lemma}
\begin{proof} 
	First note that if $X$ is initial, i.e. $\wprevious \bot \in X$, then by Lemma \ref{lem:initial state} the set $R_\lnext^{-1}(X)$ is empty, and moreover we have 
	\[
	\vdash_\CS \bigwedge X \to \wprevious \bot,
	\]
	which implies that 
	\[
	\vdash_\CS \bigwedge X \to \wprevious \bigvee \big\{ \bigwedge Y \ |\ Y \in \emptyset \big\}.
	\]
	Now suppose	$X$ is not initial, and thus  by Lemma \ref{lem:initial state} there is $Y_0 \in \MCS_\chi$ such that $Y_0 R_\lnext X$. By Lemma \ref{lem:equivalence definition of R-next by sprevious}, for all $X, Y \in \MCS_\chi$ we have 
	\begin{equation*}
		(\text{not } Y R_\lnext X) 
		\quad\text{implies}\quad
		\vdash_\CS \bigwedge X \to \lnot \sprevious \bigwedge Y.
	\end{equation*}
	Thus 
	\[
	\vdash_\CS \bigwedge X \to \bigwedge\{ \lnot \sprevious \bigwedge Y~|~Y \in \MCS_\chi \text{~and not }  Y R_\lnext X\},
	\]
	and hence
	\begin{equation*}
		\vdash_\CS \bigwedge X \to \neg \bigvee \{ \sprevious \bigwedge Y~|~ Y \in \MCS_\chi \text{~and not } Y R_\lnext X\}.
	\end{equation*}
	Therefore
	\begin{equation}\label{eq:sprevious:1}
		\vdash_\CS \bigwedge X \to \neg \bigvee \{ \sprevious \bigwedge Y~|~ Y \in \MCS_\chi \text{~and not } Y R_\lnext X \text{ and } Y \neq Y_0 \}.
	\end{equation}
	From \eqref{eq:next:2}  by $\prevnecrule$  we get
	\[
	\vdash_\CS \wprevious \bigvee \big\{ \bigwedge Y \ |\ Y \in  \MCS_\chi \big\},
	\]
	and thus by item 3 of Lemma \ref{lem: properties of previous operators} we have
	\begin{equation}\label{eq:sprevious:2}
		\vdash_\CS \bigvee \big\{ \sprevious \bigwedge Y \ |\ Y \in  \MCS_\chi  \text{ and }  Y \neq Y_0 \big\} \vee \wprevious \bigwedge Y_0.
	\end{equation}
	By \eqref{eq:sprevious:1} and \eqref{eq:sprevious:2} we infer
	\[
	\vdash_\CS \bigwedge X \to \bigvee \big\{ \sprevious \bigwedge Y \ |\ Y \in  \MCS_\chi  \text{ with } Y  R_\lnext X  \text{ and } Y \neq Y_0 \big\} \vee \wprevious \bigwedge Y_0
	\]
	and thus by item 2 of Lemma \ref{lem: properties of previous operators}
	\[
	\vdash_\CS \bigwedge X \to \sprevious \bigvee \big\{\bigwedge Y \ |\ Y \in  \MCS_\chi  \text{ with } Y  R_\lnext X   \text{ and } Y \neq Y_0 \big\} \vee \wprevious \bigwedge Y_0.
	\]
	Thus, by the axiom $\swprevax$, we get
	\[
	\vdash_\CS \bigwedge X \to \wprevious \bigvee \big\{\bigwedge Y \ |\ Y \in  \MCS_\chi  \text{ with } Y  R_\lnext X   \text{ and } Y \neq Y_0 \big\} \vee \wprevious \bigwedge Y_0.
	\]
	Therefore, by item 2 of Lemma \ref{lem: properties of previous operators}, we get
	\[
	\vdash_\CS \bigwedge X \to \wprevious \bigvee \big\{\bigwedge Y \ |\ Y \in  \MCS_\chi  \text{ with }  Y R_\lnext X \big\},
	\]
	and thus
	\[
	\vdash_\CS \bigwedge X \to \wprevious \bigvee \big\{ \bigwedge Y \ |\ Y \in R_\lnext^{-1}(X) \big\}.  
	\]
	\qed
\end{proof}

\begin{definition}
	A finite sequence $(X_0, X_1, \ldots, X_n)$ of elements of~$\MCS_\chi$ is called a \emph{$\phi \luntil \psi$-sequence starting with $X$} if
	\begin{enumerate}
		\item $X_0 = X$,
		
		\item $X_j R_\lnext X_{j+1}$, for all $0 \leq j < n$, 
		
		\item $\psi \in X_n$,
		
		\item $\phi \in X_j$, for all $0 \leq j < n$.
	\end{enumerate}
\end{definition}

\begin{lemma}\label{lem:until-sequence}
	For every $X \in \MCS_\chi$, if $\phi \luntil \psi \in X$, then there exists a  $\phi \luntil \psi$-sequence starting with $X$.
\end{lemma}
\begin{proof}
	Let $X = \overline{\Gamma}$, for some $\Gamma \in \MCS$. Suppose $\phi \luntil \psi \in X$ and there exists no  $\phi \luntil \psi$-sequence starting with $X$. Since $\phi \luntil \psi \in  \Subf^+(\chi)$, we have $\phi \luntil \psi \in  A_\chi$, and thus $\phi, \psi \in  A_\chi$. We first show that:
	\begin{equation}\label{eq:properties of Gamma in T-Until}
		\neg \psi \in X \text{ and } \phi \in X.
	\end{equation}
	Suppose  $\psi \in X$. Then the sequence $(X)$ would be a $\phi \luntil \psi$-sequence starting with $X$, contradicting our assumption. Thus $\psi \not \in X$, and hence by  Lemma \ref{lem: Facts about MCS-chi} we get $\neg \psi \in X$.  On the other hand, from $\phi \luntil \psi \in \overline{\Gamma}$ it follows that $\psi \vee (\phi \wedge \lnext (\phi \luntil \psi) ) \in \Gamma$. From this we can immediately deduce by $\neg \psi \in \Gamma$ that $\phi \in \Gamma$, and hence $\phi \in X$.
	
	Let $T_\luntil$ be the smallest set of elements of $\MCS_\chi$ such that
	\begin{enumerate}
		\item $X \in T_\luntil$;
		\item for each $Z \in \MCS_\chi$, if there is $Y \in T_\luntil$ such that $Y R_\lnext Z$ and $\phi \in Z$, then $Z \in T_\luntil$.
	\end{enumerate}
	First we show that for all $Z \in T_\luntil$ such that $Z \neq X$ and $\phi \in Z$ there is $Y \in T_\luntil$ such that $Y R_\lnext Z$. In order to prove this let 
	\[
	T = \{ Z \in T_\luntil ~|~ Z \neq X, \phi \in Z, \text{ and there is no } Y \in T_\luntil \text{ such that }  Y R_\lnext Z \}.
	\]
	If $T \neq \emptyset$, then $T_\luntil \setminus T$ is a proper subset of $T_\luntil$ that satisfies properties 1 and 2. This contradicts the fact that $T_\luntil$ is the smallest set with properties 1 and 2. Thus $T = \emptyset$.
	
	From the definition of $T_\luntil$ and \eqref{eq:properties of Gamma in T-Until}, it is not difficult to show that $\phi \in Z$ for all $Z \in T_\luntil$. Thus it follows that for all $Z \in T_\luntil$ there exists $X_0, \ldots,X_n \in T_\luntil$, for $n \geq 0$, such that $X_0 = X$, $X_n = Z$,  $\phi \in X_0 \cap \ldots \cap X_n$ and $X_0 R_\lnext \ldots R_\lnext X_n.$
	
	Now we claim that $\neg \psi \in Z$, for all $Z \in T_\luntil$. 
	First note that, by \eqref{eq:properties of Gamma in T-Until}, $\neg \psi \in X$. For $Z \in T_\luntil$ such that $Z \neq X$, there exists $X_0, \ldots,X_n \in T_\luntil$, for $n > 0$, such that $X_0 = X$, $X_n = Z$, $\phi \in X_0 \cap \ldots \cap X_n$, and 
	$X_0 R_\lnext \ldots R_\lnext X_n$. Thus $\psi \not \in Z$, since otherwise $
	(X_0, \ldots,X_n)
	$ would be a $\phi \luntil \psi$-sequence starting with $X$, contradicting our assumption. Therefore, by Lemma \ref{lem: Facts about MCS-chi}, $\neg \psi \in Z$. This completes the proof of the claim.
	
	Let 
	\[
	\rho \colonequals \bigvee \big\{ \bigwedge Y \ |\ Y \in T_\luntil \big\}.
	\]
	Using the above claim we get $\vdash_\CS \rho \to \lnot \psi$. 
	
	Let $Y \in T_\luntil$ and  $Z \in \MCS_\chi$ such that $Y R_\lnext Z$. We have either $\phi \in Z$ or $\phi \not \in Z$. 
	If $\phi \in Z$, then by property 2 we have $Z \in T_\luntil$, and hence $\vdash_\CS \bigwedge Z \to \rho$. 
	If $\phi \not \in Z$, then $\neg \phi \in Z$. In addition, $\psi \not \in Z$, since otherwise we get a $\phi \luntil \psi$-sequence starting with $X$. Thus $\neg \psi \in Z$, and hence $\vdash_\CS \bigwedge Z \to \neg \phi \wedge \neg \psi$.
	
	Thus, for each $Y \in T_\luntil$ and each $Z \in \MCS_\chi$ such that $Y R_\lnext Z$, we have
	\[
	\text{either}\quad
	\vdash_\CS \bigwedge Z \to \rho \quad \text{or} \quad \vdash_\CS \bigwedge Z \to \lnot \phi \land \lnot \psi,
	\]
	and hence, 
	\begin{equation}\label{eq: until-seq}
		\vdash_\CS \bigwedge Z \to \rho \vee (\lnot \phi \land \lnot \psi).
	\end{equation}

	By Lemma~\ref{l:4.4}, for each $Y \in T_\luntil$ we have 
	\[
	\vdash_\CS \bigwedge Y \to \lnext (\bigwedge Z_1  \vee \ldots \vee \bigwedge Z_n)
	\]
	such that $Z_i \in \MCS_\chi$ and $Y R_\lnext Z_i$, for $i=1,\ldots,n$. By \eqref{eq: until-seq}, we get 
	\[
	\vdash_\CS \bigwedge Y \to \lnext (\rho \vee (\neg \phi \wedge \neg \psi)),
	\]
	for each $Y \in T_\luntil$. Thus
	$
	\vdash_\CS \rho \to \lnext (\rho \lor (\lnot \phi \land \lnot \psi)).
	$
	Using $\uindrule$, we obtain $\vdash_\CS \rho \to \lnot(\phi \luntil \psi)$.
	Since $X \in T_\luntil$, this implies  
	$
	\vdash_\CS \bigwedge X \to \lnot(\phi \luntil \psi),
	$
	which contradicts the assumption
	$\phi \luntil \psi \in X$.  \qed
\end{proof}

\begin{definition}
	A finite sequence $(X_0, X_1, \ldots, X_n)$ of elements of~$\MCS_\chi$ is called a \emph{$\phi \lsince \psi$-sequence ending with $X$} if
	\begin{enumerate}
		\item $X_n = X$,
		
		\item $X_j R_\lnext X_{j+1}$, for all $0 \leq j < n$, 
		
		\item $\psi \in X_0$,
		
		\item $\phi \in X_j$, for all $0< j \leq n$.
	\end{enumerate}
\end{definition}

\begin{lemma}\label{lem:since-sequence}
	For every $X \in \MCS_\chi$, if $\phi \lsince \psi \in X$, then there exists a  $\phi \lsince \psi$-sequence ending with $X$.
\end{lemma}
\begin{proof}
	Let $X = \overline{\Gamma}$, for some $\Gamma \in \MCS$. Suppose $\phi \lsince \psi \in X$ and there exists no  $\phi \lsince \psi$-sequence ending with $X$. Since $\phi \lsince \psi \in  \Subf^+(\chi)$, we have  $\phi,\psi \in  A_\chi$. From this, similar to the proof of Lemma \ref{lem:until-sequence}, it is proved that $\phi, \neg \psi \in X$.
	
	Let $T_\lsince$ be the smallest set of elements of $\MCS_\chi$ such that
	\begin{enumerate}
		\item $X \in T_\lsince$;
		\item for each $Y \in \MCS_\chi$, if there is $Z \in T_\lsince$ such that $Y R_\lnext Z$ and $\phi \in Y$, then $Y \in T_\lsince$.
		
	\end{enumerate}
	
	Similar to the proof of Lemma \ref{lem:until-sequence}, it is proved that
	for all $Y \in T_\lsince$ there exists $n \geq 0$ and there exists $X_0, \ldots,X_n \in T_\lsince$ such that $X_0 = Y$, $X_n = X$,  and $X_0 R_\lnext \ldots R_\lnext X_n.$ 	Moreover, it is not difficult to show that $\neg \psi, \phi \in Y$, for all $Y \in T_\lsince$.

	Let 
	\[
	\rho \colonequals \bigvee \big\{ \bigwedge Z \ |\ Z \in T_\lsince \big\}.
	\]
	We have $\vdash_\CS \rho \to \lnot \psi$. In addition, for each $Z \in T_\lsince$ and each $Y \in \MCS_\chi$ with $Y R_\lnext Z$, we have 
	\[
	\text{either}\quad
	\vdash_\CS \bigwedge Y \to \rho \quad \text{or} \quad \vdash_\CS \bigwedge Y \to \lnot \phi \land \lnot \psi,
	\]
	and hence, 
	\begin{equation}\label{eq:since-seq}
		\vdash_\CS \bigwedge Y \to \rho \vee (\lnot \phi \land \lnot \psi).
	\end{equation}

	By Lemma \ref{lem:wprevious-relation}, for each $Z \in T_\lsince$ we have 
	\[
	\vdash_\CS \bigwedge Z \to \wprevious (\bigwedge Y_1 \vee \ldots \vee \bigwedge Y_n)
	\]
	such that $Y_i \in \MCS_\chi$ and $Y_i R_\lnext Z$, for $i=1,\ldots,n$. By \eqref{eq:since-seq}, we get 
	\[
	\vdash_\CS \bigwedge Z \to \wprevious (\rho \vee (\neg \phi \wedge \neg \psi)),
	\]
	for each $Z \in T_\lsince$. Thus
	$
	\vdash_\CS \rho \to \wprevious (\rho \lor (\lnot \phi \land \lnot \psi)).
	$
	Using $\sindrule$, we obtain $\vdash_\CS \rho \to \lnot(\phi \lsince \psi)$.
	Since $X \in T_\lsince$, this implies  
	$
	\vdash_\CS \bigwedge X \to \lnot(\phi \lsince \psi),
	$
	which contradicts the assumption
	$\phi \lsince \psi \in X$.  \qed
\end{proof}

\begin{definition}\label{Def:acceptable sequence}
	
	An infinite sequence $(X_0, X_1, \ldots)$ of elements of $\MCS_\chi$ is called \textit{acceptable} (for $\mathsf{L}_\CS$) if 
	\begin{enumerate}
		\item $X_n R_\lnext X_{n+1}$ for all $n \geq 0$, and
		\item for all $n$, if $\phi \luntil \psi \in X_n$, then there exists $m \geq n$ such that $\psi \in X_m$ and $\phi \in X_k$ for all $k$ with $n \leq k <m$.
		\item $\wprevious \bot \in X_0$ (i.e. $X_0$ is initial). 
	\end{enumerate}
\end{definition}

\begin{lemma}\label{lem:finite seq to acceptable seq}
	Every finite sequence $(X_0, X_1, \ldots, X_n)$ of elements of\/ $\MCS_\chi$ with $\wprevious \bot \in X_0$ and $X_j R_\lnext X_{j+1}$, for all $0 \leq j < n$, can be extended to an  acceptable sequence.  
\end{lemma}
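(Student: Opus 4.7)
Condition~3 of Definition~\ref{Def:acceptable sequence} is supplied by the hypothesis $\wprevious \bot \in \overline{\Gamma_0}$, and condition~1 is preserved automatically at every extension step. The plan is therefore to extend the finite prefix $(\overline{\Gamma_0},\ldots,\overline{\Gamma_n})$ to an infinite sequence that also satisfies condition~2, by a round-robin construction that schedules each until formula infinitely often.

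First I would record a \emph{propagation principle}: whenever $\phi \luntil \psi \in \overline{\Gamma_m}$, $\psi \notin \overline{\Gamma_m}$, and $\overline{\Gamma_m} R_\lnext \overline{\Gamma_{m+1}}$, we have $\phi \in \overline{\Gamma_m}$ and $\phi \luntil \psi \in \overline{\Gamma_{m+1}}$. This follows by unfolding axiom $\utwoax$ inside the full maximal extension $\Gamma_m$ available through Lemma~\ref{lem:characterization of MCS-X}: $\neg\psi \in \overline{\Gamma_m} \subseteq \Gamma_m$ forces $\phi \wedge \lnext(\phi \luntil \psi) \in \Gamma_m$, and then Lemma~\ref{lem:R-next}(1) together with $\phi, \phi \luntil \psi \in \Sub^+(\chi)$ delivers the claimed memberships.

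Since $\Sub^+(\chi)$ is finite, I would enumerate the until formulas it contains as $\phi_1 \luntil \psi_1, \ldots, \phi_k \luntil \psi_k$ and build the extension in infinitely many stages, writing $\overline{\Gamma_{N_s}}$ for the current last element after stage $s$, with $N_0 := n$. At stage $s \geq 1$ set $j := ((s-1) \bmod k) + 1$. If $\phi_j \luntil \psi_j \in \overline{\Gamma_{N_{s-1}}}$, invoke Lemma~\ref{lem:until-sequence} to obtain a $\phi_j \luntil \psi_j$-sequence starting with $\overline{\Gamma_{N_{s-1}}}$ and append its proper tail; otherwise invoke Lemma~\ref{lem:R-next is serial} to append a single $R_\lnext$-successor. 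The resulting infinite sequence inherits property~1 automatically.

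To verify property~2, fix $p$ with $\phi_j \luntil \psi_j \in \overline{\Gamma_p}$. Applying the propagation principle inductively from index $p$ onwards, either $\psi_j$ already appears at some least index $q \geq p$ (in which case $\phi_j \in \overline{\Gamma_k}$ for all $p \leq k < q$ by propagation) or $\phi_j \luntil \psi_j$ persists through every state until the next stage $s$ scheduled to index $j$, at which point $\phi_j \luntil \psi_j \in \overline{\Gamma_{N_{s-1}}}$ and the appended $\phi_j \luntil \psi_j$-sequence from Lemma~\ref{lem:until-sequence} deposits the required witnessing $\psi_j$ together with $\phi_j$ in every intermediate state. The main obstacle is the bookkeeping in this last step: making sure that the cyclic schedule visits each until formula often enough for propagation to finish the job, while the $R_\lnext$-chain between fulfilment blocks is never broken, which is precisely what Lemma~\ref{lem:R-next is serial} guarantees in the off-stages.
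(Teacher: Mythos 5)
Your construction is essentially the paper's: the propagation principle is exactly the unfolding of $\utwoax$ via Lemma~\ref{lem:R-next} that the paper uses, fulfilment of pending until formulas is delegated to Lemma~\ref{lem:until-sequence}, and seriality (Lemma~\ref{lem:R-next is serial}) keeps the chain alive; the only real difference is that you schedule obligations round-robin over the finitely many until formulas of $\Sub^+(\chi)$, whereas the paper sweeps through the positions $\overline{\Gamma_0},\overline{\Gamma_1},\ldots$ and discharges all until formulas at each position in turn. That is a bookkeeping difference, not a different argument, and your verification of condition~2 (propagation up to the next scheduled stage, then the appended witnessing block) is sound.

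One corner of the bookkeeping does need a patch: an acceptable sequence must be \emph{infinite}, and as written your stages need not make the sequence grow. If the scheduled formula $\phi_j\luntil\psi_j$ lies in the current last element and $\psi_j$ does too, Lemma~\ref{lem:until-sequence} may return the one-element sequence, whose ``proper tail'' is empty; if from some stage on every until formula of $\Sub^+(\chi)$ sits already fulfilled in the (then unchanging) last element, every stage falls into this case, the construction stalls, and the limit is a finite sequence, hence not acceptable. The repair is the one the paper makes explicit at the end of its proof: whenever a stage appends nothing (equivalently, whenever the limit threatens to be finite), append an $R_\lnext$-successor supplied by Lemma~\ref{lem:R-next is serial}. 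This preserves condition~1 of Definition~\ref{Def:acceptable sequence}, condition~3 is your hypothesis $\wprevious\bot\in\overline{\Gamma_0}$, and your propagation argument for condition~2 goes through unchanged on the enlarged sequence.
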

\begin{proof}
	In order to fulfill the requirements of Definition \ref{Def:acceptable sequence}, we shall extend the sequence $(X_0, X_1, \ldots, X_n)$ by the following steps.
	
	Suppose $\phi \luntil \psi \in X_0$. Then either $\psi \in X_0$ or $\neg\psi \in X_0$. In the former case the requirement is fulfilled for the formula $\phi \luntil \psi$  in $X_0$, and we go to the next step. In the latter case, using axiom $(\luntil 2)$, $X_0 \vdash_\CS \phi \wedge \lnext (\phi \luntil \psi)$. 
	Since $X_0 R_\lnext X_1$, By Lemma \ref{lem:corollary of R-next}, we get $\phi \luntil \psi \in X_1$.

	We can repeat this argument for $X_i$ for 
	$1\leq i \leq n$. We find that the requirement for $\phi \luntil \psi \in X_0$ is either fulfilled in $(X_0, X_1, \ldots, X_n)$ or we get $\phi \luntil \psi \in X_n$ and  $\phi \in X_i$ for $1\leq i \leq n$. 
	In the latter case, by Lemma~\ref{lem:until-sequence}, there exists a sequence $(X_n, X_{n+1}, \ldots, X_{n+m})$
	such that 
	$\phi \in X_i$ for $n\leq i < n+m$,   
	$\psi \in X_{n+m}$, 
	and  $X_i R_\lnext X_{i+1}$ for $n\leq i < n+m$. This gives a finite extension of the original sequence that satisfies the requirement imposed by $\phi \luntil \psi \in X_0$.
	
	In the next step we repeat this argument for the remaining $\luntil$-formulas at $X_0$. Eventually we obtain a finite sequence that satisfies all requirements imposed by $\luntil$-formulas at $X_0$.
	
	We may move on to $X_1$ and apply the same procedure. 
	It is clear that by iterating the above argument to all $\luntil$-formulas of all elements $X_i$ of the sequence $(X_0, X_1, \ldots, X_n,\ldots)$, including $\luntil$-formulas of new elements $X_i$ for $i > n$, we obtain in the limit a (finite or infinite) sequence that extends $(X_0, X_1, \ldots, X_n)$ and satisfies conditions 1--3 of Definition \ref{Def:acceptable sequence}. If the resulting sequence is finite, then by seriality of  $R_\lnext$ it can be extended to an infinite sequence, and in each step of this extension we can repeat the above argument to fulfill the obligations arising from the $\luntil$-formulas.  	 
	Thus, we finally get an acceptable sequence that extends $(X_0, X_1, \ldots, X_n)$. \qed
\end{proof}

\begin{corollary}\label{cor:acceptable sequence starts with Gamma}
	For every $X \in \MCS_\chi$, there is an acceptable sequence containing $X$. 
\end{corollary}
\begin{proof}
	Given $X \in \MCS_\chi$, since $\lonce \wprevious \bot = \top \lsince \wprevious \bot \in X$, by Lemma \ref{lem:since-sequence}, there exists a $\top \lsince \wprevious \bot$-sequence  $(X_0, X_1, \ldots, X_n)$ ending with $X$, i.e. $X_n = X$, $X_j R_\lnext X_{j+1}$, for all $0 \leq j < n$, and $\wprevious \bot \in X_0$. By Lemma \ref{lem:finite seq to acceptable seq}, this sequence can be extended to an acceptable sequence containing $X$.  \qed 
\end{proof}

Let $(X_0, X_1, \ldots)$ be an acceptable sequence of elements of $\MCS_\chi$. Then there exist $\Gamma_{0}, \Gamma_{1}, \ldots \in \MCS$ and $\Delta_{1}, \Delta_{2}, \ldots \in \MCS$ such that
$$\RO{X_{0}}{X_{1}}{\overline{\Gamma_{0}}}{\overline{\Delta_{1}}}, \RO{X_{1}}{X_{2}}{\overline{\Gamma_{1}}}{\overline{\Delta_{2}}},\RO{X_{2}}{X_{3}}{\overline{\Gamma_{2}}}{\overline{\Delta_{3}}}, \ldots.$$

Note that for each $i \geq 1$ we have $\RO{X_{i-1}}{X_{i}}{\overline{\Gamma_{i-1}}}{\overline{\Delta_{i}}}$, and thus $X_0 = \overline{\Gamma_{0}}$ and $X_j = \overline{\Gamma_{j}} = \overline{\Delta_{j}}$ for all $j > 0$.

\begin{lemma}\label{lem: MCS temporal properties}
	Let $(X_0, X_1, \ldots)$ be an acceptable sequence of elements of $\MCS_\chi$, let $n \geq 0$, and let $\RO{X_{i-1}}{X_{i}}{\overline{\Gamma_{i-1}}}{\overline{\Delta_{i}}}$, for  $i \geq 1$.
	\begin{enumerate}
		\setlength\itemsep{0.01cm}
		\item 
		\begin{enumerate}
			\item If $\phi \lsince \psi \in X_n$, then there exists $m \leq n$ such that $\psi \in X_m$ and $\phi \in X_k$ for all $k$ with $m < k \leq n$.
			
			\item If $\phi \lsince \psi \in \Subf^+(\chi)$ and there exists $m \leq n$ such that $\psi \in X_m$ and $\phi \in X_k$ for all $k$ with $m < k \leq n$, then $\phi \lsince \psi \in X_n$.
		\end{enumerate}
		
		\item If $\leventually \phi \in  X_n$, then $\phi \in  X_m$ for some $m \geq n$. 
		
		\item 
		\begin{enumerate}
			\item If $\lalways \phi \in X_n$, then $\phi \in X_m$ for all $m \geq n$. 
			
			\item If $\lalways \phi \in \Subf^+(\chi)$ and $\phi \in X_m$ for all $m \geq n$, then $\lalways \phi \in X_n$.
		\end{enumerate}
		
		\item If $\lonce \phi \in X_n$, then $\phi \in X_{m}$ for some $m \leq n$. 
		
		\item 
		\begin{enumerate}
			\item If $\lsofar \phi \in X_n$, then $\phi \in X_m$ for all $m \leq n$.
			
			\item If $\lsofar \phi \in \Subf^+(\chi)$ and $\phi \in X_m$ for all $m \leq n$, then $\lsofar \phi \in X_n$.
		\end{enumerate}

		
		\item If $n > 0$ and $\wprevious \phi \in X_n$, then $\phi \in X_{n-1}$.
		
		\item If $n > 0$ and $\sprevious \phi \in X_n$, then $\phi \in X_{n-1}$.
	\end{enumerate}
	
\end{lemma}
\begin{proof}
	
	\begin{enumerate}
		
		\item The proof involves a routine induction on $n$. We prove item (a). The proof of (b) is similar.
		
		Suppose $n=0$ and $\phi \lsince \psi \in X_0$. Since $\phi \lsince \psi \in \Gamma_0$, using axiom $(\lsince 2)$ we have either $\psi \in \Gamma_0$ or $\phi \wedge \sprevious (\phi \lsince \psi) \in \Gamma_0$. In the former case, we get $\psi \in X_0$ and we are done. By Lemma \ref{lem:initial state does not contain strong previous formulas} the latter case cannot happen. 
		
		Suppose $n>0$ and $\phi \lsince \psi \in X_n$. Since $\phi \lsince \psi \in \Delta_n$, then using axiom $(\lsince 2)$ we have either $\psi \in \Delta_n$ or $\phi \wedge \sprevious (\phi \lsince \psi) \in \Delta_n$. In the former case, we get $\psi \in X_n$ and we are done. In the latter case, we have $\phi \in \Delta_n$, and thus $\phi \in X_n$. On the other hand, since $\RO{X_{n-1}}{X_{n}}{\overline{\Gamma_{n-1}}}{\overline{\Delta_{n}}}$, by Lemma \ref{lem:R-next} we get $\phi \lsince \psi \in \Gamma_{n-1}$, and hence $\phi \lsince \psi \in X_{n-1}$. By the induction hypothesis there exists $m \leq n-1$ such that $\psi \in X_m$ and $\phi \in X_k$ for all $k$ with $m < k \leq n-1$. This completes the proof of the only if direction. 
		
		
		\item Suppose $\leventually \phi \in  X_n$. Then $ \top \luntil  \phi \in X_n$. Since $(X_0, X_1, \ldots)$ is an acceptable sequence, there exists $m \geq n$ such that $\phi \in X_m$.
		
		\item For item (a), suppose $\lalways \phi \in X_n$. By induction on $m$ we show that  $\lalways \phi \in X_m$, for all $m \geq n$. Having proved this, by Lemma \ref{lem:mix:1}, we get $\phi \in X_m$, for all $m \geq n$, as desired. 
		
		The base case $m = n$ is trivial. Suppose that $\lalways \phi \in X_m$, for some $m > n$. We want to show that $\lalways \phi \in X_{m+1}$. Since $\lalways \phi \in \Gamma_m$, by Lemma \ref{lem:mix:1} we get $\lnext \lalways \phi \in \Gamma_m$. Since $\RO{X_{m}}{X_{m+1}}{\overline{\Gamma_{m}}}{\overline{\Delta_{m+1}}}$, by Lemma \ref{lem:R-next}, we obtain $\lalways \phi \in \Delta_{m+1}$.  Hence $\lalways \phi \in X_{m+1}$. 
		
		For item (b), suppose that $\lalways \phi \in \Subf^+(\chi)$, $\phi \in X_m$ for all $m \geq n$, and $\lalways \phi \not \in X_n$. Thus $\neg \lalways \phi \in X_{n}$, and hence $\leventually \neg \phi \in X_{n}$. By item 2 above, $\neg \phi \in X_m$ for some $m \geq n$, which would contradict the assumption.

		\item Suppose $\lonce \phi \in X_n$. Then $\top \lsince \phi \in X_{n}$. By item 1, we get $\phi \in X_{m}$ for some $m \leq n$. 
		
		\item For item (a), suppose $\lsofar \phi \in X_n$. By induction on $m$ we show that  $\lsofar \phi \in X_m$, for all $m \leq n$. Having proved this, by Lemma \ref{lem:mix:1}, we get $\phi \in X_m$, for all $m \leq n$, as desired. 
		
		The base case $m = n$ is trivial. Suppose that $\lsofar \phi \in X_m$, for some $m < n$. We want to show that $\lsofar \phi \in X_{m-1}$. Since $\lsofar \phi \in \Delta_m$, by Lemma \ref{lem:mix:1} we get $\wprevious \lsofar \phi \in \Delta_m$. Since $\RO{X_{m-1}}{X_{m}}{\overline{\Gamma_{m-1}}}{\overline{\Delta_{m}}}$, by Lemma \ref{lem:R-next}, we obtain $\lsofar \phi \in \Gamma_{m-1}$.  Hence $\lsofar \phi \in X_{m-1}$. 
		
		For item (b), suppose that $\lsofar \phi \in \Subf^+(\chi)$, $\phi \in X_m$ for all $m \leq n$, and $\lsofar \phi \not \in X_n$. Thus $\neg \lsofar \phi \in X_{n}$, and hence $\lonce \neg \phi \in X_{n}$. By item 4 above, $\neg \phi \in X_m$ for some $m \leq n$, which would contradict the assumption.
		
		\item Suppose $n > 0$ and $\wprevious \phi \in X_n$. Since $\RO{X_{n-1}}{X_{n}}{\overline{\Gamma_{n-1}}}{\overline{\Delta_{n}}}$ and $\wprevious \phi \in \Delta_n$, by Lemma \ref{lem:R-next},  it follows that $\phi \in \Gamma_{n-1}$, and hence $\phi \in X_{n-1}$. 
		
		\item The proof is similar to the proof of item 6.\qed
		
	\end{enumerate}
\end{proof}

\section{Semantics of $\LPLTLp$}
\label{sec:Semantics}

In this section we introduce interpreted systems based on  Fitting-models as semantics for temporal justification logic $\LPLTLp$.

\begin{definition}\label{def:frame-run-system}
	A \emph{frame} is a tuple $(S, R_1,\ldots,R_\numberofagents)$ where
	\begin{enumerate}
		\item $S$ is a non-empty set of states;
		\item each $R_i \subseteq S \times S$ is a reflexive and transitive relation. 
	\end{enumerate}
	A \emph{run}~$r$ on a frame is a function from $\N$ to states, i.e., $r: \N \to S$. A \emph{system}~$\runs$ is a non-empty set of runs. 
	
	Given a run $r$ and $n \in \N$, the pair $(r,n)$ is called a \emph{point}.
\end{definition}

\begin{definition}\label{def:evidence function for LPLTL}
	Given a frame $(S, R_1,\ldots,R_\numberofagents)$,
	a \emph{$\CS$-evidence function for agent~$\agent$} is a function 
	\[
	\evidence_i: S \times \Terms \to \powerset(\Formulae)
	\]
	satisfying the following conditions.
	For all terms $s,t \in \Terms$, all formulas $\phi,\psi \in \Formulae$, all $v,w \in S$, and all $i \in \Ag$:
	\begin{enumerate}
		\item 
		$\evidence_\agent(v,t) \subseteq \evidence_\agent(w, t)$, whenever $R_i(v,w)$; \hfill (monotonicity)
		\item 
		if $\jbox{t}_\agent \phi \in \CS$, then $\phi \in \evidence_\agent(w,t)$; \hfill (constant specification)

		\item 
		if $\phi \limplies \psi \in \evidence_\agent(w,t)$ and $\phi \in \evidence_\agent(w,s)$, then $\psi \in \evidence_\agent(w, t \tapp s)$; \hfill (application)
		
		\item 
		$\evidence_\agent(w,s) \cup \evidence_\agent(w,t) \subseteq \evidence_\agent(w,s + t)$; \hfill (sum)
		
		\item 
		if $\phi \in \evidence_\agent(w,t)$, then $\jbox{t}_\agent \phi \in \evidence_\agent(w,\tinspect t)$. \hfill (positive introspection)
	\end{enumerate}
\end{definition}

\begin{definition}\label{def:interpreted sysytems LPLTLp}
	An  \emph{interpreted system for $\LPLTLp_\CS$} (or for $\CS$) is a tuple 
	\[
	\system = (\runs, S, R_1,\ldots,R_\numberofagents, \evidence_1\ldots,\evidence_\numberofagents, \valuation)
	\] 
	where
	\begin{enumerate}
		\item $(S, R_1,\ldots,R_\numberofagents)$ is a frame;
		\item $\runs$ is a system on that frame;
		\item $\evidence_\agent$ is a $\CS$-evidence function for agent~$\agent$ for $1 \leq \agent \leq \numberofagents$;
		\item $\valuation: S \to \powerset(\Prop)$ is a valuation.
	\end{enumerate}
\end{definition}

\begin{definition}\label{def:truth conditions interpreted systems}
	Given an interpreted system 
	\[
	\system = (\runs, S, R_1,\ldots,R_\numberofagents, \evidence_1,\ldots,\evidence_\numberofagents, \valuation),
	\] 
	a run $r \in \runs$, and  $n \in \N$, we define truth of a formula $\phi$ in $\system$ at point $(r,n)$ inductively by 
	\begin{align*}
	(\system, r, n) &\entails P \text{ iff } P \in \valuation(r(n)) \, ,\\
	(\system, r, n) &\not\entails \lfalse \, ,\\
	(\system, r, n) &\entails \phi \limplies \psi \text{ iff } (\system, r, n) \not\entails \phi \text{ or } (\system, r, n) \entails \psi \, ,\\
	(\system, r, n) &\entails \wprevious \phi \text{ iff $n=0$ or } (\system, r, n-1) \entails \phi \, ,\\
	(\system, r, n) &\entails \lnext \phi \text{ iff } (\system, r, n+1) \entails \phi \, ,\\
	(\system, r, n) &\entails \phi \lsince \psi \text{ iff there is some } m \leq n \text{ such that } (\system, r, m) \entails \psi \\ & \qquad\qquad \text{ and } (\system, r, k) \entails \phi \text{ for all $k$ with } m < k \leq n \, ,\\
	(\system, r, n) &\entails \phi \luntil \psi \text{ iff there is some } m \geq n \text{ such that } (\system, r, m) \entails \psi \\ & \qquad\qquad \text{ and } (\system, r, k) \entails \phi \text{ for all $k$ with  } n \leq k < m \, ,\\    
	(\system, r, n) &\entails \jbox{t}_\agent \phi \text{ iff }  \phi \in \evidence_\agent(r(n),t)  \text { and } (\system, r^\prime, n^\prime) \entails \phi \\ &\qquad\qquad \text{ for all } r^\prime \in \runs \text{ and } n^\prime \in \N \text{ such that } R_\agent(r(n) , r^\prime(n^\prime)) \, .
	\end{align*}

	As usual, we write $\system \entails \phi$ if
	for all $r \in \runs$ and all $ n \in \N$, we have 
	$(\system, r, n) \entails \phi$.
	Further, we write $\entails_\CS \phi$ if $\system \entails \phi$ for all 
	interpreted systems $\system$ for $\CS$.
\end{definition}

\begin{definition}
 Given a set of formulas $\Gamma$ and a formula $\phi$, the (local) consequence relation is defined as follows: $\Gamma \models_\CS \phi$ iff for all 
 interpreted systems $\system = (\runs, \ldots)$ for $\CS$, for all $r \in \runs$, and for all $ n \in \N$, if $(\system, r, n) \entails \psi$ for all $\psi \in \Gamma$, then $(\system, r, n) \entails \phi$.
\end{definition}

From the above definitions it follows that:  
	\begin{align*}
	(\system, r, n) &\entails \leventually \phi \text{ iff } (\system, r, m) \entails \phi \text{ for some } m\geq n \, ,\\
	(\system, r, n) &\entails \lalways \phi \text{ iff } (\system, r, m) \entails \phi \text{ for all } m\geq n \, ,\\
	(\system, r, n) &\entails \lonce \phi \text{ iff } (\system, r, m) \entails \phi \text{ for some } m \leq n \, ,\\
	(\system, r, n) &\entails \lsofar \phi \text{ iff } (\system, r, m) \entails \phi \text{ for all } m \leq n \, ,\\
	(\system, r, n) &\entails \sprevious \phi \text{ iff $n>0$ and } (\system, r, n-1) \entails \phi \, .
	\end{align*}
 
It is sometime convenient to use the following truth conditions for since and until formulas, which are clearly equivalent to the corresponding conditions given in Definition \ref{def:truth conditions interpreted systems}.
 
 \begin{align*}
 (\system, r, n) &\entails \phi \lsince \psi \text{ iff there is some $m$ with } n \geq m \geq 0 \text{ such that } (\system, r, n-m) \entails \psi \\ & \qquad\qquad \text{ and } (\system, r, n-k) \entails \phi \text{ for all $k$ with } 0 \leq k < m \, \\
(\system, r, n) &\entails \phi \luntil \psi \text{ iff there is some } m \geq 0 \text{ such that } (\system, r, n+m) \entails \psi \\ & \qquad\qquad \text{ and } (\system, r, n+k) \entails \phi \text{ for all $k$ with } 0 \leq k < m \, .
\end{align*}

\begin{remark}\label{rem:time is n-interpreted systems}
	Note that 
	\[
	(\system, r, n) \entails \wprevious \bot \text{ iff $n=0$. }
	\]
	Thus $\wprevious \bot$ expresses the property ``the time is 0." Similarly, $\sprevious^m \wprevious\bot$, where $\sprevious^m$ is the iteration of $\sprevious$, $m$ times, expresses the property ``the time is m." 
	
	Let ``$\ltime=m$" abbreviate $\sprevious^m \wprevious \bot$ and $\ltrue_m (\phi)$ abbreviate $\ltime = m \to \phi$. It is easy to show that
	\[
	(\system, r, n) \entails \ltime=m  \text{ iff $n=m$ }
	\]
	and
	\[
	\system \models \ltrue_m (\phi) \text{ iff } \system, r, m \models \phi \quad \text{for all $r \in \mathcal{R}$}.
	\]
	Thus, $\ltrue_m (\phi)$ expresses that ``$\phi$ is true at time $m$."
\end{remark}

\begin{remark}
	The interpreted systems are originally formulated by means of the notions of \textit{local} and \textit{global} states (see e.g. \cite{FHMV95,HvdMV04}). Now I aim to define the interpreted systems for $\LPLTLp$, using the notions of local and global states, so that it more closely matches the original definition of interpreted systems given in \cite{FHMV95}.
	
	Suppose that at any point in time the system is in some global state, defined by the local states of the agents and the state of other objects of interest (which is refered to as the ``environment"). Let $\localstates$ be some set of \emph{local states}. Informally, an agent's local state captures all the information available to her at a given moment of time. A \emph{global state} is a $(\numberofagents+1)$-tuple $\langle l_e, l_1, \dots, l_\numberofagents \rangle \in \localstates^{\numberofagents+1}$, where $l_e$ is the state of environment and $l_i$ is the local state of agent $i$ for $i = 1, \ldots,h$. Now in order to  define the interpreted systems for $\LPLTLp$ using the notions of local and global states, 	it is enough to put the set of states $S := \localstates^{\numberofagents+1}$. 
	As before a \emph{run}~$r$ is a function from time to global states, i.e.,  $r: \N \to \localstates^{\numberofagents+1}$, and a \emph{system} is a set $\runs$ of runs. The  definitions of $\CS$-evidence functions, interpreted systems, and truth are as before. Note that here $\langle l_e, l_1, \dots, l_\numberofagents \rangle R_i \langle l'_e, l'_1, \dots, l'_\numberofagents \rangle$ means ``the local state $l'_i$ is epistemically possible for agent $i$ in the local state $l_i$.''
	
	It is  worth noting that the semantics given by Bucheli in \cite{Bucheli15} for temporal justification logic employs global states. However, there is a minor difference between Bucheli's semantics and ours. Since he modeled the knowledge part of the temporal justification logic by a justification counterpart of the modal logic {\sf S5}, he defines \textit{indistinguishability} relations $\accrel_\agent$ between points, for each agent $i$, which are clearly equivalence relations. In contrast to his formulation, our temporal justification logic is based on a justification counterpart of the modal logic {\sf S4}, and thus we naturally make use of reflexive and transitive accessibility relations $R_\agent$ for each agent $i$.
\end{remark}
\section{Soundness and Completeness of $\LPLTLp$}
\label{sec:Completeness}

The soundness proof for $\LPLTLp_\CS$ is a straightforward combination of the soundness proofs for temporal logic and justification logic by induction on the derivation.

\begin{theorem}[Soundness]
		For each formula $\phi$ and finite set of formulas $\Gamma$,
	\[
	 \Gamma\vdash_\CS \phi  \quad\text{implies}\quad  \Gamma \models_\CS  \phi.
	\]
\end{theorem}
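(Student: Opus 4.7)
The plan is to proceed by induction on the length of the derivation $\Gamma \vdash_\CS \phi$. Fix an arbitrary interpreted system $\system = (\runs, S, R_1,\ldots,R_\numberofagents, \evidence_1,\ldots,\evidence_\numberofagents, \valuation)$ for $\CS$, a run $r \in \runs$, and $n \in \N$, and assume $(\system,r,n) \models \psi$ for all $\psi \in \Gamma$. The aim is to show $(\system,r,n) \models \phi$. The base cases split into (i) $\phi \in \Gamma$, which is immediate; (ii) $\phi$ is an instance of $\propax$, which follows from the Boolean clauses; (iii) $\phi$ comes from the iterated constant specification rule, so $\phi = \jbox{c_{j_n}}_{i_n}\cdots\jbox{c_{j_1}}_{i_1} \psi \in \CS$, and validity follows by iterating the constant specification condition of each $\evidence_{i_k}$ together with downward closure of $\CS$; (iv) $\phi$ is one of the temporal or justification axioms. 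Modus ponens is immediate from the clause for $\limplies$. For the four necessitation rules, note that by the side condition they are only applied to theorems, so one may assume $\Gamma = \emptyset$ and argue that validity in $\system$ at every point is preserved by each $\lnext$, $\wprevious$, $\lalways$, $\boxminus$.

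For the temporal axioms, the K-axioms $\nextkax, \alwayskax, \sofarkax, \prevkax$ and the functionality axiom $\funax$ follow directly from the semantic clauses by unfolding the definitions at $n$, $n+1$, or $n-1$ (with special attention to $n=0$ for $\wprevious$ and $\sprevious$). The axiom $\swprevax$ follows since whenever $\sprevious\phi$ holds at $n$, we must have $n>0$ and $(\system,r,n-1)\models\phi$, so $\wprevious\phi$ holds at $n$. The fixpoint axioms $\fpax$ and $\pfax$ are verified by chasing the clauses at $n$ and $n+1$ (respectively at $n$ and $n-1$ for the nontrivial case). The axiom $\initialax$ is immediate because $m=0 \le n$ yields $(\system,r,0)\models\wprevious\bot$. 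The induction axioms $\indax$ and $\sofarindax$ are verified by standard induction on $\N$: for $\indax$, given $\lalways(\phi \to \lnext\phi)$ and $\phi$ at $n$, one shows $(\system,r,m)\models\phi$ for all $m \ge n$ by induction on $m - n$; for $\sofarindax$, one uses finite downward induction from $n$ to $0$, since the past is finite. The until/since axioms $\uoneax, \utwoax, \soneax, \stwoax$ follow directly by unpacking the existential quantifiers in the semantic clauses for $\luntil$ and $\lsince$.

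For the justification axioms, use the corresponding clauses of Definition \ref{def:evidence function for LPLTL}. The application axiom $\appax$ uses the application clause of $\evidence_i$ plus the fact that $R_i$ is reflexive (to collect $\phi$ and $\phi\to\psi$ at the same state). The sum axiom $\sumax$ uses the sum clause and the observation that truth of $\jbox{\cdot}_i\chi$ at all $R_i$-successors is preserved when enlarging the evidence. The reflexivity axiom $\refax$ uses reflexivity of $R_i$: from $(\system,r,n)\models\jbox{t}_i\phi$ we apply the truth clause with $r'=r$ and $n'=n$, using $R_i(r(n),r(n))$. For $\posintax$, assume $(\system,r,n)\models\jbox{t}_i\phi$; the evidence clause gives $\jbox{t}_i\phi \in \evidence_i(r(n),\,\tinspect t)$, and one must show that for every $(r',n')$ with $R_i(r(n),r'(n'))$ we have $(\system,r',n')\models\jbox{t}_i\phi$. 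For this last part, use transitivity of $R_i$ to pass the hypothesis $(\system,r'',n'')\models\phi$ from $R_i(r(n),\cdot)$-successors to $R_i(r'(n'),\cdot)$-successors, together with monotonicity of $\evidence_i$ along $R_i$ to carry $\phi \in \evidence_i(r(n),t)$ forward to $r'(n')$.

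The main obstacles will be the induction axiom $\sofarindax$ (where one must be careful that the finite-past structure $(\N,<)$ genuinely supports downward induction to the initial moment) and the positive introspection axiom, where one must combine transitivity of $R_i$ with monotonicity of $\evidence_i$ exactly as in the standard Fitting soundness proof for $\LP$. Everything else reduces to routine unwinding of the truth clauses.
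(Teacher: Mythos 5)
Your proposal is correct and follows the same route the paper intends: the paper only remarks that soundness is ``a straightforward combination of the soundness proofs for temporal logic and justification logic by induction on the derivation,'' and your case analysis (axioms verified pointwise, necessitation handled via the premise-free side condition, iterated constants via downward closure of $\CS$, and $\posintax$ via monotonicity plus transitivity of $R_i$) is exactly that argument spelled out. The only cosmetic remark is that your appeal to reflexivity of $R_i$ in the $\appax$ case is unnecessary (the evidence clause at $r(n)$ and truth of both premises at all $R_i$-accessible points already suffice), though it is harmless since $R_i$ is indeed reflexive.
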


For the completeness proof we employ the canonical model construction.

\begin{definition}\label{def:canonical interpreted systems for LPLTL}
	The $\chi$-canonical interpreted system 
	\[
	\system = (\runs, S, R_1,\ldots,R_\numberofagents, \evidence_1\ldots,\evidence_\numberofagents, \valuation)
	\] 
	for $\LPLTLp_\CS$ is defined as follows:
	\begin{enumerate}
		\item $\runs$ consists of all mappings $r: \N \to \MCS_\chi$ such that
		$(r(0), r(1), \ldots)$ is an acceptable sequence;
		\item $S \colonequals \MCS_\chi$;
		%
		
		\item $X R_\agent Y$ if{f} for all $\Delta \in \MCS$ such that $Y = \overline{\Delta}$ there exists $\Gamma \in \MCS$ such that $X = \overline{\Gamma}$ and  
		$\{ \phi \mid  \jbox{t}_\agent \phi \in \Gamma   \text{ for some $t$}\} \subseteq \Delta$; 

		\item $\evidence_\agent (X, t) \colonequals \{ \phi \mid \jbox{t}_\agent \phi \in \bigcap \{ \Gamma \in \MCS \mid X = \overline{\Gamma} \}$;
		\item $\valuation(X) \colonequals \Prop\cap X$.
	\end{enumerate}
\end{definition}

Note that $S = \MCS_\chi = \{ r(n) \ |\ r\in \runs, n \in \N \}$. 

\begin{lemma}\label{lem: canonical interpreted model is a model}
	The $\chi$-canonical interpreted system 
	\[
	\system = (\runs, S, R_1,\ldots,R_\numberofagents, \evidence_1\ldots,\evidence_\numberofagents, \valuation)
	\] 
	for $\LPLTLp_\CS$ is an interpreted system for $\LPLTLp_\CS$.
\end{lemma}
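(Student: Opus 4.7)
The plan is to verify each clause of the definition of an interpreted system for $\CS$ in turn. The non-trivial obligations are (a) each $R_i$ is reflexive and transitive, and (b) each $\evidence_i$ satisfies the five conditions of Definition~\ref{def:evidence function for LPLTL}. The remaining clauses are essentially built into the construction: $S = \MCS_\chi$ is non-empty since $\{\top\}$ extends to some element of $\MCS$, runs exist into $S$ by Corollary~\ref{cor:acceptable sequence starts with Gamma}, and $\valuation$ is a valuation by fiat.

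For reflexivity of $R_i$, I would use the axiom $\refax$: if $\jbox{t}_i\phi \in \Gamma$ for some $t$, then $\phi \in \Gamma$ by modus ponens with the instance $\jbox{t}_i\phi \to \phi$, so the defining inclusion for $R_i(\overline{\Gamma},\overline{\Gamma})$ holds. For transitivity, the key axiom is $\posintax$: if $R_i(\overline{\Gamma},\overline{\Delta})$ and $R_i(\overline{\Delta},\overline{\Sigma})$ and $\jbox{t}_i\phi \in \Gamma$, then $\jbox{\tinspect t}_i\jbox{t}_i\phi \in \Gamma$, so $\jbox{t}_i\phi \in \Delta$ by the first relation, and hence $\phi \in \Sigma$ by the second.

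Next I would check the five evidence conditions one by one, each reducing to a single justification axiom or rule. Monotonicity follows from $\posintax$ by the same one-step lift as in the transitivity argument. The constant specification condition is a direct consequence of the iterated axiom necessitation rule $\iteratedconstnecrule$: if $\jbox{t}_i\phi \in \CS$ then $\vdash_\CS \jbox{t}_i\phi$, so $\jbox{t}_i\phi$ belongs to every maximal $\CS$-consistent set. Application, sum, and positive introspection of the evidence function are immediate translations of the axioms $\appax$, $\sumax$, and $\posintax$ applied inside the underlying maximal consistent set.

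The only mild subtlety is that the definitions of $R_i$, $\evidence_i$, and $\valuation$ refer to ``the'' full maximal consistent set $\Gamma$ associated with $\overline{\Gamma}$, which by Lemma~\ref{lem:characterization of MCS-X} exists but need not be unique. This is resolved by tacitly fixing, once and for all, a choice of $\Gamma \in \MCS$ with $\overline{\Gamma} = \Gamma \cap \Sub^+(\chi)$ for each $\overline{\Gamma} \in \MCS_\chi$. Once that choice is made, every verification above goes through mechanically, and I do not anticipate any real obstacle in the argument.
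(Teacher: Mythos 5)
Your proof is correct and is essentially the argument the paper intends: the paper's own proof just defers to the corresponding verification for single-agent Fitting models, and your clause-by-clause check (reflexivity and transitivity of $R_i$ via \refax\ and \posintax, monotonicity by the same positive-introspection lift, the constant specification condition via \iteratedconstnecrule, and application, sum, positive introspection read off the corresponding axioms inside the maximal consistent set, with non-emptiness of $\runs$ from Corollary~\ref{cor:acceptable sequence starts with Gamma}) is the spelled-out version of that standard argument. Your observation that one must fix a representative $\Gamma\in\MCS$ for each $\overline{\Gamma}\in\MCS_\chi$ to make $R_i$, $\evidence_i$, and $\valuation$ well defined addresses a genuine ambiguity that the paper leaves implicit, and your resolution is the natural one.
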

\begin{proof}
	It is not difficult to show that each $R_\agent$ is reflexive and transitive. We now have to show that each $\evidence_\agent$ satisfies the conditions of Definition \ref{def:evidence function for LPLTL}. We only show here the $(monotonicity)$ condition for $\evidence_\agent$. The proof for other conditions are easy.
	
	Suppose that $X, Y \in S$, $X R_\agent Y$, and $\phi \in \evidence_\agent (X,t)$. We have to show that $\phi \in \evidence_\agent (Y,t)$. Let $Y = \overline{\Delta}$, for $\Delta \in \MCS$. We have to show that $\jbox{t}_\agent \phi \in \Delta$. Since $X R_\agent Y$, there exists $\Gamma \in \MCS$ such that $X = \overline{\Gamma}$ and  
	$\{ \phi \mid  \jbox{t}_\agent \phi \in \Gamma   \text{ for some $t$}\} \subseteq \Delta$. Since $\phi \in \evidence_\agent (X,t)$, we get $\jbox{t}_\agent \phi \in \Gamma$, and thus $\jbox{!t}_\agent \jbox{t}_\agent \phi \in \Gamma$. Therefore, $\jbox{t}_\agent \phi \in \Delta$, as desired.
\end{proof}

\begin{lemma}[Truth Lemma]\label{lem:Truth Lemma LPLTL}
	Let $\mathsf{L} = \LPLTLp$, let $\CS$ be a constant specification for ${\sf L}$, and let
	$
	\system = (\runs, S, R_1,\ldots,R_\numberofagents, \evidence_1\ldots,\evidence_\numberofagents, \valuation)
	$ 
	be the $\chi$-canonical interpreted system for~$\LPLTLp_\CS$. For every formula $\psi \in  \Subf^+(\chi)$, every run~$r$ in $\runs$, and every $n \in \N$ we have:
	\[
	(\system, r, n) \models \psi 
	\quad\text{if{f}}\quad
	\psi \in r(n).
	\]
\end{lemma}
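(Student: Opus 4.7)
The plan is to proceed by induction on the structure of $\psi$. Throughout, the running invariant is that for any $\psi' \in \Sub^+(\chi)$ arising in the argument, membership $\psi' \in r(n)$ coincides with $\psi' \in \Gamma_n$ for the full MCS $\Gamma_n$ with $r(n) = \overline{\Gamma_n}$ supplied by Lemma~\ref{lem:characterization of MCS-X}, so that item~4 of Lemma~\ref{lem: Facts about MCS-chi} may be invoked to close $r(n)$ under provable implications whose target remains in $\Sub^+(\chi)$. The base cases $P$ and $\bot$ follow immediately from the definition of $\valuation$ and from consistency, and the propositional case $\phi_1 \to \phi_2$ is routine once one observes that $\phi_1, \phi_2 \in \Sub^+(\chi)$.

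The single-step temporal cases exploit condition~1 of acceptability (Definition~\ref{Def:acceptable sequence}), which gives $r(n) R_\lnext r(n+1)$ for every $n$. For $\lnext\phi$, Lemma~\ref{lem:R-next}(1) equates $\lnext\phi \in r(n)$ with $\phi \in r(n+1)$, matching the semantic clause via the induction hypothesis. For $\wprevious\phi$ with $n > 0$, Lemma~\ref{lem:R-next}(3) applied to $r(n-1) R_\lnext r(n)$ is used identically. For $n = 0$ the semantic clause is vacuous; on the syntactic side, condition~3 of acceptability yields $\wprevious\bot \in r(0)$, and since $\vdash_\CS \wprevious\bot \to \wprevious\phi$ (from $\vdash_\CS \bot \to \phi$ via $\prevnecrule$ and $\prevkax$), item~4 of Lemma~\ref{lem: Facts about MCS-chi} places $\wprevious\phi$ in $r(0)$.

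The until and since cases are the core work. For $\phi_1 \luntil \phi_2 \in r(n)$, condition~2 of acceptability hands over a witness $m \geq n$ with $\phi_2 \in r(m)$ and $\phi_1 \in r(k)$ for $n \leq k < m$; the induction hypothesis then concludes the forward direction. Conversely, given the semantic witness $m$, the induction hypothesis puts $\phi_2 \in r(m)$ and $\phi_1 \in r(k)$; from $\phi_2$ and $\utwoax$ one obtains $\phi_1 \luntil \phi_2 \in r(m)$, and a finite backward induction then propagates this through $r(m-1), \ldots, r(n)$ using Lemma~\ref{lem:R-next}(1) at each step (to transport $\lnext(\phi_1 \luntil \phi_2)$ across the edge) together with $\utwoax$ and the membership of $\phi_1$. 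The $\lsince$ case is symmetric, with one subtlety due to the bounded past. For the forward direction, starting from $\phi_1 \lsince \phi_2 \in r(n)$, one unfolds $\stwoax$ repeatedly: at each stage either $\phi_2$ is already present (providing the witness $m$) or $\phi_1 \land \sprevious(\phi_1 \lsince \phi_2)$ is present. In the latter case one must be at an index $k > 0$, since Lemma~\ref{lem:initial state does not contain strong previous formulas} forbids $\sprevious$-formulas at initial states; Lemma~\ref{lem:R-next}(2) then pushes $\phi_1 \lsince \phi_2$ back to $r(k-1)$, and the descent terminates in at most $n$ steps because $r(0)$ is initial, forcing $\phi_2$ to appear there. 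The backward direction is dual: $\phi_1 \lsince \phi_2$ is lifted forward from $r(m)$ up to $r(n)$ via $\stwoax$ and Lemma~\ref{lem:R-next}(2).

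For the justification case $\jbox{t}_i\phi$, the backward direction is immediate: $\phi \in \evidence_i(r(n), t)$ unfolds by the definition of $\evidence_i$ to $\jbox{t}_i\phi$ being in the underlying MCS, whence $\jbox{t}_i\phi \in r(n)$ since $\jbox{t}_i\phi \in \Sub^+(\chi)$. The forward direction requires both conjuncts of the semantic clause: $\phi \in \evidence_i(r(n), t)$ is immediate from the definition, while the accessibility condition is verified by noting that $R_i(r(n), r'(n'))$ forces $\phi$ into the underlying MCS of $r'(n')$, hence $\phi \in r'(n')$ (since $\phi \in \Sub^+(\chi)$), and then the induction hypothesis applies at $(r', n')$; here Corollary~\ref{cor:acceptable sequence starts with Gamma} ensures that every $R_i$-accessible $\overline{\Delta} \in \MCS_\chi$ is realized as $r'(n')$ on some acceptable run. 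The main obstacle is the $\lsince$ forward case: it has no pre-made acceptability clause to invoke (unlike $\luntil$), and its termination rests essentially on the finiteness of the past together with Lemma~\ref{lem:initial state does not contain strong previous formulas}.
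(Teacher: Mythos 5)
Your proof is correct and follows essentially the same route as the paper's: structural induction with Lemma~\ref{lem:R-next} for the $\lnext$/$\wprevious$ cases, acceptability condition~2 for the $\luntil$ witness, the $\utwoax$/$\stwoax$ unfolding propagated along $R_\lnext$ for the converse directions, Lemma~\ref{lem:initial state does not contain strong previous formulas} together with condition~3 for termination in the bounded past, and the canonical definitions of $\evidence_i$ and $R_i$ for justification assertions. The only cosmetic difference is your appeal to Corollary~\ref{cor:acceptable sequence starts with Gamma} in the $\jbox{t}_i\phi$ case, which is unnecessary (truth need only be verified at points actually occurring in $\runs$) but harmless.
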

\begin{proof}
	As usual, the proof is by induction on the structure of $\psi$.\linebreak Let $\RO{r(i-1)}{r(i)}{\overline{\Gamma_{i-1}}}{\overline{\Delta_{i}}}$, for $i \geq 1$. We only show the following cases:
	\begin{itemize}
		\item $\psi = \lnext \phi$.
		
		$(\system, r, n) \models \lnext \phi$ if{f} $(\system, r, n+1) \models \phi$, by the induction hypothesis, if{f} $\phi \in r(n+1)$ if{f} $\phi \in \Delta_{n+1}$, by Lemma \ref{lem:R-next}, if{f} $\lnext \phi \in \Gamma_{n}$ if{f} $\lnext \phi \in r(n)$.
		
		\item $\psi = \wprevious \phi$.
		
		$(\Rightarrow)$ Suppose that $(\system, r, n) \models \wprevious \phi$ and $\wprevious \phi \not\in r(n)$.  Then $n=0$ or $(\system, r, n-1) \models  \phi$.
		
		\begin{itemize}
			\item Suppose $n=0$. Since $r(0)$ is initial, $\wprevious \bot \in r(0)$. Since  $\vdash_{\mathsf{L}_\CS} \wprevious \bot \rightarrow \wprevious \phi$, by Lemma \ref{lem: Facts about MCS-chi}, we get $\wprevious\phi \in r(0)$. The latter clearly contradicts the assumption  $\wprevious \phi \not\in r(0)$.
			
			\item Suppose $n>0$ and $(\system, r, n-1) \models  \phi$. Hence by the induction hypothesis $\phi \in r(n-1)$. Since $\RO{r(n-1)}{r(n)}{\overline{\Gamma_{n-1}}}{\overline{\Delta_{n}}}$, we have $\phi \in \Gamma_{n-1}$. Hence, by Lemma \ref{lem:R-next}, $\sprevious \phi \in \Delta_n$. By axiom $\swprevax$, we get   $\wprevious \phi \in \Delta_n$, which is a contradiction.
		\end{itemize}
		
		$(\Leftarrow)$ Suppose $\wprevious \phi \in r(n)$ and  $n>0$.  Since $\RO{r(n-1)}{r(n)}{\overline{\Gamma_{n-1}}}{\overline{\Delta_{n}}}$, we have $\wprevious \phi \in \Delta_{n}$, by Lemma \ref{lem:R-next}, we get $\phi \in \Gamma_{n-1}$. By the induction hypothesis, $(\system, r, n-1) \models  \phi$, and hence $(\system, r, n) \models \wprevious \phi$. 

		\item $\psi = \psi_1 \luntil \psi_2$.
		
		$(\Rightarrow)$ If $(\system, r, n) \models  \psi_1 \luntil \psi_2$, then $(\system, r, m) \models  \psi_2$ for some $m \geq n$, and $(\system, r, k) \models  \psi_1$ for all $k$ with $n \leq k < m$. By the induction hypothesis we get $\psi_2 \in r(m)$, and $\psi_1 \in r(k)$ for all~$k$ with $n \leq k < m$. We have to show $\psi_1 \luntil \psi_2 \in r(n)$, which follows by induction on $m$ as follows:
		\begin{itemize}
			\item
			Base case $m=n$. Since $\psi_2 \in r(n) = r(m)$ and $\vdash_{\mathsf{L}_\CS} \psi_2 \rightarrow (\psi_1 \luntil \psi_2)$, by Lemma \ref{lem: Facts about MCS-chi}, we obtain $\psi_1 \luntil \psi_2 \in r(n)$. 
			
			\item
			Suppose $m > n$.  It follows from the induction hypothesis that $\psi_1 \luntil \psi_2 \in r(n+1)$. Since $\RO{r(n)}{r(n+1)}{ \overline{\Gamma_n}}{ \overline{\Delta_{n+1}}}$, and hence $\psi_1 \luntil \psi_2 \in \Delta_{n+1}$. Thus, by Lemma \ref{lem:R-next}, $\lnext (\psi_1 \luntil \psi_2) \in \Gamma_n$. Now suppose towards a contradiction that $\psi_1 \luntil \psi_2 \not\in \Gamma_n$. Hence $\neg(\psi_1 \luntil \psi_2) \in \Gamma_n$. By axiom $(\luntil 2)$, 
			\[\vdash_{\mathsf{L}_\CS}  \neg (\psi_1 \luntil \psi_2) \rightarrow [\neg \psi_2 \wedge (\neg \psi_1 \vee \neg \lnext (\psi_1 \luntil \psi_2))], \]
			and thus 
			\[\vdash_{\mathsf{L}_\CS}  \neg (\psi_1 \luntil \psi_2) \wedge \psi_1 \rightarrow \neg \lnext (\psi_1 \luntil \psi_2), \]
			Thus, $\neg \lnext (\psi_1 \luntil \psi_2) \in \Gamma_n$, which is a contradiction. Thus, $\psi_1 \luntil \psi_2 \in \Gamma_n$ and hence $\psi_1 \luntil \psi_2  \in r(n)$.
			
		\end{itemize}
		
		$(\Leftarrow)$ If $\psi_1 \luntil \psi_2 \in r(n)$, then since $(r(0), r(1), \ldots, r(n),r(n+1),\ldots)$ is an acceptable sequence there exists $m \geq n$ such that $\psi_2 \in r(m)$, and $\psi_1 \in r(k)$ for all~$k$ with $n \leq k < m$. By the induction hypothesis we obtain 
		$(\system, r, m) \models  \psi_2$, and $(\system, r, k) \models  \psi_1$ for all $k$ with $n \leq k < m$. Thus 
		$(\system, r, n) \models  \psi_1 \luntil \psi_2$.
		
		\item $\psi = \psi_1 \lsince \psi_2$. 
		
		$(\Rightarrow)$ If $(\system, r, n) \models \psi_1 \lsince \psi_2$, then $(\system, r, m) \models \psi_2$ for some $m \leq n$, and $(\system, r, k) \models \psi_1$ for all $k$ with $m < k \leq n$. By the induction hypothesis, $\psi_2 \in r(m)$, and $\psi_1 \in r(k)$ for all $k$ with $m < k \leq n$. We want to show that $\psi_1 \lsince \psi_2 \in r(n)$. We prove it by induction on $m$ as follows.
		\begin{itemize}
			\item 		Base case $m=n$. Since $\psi_2 \in r(n)=r(m)$ and $\vdash_{\mathsf{L}_\CS}  \psi_2 \rightarrow (\psi_1 \lsince \psi_2)$, we obtain $\psi_1 \lsince \psi_2 \in r(n)$. 
			
			\item 		Suppose $m < n$. Since $\RO{r(n-1)}{r(n)}{\overline{\Gamma_{n-1}}}{\overline{\Delta_{n}}}$, it follows from the induction hypothesis that $\psi_1 \lsince \psi_2 \in r(n-1)$, and hence $\psi_1 \lsince \psi_2 \in \Gamma_{n-1}$. Thus, by Lemma \ref{lem:R-next}, $\sprevious (\psi_1 \lsince \psi_2) \in \Delta_n$. Now suppose towards a contradiction that $\psi_1 \lsince \psi_2 \not\in \Delta_n$. Hence $\neg(\psi_1 \lsince \psi_2) \in \Delta_n$. By axiom  $\stwoax$,
			\[\vdash_{\mathsf{L}_\CS}  \neg (\psi_1 \lsince \psi_2) \rightarrow [\neg \psi_2 \wedge (\neg \psi_1 \vee \neg \sprevious (\psi_1 \lsince \psi_2))], \]
			and thus 
			\[\vdash_{\mathsf{L}_\CS}  \neg (\psi_1 \lsince \psi_2) \wedge \psi_1 \rightarrow \neg \sprevious (\psi_1 \lsince \psi_2), \]
			Thus, $\neg \sprevious (\psi_1 \lsince \psi_2) \in \Delta_n$, which is a contradiction.
			
			$(\Leftarrow)$ Suppose $\psi_1 \lsince \psi_2 \in r(n)$. By Lemma \ref{lem: MCS temporal properties}, there is $m \leq n$ such that $\psi_2 \in r(m)$, and $\psi_1 \in r(k)$ for all $k$ with $m < k \leq n$. By the induction hypothesis, $(\system, r, m) \models \psi_2$ and $(\system, r, k) \models \psi_1$ for all $k$ with $m < k \leq n$, and thus $(\system, r, n) \models \psi_1 \lsince \psi_2$ as desired. 
		\end{itemize}

		\item $\psi = \jbox{t}_\agent \phi$.

		$(\Rightarrow)$ If $(\system, r, n) \models \jbox{t}_\agent \phi$, then $\phi \in \evidence_\agent (r(n),t)$. Thus, by the definition of $\evidence_\agent$, $\jbox{t}_\agent \phi \in \Gamma$, where $r(n) = \overline{\Gamma}$, and hence $\jbox{t}_\agent \phi \in r(n)= \overline{\Gamma_n}$. 
		
		$(\Leftarrow)$ If $\jbox{t}_\agent \phi \in r(n)$, then $\jbox{t}_\agent \phi \in \bigcap_{r(n) = \overline{\Gamma}} \Gamma$. Thus, $\phi \in \evidence_\agent (r(n),t)$. 
		Now suppose that $r(n) R_\agent  r'(n')$ and let $r'(n') = \overline{\Delta}$, for some $\Delta \in \MCS$. By the definition of $R_\agent$, there is $\Gamma \in \MCS$ such that $r(n) = \overline{\Gamma}$ and  
		$\{ \phi \ |\  \jbox{t}_\agent \phi \in \Gamma   \text{ for some $t$}\} \subseteq \Delta$. We find $\phi \in \Delta$. By the induction hypothesis, we get $(\system, r', n') \models \phi$. Since $r'$ and $n'$ were arbitrary, we conclude that $(\system, r, n) \models \jbox{t}_\agent \phi$. 
	\end{itemize}
\end{proof}

\begin{theorem}[Completeness]\label{thm:Completeness-interpreted systems}
	Let ${\sf L} = \LPLTLp$ and let $\CS$ be a constant specification for ${\sf L}$. For each formula $\chi$,
	\[
	\models_{{\sf L}_\CS}  \chi  \quad\text{implies}\quad  \vdash_{{\sf L}_\CS} \chi.
	\]
\end{theorem}
\begin{proof}
	Suppose that $\not\vdash_{{\sf L}_\CS} \chi$. Thus, $\{ \neg \chi\}$ is an ${\sf L}_\CS$-consistent set.
	Therefore, there exists $\Gamma \in \MCS$ with $\neg \chi \in \Gamma$. Let $\overline{\Gamma} = \Gamma \cap  \Subf^+(\chi)$.
	By Corollary \ref{cor:acceptable sequence starts with Gamma}, there is an acceptable sequence containing $\overline{\Gamma}$, say $(X_{0},X_1,\ldots)$ where $\overline{\Gamma} = X_n$, for some $n \geq 0$. Define the run $r$ as follows $r(i) \colonequals X_i$. The run $r$ is in the system $\runs$ of the $\chi$-canonical model $\system$ for ${\sf L}_\CS$. 
	Since $\chi \not \in r(n)$, by the Truth Lemma, $(\system, r, n) \not\models \chi$. Therefore, $\not\models_{{\sf L}_\CS} \chi$.  
\end{proof}

\begin{theorem}[Completeness]\label{thm:Weak Completeness-interpreted systems}
	Let ${\sf L} = \LPLTLp$ and let $\CS$ be a constant specification for ${\sf L}$. For each formula $\chi$ and finite set of formulas $T$,
	\[
	T \models_{{\sf L}_\CS}  \chi  \quad\text{implies}\quad  T \vdash_{{\sf L}_\CS} \chi.
	\]
\end{theorem}
\begin{proof}
	Suppose that $T \not \vdash_{{\sf L}_\CS} \chi$. Thus, $\not \vdash_{{\sf L}_\CS} \bigwedge T \to \chi$. By Theorem \ref{thm:Completeness-interpreted systems}, there is an interpreted system $\system = (\runs, \ldots)$, $r \in \runs$, and $n \in \N$ such that $(\system, r, n) \models \bigwedge T$ and $(\system, r, n) \not\models \chi$. Therefore, $T \not \models_{{\sf L}_\CS} \chi$.  
\end{proof}

\section{Another semantics for $\LPLTLp$}
\label{sec:Mkrtychev models aditional principles}

In this section we present another semantics based on Mkrtychev models \cite{Mkr97LFCS} for $\LPLTLp$. These models are indeed the interpreted systems with singleton system of runs.

\begin{definition}\label{def:M-models LPLTLp}
 An $\LPLTLp_\CS$-model is a tuple
$
\M = (r,S,\evidence_1\ldots,\evidence_\numberofagents, \valuation)
$ 
where
\begin{enumerate}
	\item  $S$ is a non-empty set of states;
	\item $r : \mathbb{N} \to S$ is a run on $S$;
	\item $\evidence_\agent$ is a $\CS$-evidence function for agent~$\agent$, for $1 \leq \agent \leq \numberofagents$, that satisfies conditions 2--5 of Definition \ref{def:evidence function for LPLTL};
	\item $\valuation: S \to \powerset(\Prop)$ is a valuation.
\end{enumerate}
\end{definition}

Given an $\LPLTLp_\CS$-model $\M = (r,S,\evidence_1\ldots,\evidence_\numberofagents, \valuation)$
and  $n \in \N$, we define truth of a formula $\phi$ in $\M$ at state $r(n)$ inductively by 
\begin{align*}
(\M, r(n)) &\entails P \text{ iff } P \in \valuation(r(n)) \, ,\\
(\M, r(n)) &\not\entails \lfalse \, ,\\
(\M, r(n)) &\entails \phi \limplies \psi \text{ iff } (\M, r(n)) \not\entails \phi \text{ or } (\M, r(n)) \entails \psi \, ,\\
(\M, r(n)) &\entails \wprevious \phi \text{ iff $n=0$ or } (\M, r(n-1)) \entails \phi \, ,\\
(\M, r(n)) &\entails \lnext \phi \text{ iff } (\M, r(n+1)) \entails \phi \, ,\\
(\M, r(n)) &\entails \phi \lsince \psi \text{ iff there is some } m \leq n \text{ such that } (\M,  r(m)) \entails \psi \\ & \qquad\qquad \text{ and } (\M, r(k)) \entails \phi \text{ for all $k$ with } m < k \leq n \, ,\\
(\M, r(n)) &\entails \phi \luntil \psi \text{ iff there is some } m \geq n \text{ such that } (\M, r(m)) \entails \psi \\ & \qquad\qquad \text{ and } (\M, r(k)) \entails \phi \text{ for all $k$ with } n \leq k < m \, ,\\
(\M, r(n)) &\entails \jbox{t}_\agent \phi \text{ iff }   \phi \in \evidence_\agent(r(n),t ) \text { and } (\M, r(n)) \entails \phi.
\end{align*}

We write $\M \entails \phi$ if $(\M, r(n)) \entails \phi$ for all $n \in \N$. 

From the above definitions it follows that: 
\begin{align*}
(\M, r(n)) &\entails \leventually \phi \text{ iff } (\M, r(m)) \entails \phi \text{ for some } m\geq n \, ,\\
(\M, r(n)) &\entails \lalways \phi \text{ iff } (\M, r(m)) \entails \phi \text{ for all } m\geq n \, ,\\
(\M, r(n)) &\entails \lonce \phi \text{ iff } (\M, r(m)) \entails \phi \text{ for some } m \leq n \, ,\\
(\M, r(n)) &\entails \lsofar \phi \text{ iff } (\M, r(m)) \entails \phi \text{ for all } m \leq n \, ,\\
(\M, r(n)) &\entails \sprevious \phi \text{ iff $n>0$ and } (\M, r(n-1)) \entails \phi  \, ,\\
\end{align*}

\begin{definition}
	Given a set of formulas $\Gamma$ and a formula $\phi$, the (local) consequence relation is defined as follows: $\Gamma \Vdash_\CS \phi$ iff for all 
	$\LPLTLp_\CS$-models $\M = (r , \ldots)$, and for all $ n \in \N$, if $(\M, r(n)) \entails \psi$ for all $\psi \in \Gamma$, then $(\M, r(n)) \entails \phi$.
\end{definition}

\begin{remark}
	As in Remark \ref{rem:time is n-interpreted systems}, if we let ``$\ltime=m$" abbreviate $\sprevious^m \wprevious \bot$ and $\ltrue_m (\phi)$ abbreviate $\ltime = m \to \phi$, then we have
	\[
	\M \models \ltrue_m (\phi) \text{ iff } \M, m \models \phi.
	\]
\end{remark}


\begin{theorem}[Soundness]
		For each formula $\phi$ and finite set of formulas $\Gamma$,
	\[
	\Gamma\vdash_\CS \phi  \quad\text{implies}\quad  \Gamma \Vdash_\CS  \phi.
	\]
\end{theorem}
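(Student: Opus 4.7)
The plan is to proceed by induction on the derivation $\Gamma \vdash_\CS \phi$, exactly paralleling the soundness proof for interpreted systems. The derivation rules split into the axioms (which must be validated pointwise in every $\LPLTLp_\CS$-model), modus ponens (trivial), the iterated axiom necessitation rule, and the four necessitation rules $\nextnecrule, \prevnecrule, \alwaysnecrule, \sofarnecrule$ (which may only be applied to derivations from the empty set of assumptions, so it suffices to show that $\vdash_\CS \chi$ implies $\M \models \chi$ for every model $\M$).

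First I would verify the axioms. The propositional tautologies and modus ponens are standard. The future-temporal axioms $\nextkax, \alwayskax, \funax, \indax, \uoneax, \utwoax$ are validated exactly as in linear temporal logic on $(\N,<)$, since the truth clauses for $\lnext, \luntil$ are identical to the standard ones. For the past axioms, the key cases are: $\prevkax$ and $\swprevax$ are immediate from the clauses for $\wprevious$ and $\sprevious$ together with the convention that $\wprevious \phi$ is vacuously true at time $0$. The axioms $\fpax$ and $\pfax$ express the relation $n+1 \mapsto n$ and $n \mapsto n+1$ and reduce to elementary checks on $\N$. The initiality axiom $\initialax$, i.e.\ $\diamondminus \wprevious \bot$, is validated by taking $m = 0$ in the truth clause for $\diamondminus$: since $\wprevious \bot$ holds at time $0$ vacuously, $\diamondminus \wprevious \bot$ holds at every $(r,n)$. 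The induction axiom $\sofarindax$ is proved by ordinary arithmetic induction down from $n$ to $0$. The since axioms $\soneax, \stwoax$ are immediate from the clause for $\lsince$.

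For the justification axioms, $\refax$ holds because the truth clause for $\jbox{t}_i \phi$ directly demands $(\M, r(n)) \models \phi$; $\appax, \sumax, \posintax$ are validated by the corresponding clauses on the $\CS$-evidence function $\evidence_i$ (items 3--5 of Definition \ref{def:evidence function for LPLTL}), together with the fact that the truth conditions for $\jbox{\cdot}_i$ in the Mkrtychev-style semantics only involve the evidence-function membership and the truth of the justified formula at the current point, so no accessibility-based propagation is needed. For the iterated axiom necessitation rule, I would show by induction on $n$ that every formula of the form $\jbox{c_{j_n}}_{i_n}\cdots\jbox{c_{j_1}}_{i_1} \phi \in \CS$ is true at every state: the base case $n=1$ follows from the constant specification condition on $\evidence_{i_1}$ together with the fact that $\phi$ is an axiom instance and therefore already shown valid; the inductive step uses downward closure of $\CS$ together with the constant specification condition for $\evidence_{i_n}$.

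Finally, for the necessitation rules, the restriction that they are applied only to derivations without assumptions lets me use the inductive hypothesis in its global form $\M \models \chi$; then $\M \models \lnext \chi$, $\M \models \wprevious \chi$, $\M \models \lalways \chi$, $\M \models \boxminus \chi$ all follow immediately because validity at every point is preserved under shifting in either direction on $\N$. The main obstacle, modest as it is, will be bookkeeping the iterated axiom necessitation case correctly so that the inductive hypothesis applies to the inner formula $\jbox{c_{j_{n-1}}}_{i_{n-1}}\cdots\jbox{c_{j_1}}_{i_1}\phi$ (which lies in $\CS$ by downward closure) rather than requiring a separate soundness argument at each nesting level.
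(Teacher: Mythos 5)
Your proposal is correct and matches the paper's intent: the paper states this soundness result without a written proof, treating it (as with the interpreted-systems version) as a straightforward induction on the derivation, which is exactly what you carry out, including the two points that actually need care --- the restriction of the necessitation rules to premise-free derivations and the induction on nesting depth for $\iteratedconstnecrule$ using downward closure of $\CS$.
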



\begin{definition}\label{def:canonical M-model} 
	Let  $(X_0, X_1, \ldots)$ be an acceptable sequence of elements of $\MCS_\chi$ for $\LPLTLp_\CS$. The $\chi$-canonical  model~$\M = (r,S,\evidence_1,\ldots,\evidence_\numberofagents, \valuation)$ for $\CS$
	with respect to $(X_0, X_1, \ldots)$
	is  defined as follows:
	\begin{enumerate}
		
		\item $S \colonequals \{X_0, X_1, \ldots\}$.
		\item $r(n) := X_n$.
		\item $\evidence_\agent(X_n,t) \colonequals \{ \phi \mid \jbox{t}_\agent \phi \in \bigcap \{ \Gamma \in \MCS \mid X = \overline{\Gamma} \}$.
		\item $\valuation(X_n) \colonequals \Prop\cap X_n$.
		
	\end{enumerate}
	
\end{definition}

%
%
%
%
%

\begin{lemma}
	The $\chi$-canonical model $\M = (r,S,\evidence_1,\ldots,\evidence_\numberofagents, \valuation)$ for $\CS$  with respect to an acceptable sequence $(X_0, X_1, \ldots)$ is an $\LPLTLp_\CS$-model.
\end{lemma}
\begin{proof}
	Similar to the proof of Lemma \ref{lem: canonical interpreted model is a model}.\qed
\end{proof}

\begin{lemma}[Truth Lemma]
	Let $\M = (r,S,\evidence_1,\ldots,\evidence_\numberofagents, \valuation)$ be the $\chi$-canonical model for $\CS$ with respect to an acceptable sequence $(X_0, X_1, \ldots)$. For every formula $\psi \in \Sub^+(\chi)$, and every $n \in \N$ we have:
	\[ 
	(\M, r(n)) \models \psi 
	\quad\text{if{f}}\quad
	\psi \in r(n). 
	\]
\end{lemma}
\begin{proof} 
As usual, the proof is by induction on the structure of $\psi$. Let \linebreak $\RO{r(i-1)}{r(i)}{\overline{\Gamma_{i-1}}}{\overline{\Delta_{i}}}$, for $i \geq 1$.
Since the proof is similar to the proof of Lemma \ref{lem:Truth Lemma LPLTL}, we only show the following case:
\begin{itemize}
	\item $\psi = \jbox{t}_\agent \phi$.
	
	$(\Rightarrow)$ If $(\M, r(n)) \models \jbox{t}_\agent \phi$, then $\phi \in \evidence_\agent (r(n),t)$. Thus, by the definition of $\evidence_\agent$, $\jbox{t}_\agent \phi \in \Gamma$, where $r(n) = \overline{\Gamma}$, and hence $\jbox{t}_\agent \phi \in r(n)= \overline{\Gamma_n}$. 
	
	$(\Leftarrow)$ If $\jbox{t}_\agent \phi \in r(n)$, then by axiom $\refax$, we have  $\phi \in r(n)$. By the induction hypothesis, we get $(\M, r(n)) \models \phi$. We conclude $(\M, r(n)) \models \jbox{t}_\agent \phi$.
	\qed
\end{itemize}
\end{proof}

\begin{theorem}[Completeness]\label{thm:Completeness M-models}
	\label{thm:completeness LPLTLp M-models}
	If	$\M\entails \phi$ for all $\LPLTLp_\CS$-models $\M$, then  $\proves_\CS \phi$.
\end{theorem}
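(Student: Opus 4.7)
The plan is to argue by contraposition using the $\chi$-canonical $\LPLTLp_\CS$-model built on an acceptable sequence, exactly in the spirit of the completeness proof for interpreted systems (Theorem~\ref{thm:Completeness-interpreted systems}), but now invoking the Truth Lemma proved for Mkrtychev-style models immediately above.

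First I would assume $\not\vdash_\CS \phi$. Then $\{\neg \phi\}$ is $\CS$-consistent, so by the usual Lindenbaum construction it extends to some $\Gamma \in \MCS$ with $\neg \phi \in \Gamma$. Setting $\overline{\Gamma} := \Gamma \cap \Sub^+(\phi)$, Lemma~\ref{lem:characterization of MCS-X} ensures $\overline{\Gamma} \in \MCS_\phi$, and in particular $\phi \notin \overline{\Gamma}$.

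Next, invoke Corollary~\ref{cor:acceptable sequence starts with Gamma} to obtain an acceptable sequence $(\overline{\Gamma_0}, \overline{\Gamma_1}, \ldots)$ containing $\overline{\Gamma}$; say $\overline{\Gamma_n} = \overline{\Gamma}$ for some $n \geq 0$. Form the $\phi$-canonical $\LPLTLp_\CS$-model $\M$ with respect to this acceptable sequence as in Definition~\ref{def:canonical M-model}; by the lemma preceding the Truth Lemma it is indeed an $\LPLTLp_\CS$-model. Since $\phi \in \Sub^+(\phi)$ and $\phi \notin \Gamma_n$, the Truth Lemma gives $(\M, r(n)) \not\models \phi$, so $\M \not\models \phi$. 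This contradicts the hypothesis $\M \models \phi$ for all $\LPLTLp_\CS$-models $\M$, and we conclude $\vdash_\CS \phi$.

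There is no real obstacle here: all of the hard work has already been carried out, namely the construction of acceptable sequences (Lemmas~\ref{lem:until-sequence}, \ref{lem:since-sequence}, \ref{lem:finite seq to acceptable seq} and Corollary~\ref{cor:acceptable sequence starts with Gamma}), the verification that the canonical model satisfies the conditions on $\CS$-evidence functions, and the Truth Lemma for Mkrtychev models. The only delicate point to mention is that, unlike in the interpreted-systems setting, we cannot choose the run freely after fixing the acceptable sequence, since the system contains a single run; but this is fine because the acceptable sequence itself is what dictates the single run $r$ in Definition~\ref{def:canonical M-model}, and the Truth Lemma is stated precisely for that run at index $n$.
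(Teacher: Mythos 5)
Your proposal is correct and follows essentially the same route as the paper's own proof: assume $\not\vdash_\CS \phi$, extend $\{\neg\phi\}$ to $\Gamma \in \MCS$, take the acceptable sequence containing $\overline{\Gamma}$ from Corollary~\ref{cor:acceptable sequence starts with Gamma}, build the $\phi$-canonical model with respect to it, and apply the Truth Lemma at the index $n$ with $\overline{\Gamma_n} = \overline{\Gamma}$. Your closing remark about the single run being dictated by the acceptable sequence is exactly the (implicit) point in Definition~\ref{def:canonical M-model}, so nothing is missing.
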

\begin{proof}
	Suppose that $\LPLTLp_\CS \not\proves \phi$. Thus, $\{ \neg \phi\}$ is a $\CS$-consistent set.
	Therefore, there exists $\Gamma \in \MCS$ with $\neg \phi \in \Gamma$. Let $\overline{\Gamma} = \Gamma \cap \Sub^+(\phi)$.
	By Corollary \ref{cor:acceptable sequence starts with Gamma}, there is an acceptable sequence containing $\overline{\Gamma}$, say $(\overline{\Gamma_{0}},\overline{\Gamma_1}\ldots,\overline{\Gamma_n}, \overline{\Gamma_{n+1}}, \ldots)$ where $n \geq 0$ and  $\overline{\Gamma_n} = \overline{\Gamma}$. 
	Construct the $\phi$-canonical model $\M$ for $\CS$ with respect to this acceptable sequence.
	Since $\phi \not \in \Gamma_n$, by the Truth Lemma, $(\M, r(n)) \not\models \phi$. Therefore, $\M \not\models \phi$.\qed
\end{proof}

\begin{theorem}[Completeness]\label{thm:Weak Completeness M-models}
	For each formula $\phi$ and finite set of formulas $\Gamma$,
	\[
	\Gamma \Vdash_\CS  \phi  \quad\text{implies}\quad  \Gamma \vdash_\CS \phi.
	\]
\end{theorem}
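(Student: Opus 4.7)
The plan is to reduce this strong completeness theorem to the weak completeness theorem (Theorem \ref{thm:Completeness M-models}) by using the Deduction Theorem, following exactly the pattern used for the analogous result in the interpreted-systems semantics (Theorem \ref{thm:Weak Completeness-interpreted systems}). I will prove the contrapositive: assume $\Gamma \nvdash_\CS \phi$ and construct an $\LPLTLp_\CS$-model witnessing $\Gamma \nVdash_\CS \phi$.

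First I will invoke the Deduction Theorem. Since $\Gamma$ is finite and the equivalence
\[
\Gamma \vdash_\CS \phi \quad\text{iff}\quad \vdash_\CS \bigwedge \Gamma \to \phi
\]
is noted immediately after the axiomatization, the assumption $\Gamma \nvdash_\CS \phi$ yields $\nvdash_\CS \bigwedge \Gamma \to \phi$. By the weak completeness theorem for $\LPLTLp_\CS$ (Theorem \ref{thm:Completeness M-models}), there exists an $\LPLTLp_\CS$-model $\M = (r,S,\evidence_1,\ldots,\evidence_\numberofagents,\valuation)$ such that $\M \nvDash \bigwedge \Gamma \to \phi$. Unwinding the definition of $\M \vDash$, this means there is some $n \in \N$ with $(\M, r(n)) \nvDash \bigwedge \Gamma \to \phi$.

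By the semantic clause for implication, this gives $(\M, r(n)) \vDash \bigwedge \Gamma$ and $(\M, r(n)) \nvDash \phi$. In particular, $(\M, r(n)) \vDash \psi$ for every $\psi \in \Gamma$, while $\phi$ fails at the point $r(n)$. Thus $\Gamma \nVdash_\CS \phi$, completing the contrapositive.

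There is essentially no main obstacle here: all substantial work has been absorbed into Theorem \ref{thm:Completeness M-models} (whose proof used the canonical model built from an acceptable sequence) and into the Deduction Theorem. The only point that requires any care is ensuring that the translation $\Gamma \vdash_\CS \phi \Leftrightarrow \vdash_\CS \bigwedge \Gamma \to \phi$ is legitimate here; this is justified because $\Gamma$ is finite by hypothesis, so $\bigwedge \Gamma$ is a bona fide formula of $\LPLTLp$, and the Deduction Theorem for $\LPLTLp_\CS$ is precisely the result recalled earlier in the paper.
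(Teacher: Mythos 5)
Your proposal is correct and follows essentially the same route as the paper's own proof: reduce $\Gamma \vdash_\CS \phi$ to $\vdash_\CS \bigwedge\Gamma \to \phi$ via the Deduction Theorem for finite $\Gamma$, then apply Theorem~\ref{thm:Completeness M-models} to obtain an $\LPLTLp_\CS$-model and a point $r(n)$ where $\bigwedge\Gamma$ holds and $\phi$ fails, refuting $\Gamma \Vdash_\CS \phi$. No gaps.
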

\begin{proof}
	Suppose that $\Gamma \not\vdash_\CS \phi$. Thus, $\not \vdash_\CS \bigwedge \Gamma \to \phi$. By Theorem \ref{thm:Completeness M-models}, there is an $\LPLTLp_\CS$-model $\M = (r, \ldots)$,  and $n \in \N$ such that $(\M, r(n)) \models \bigwedge \Gamma$ and $(\M, r(n)) \not\models \phi$. Therefore, $\Gamma \not \Vdash_\CS \phi$. \qed
\end{proof}
\section{Connecting principles}
\label{sec:Connecting principles}

In $\LPLTLp_\CS$, epistemic and temporal properties do not interact. In this section we study some principles that create a connection between justifications and temporal modalities.
We assume the language for terms to be augmented in the obvious way.

\begin{enumerate}
	\item $\lalways \jbox{t}_\agent \phi \limplies \jbox{\tgeneralize t}_\agent \lalways  \phi$ \hfill \generalizeprinciple
	\item $\jbox{t}_\agent \lalways \phi \limplies \lalways \jbox{\talwaysaccess t}_\agent \phi$ \hfill \alwaysaccessprinciple
	\item $\jbox{t}_\agent \lalways \phi \limplies \jbox{\tnextaccess t}_\agent \lnext \phi$ \hfill \nextaccessprinciple
	\item $\jbox{t}_\agent \lnext \phi \limplies \lnext \jbox{\tnext t}_\agent \phi$ \hfill \nextrightshiftprinciple
	\item $\lnext \jbox{t}_\agent \phi \limplies \jbox{\tprev t}_\agent \lnext \phi$ \hfill \nextleftshiftprinciple
	\vspace*{0.1cm}
	\item $\lsofar \jbox{t}_\agent \phi \limplies \jbox{\thenceforthgeneralize t}_\agent \lsofar  \phi$ \hfill \pastgeneralizeprinciple
	\item $\jbox{t}_\agent \lsofar \phi \limplies \lsofar \jbox{\thenceforthaccess t}_\agent \phi$ \hfill \pastaccessprinciple
	\item $\jbox{t}_\agent \lsofar \phi \limplies \jbox{\tprevaccess t}_\agent \wprevious \phi$ \hfill \wprevaccessprinciple
	\item $\jbox{t}_\agent \wprevious \phi \limplies \wprevious \jbox{\twprevright t}_\agent \phi$ \hfill \wprevrightshiftprinciple
	\item $\jbox{t}_\agent \sprevious \phi \limplies \sprevious \jbox{\tsprevright t}_\agent \phi$ \hfill \sprevrightshiftprinciple
	\item $\sprevious \jbox{t}_\agent \phi \limplies \jbox{\tsprevleft t}_\agent \sprevious \phi$ \hfill \sprevleftshiftprinciple
\end{enumerate}

Principles 1--5 were first proposed by Bucheli in \cite{Bucheli15} from which the name of the axioms are also taken.\footnote{The principle $\jbox{t}_\agent \lnext \phi \limplies \lnext \jbox{\tnext t}_\agent \phi$ is called $\nextaccessprinciple$ in \cite{BucheliGhariStuder2017}.}
A few remarks on these principles are in order:

\begin{description}
	\item[\generalizeprinciple] This principle says that if you have a fixed piece of evidence that always supports a proposition, then you have evidence that this proposition is always true. The term operator $\tgeneralize$ converts permanent evidence for a proposition to evidence for knowing that this proposition is always true.
		
	\item[\alwaysaccessprinciple] This principle says that if you have evidence that a proposition is always true, then at every point in time you are able to access this information. The term operator $\talwaysaccess$ makes the evidence accessible in every future point in time. This principle is a counterpart of the axiom $K_i \lalways \phi \to \lalways K_i \phi$ in the logics of knowledge and time, which is valid in the interpreted systems with \emph{perfect recall} (where an agent retains the knowledge of previous times), but does not characterize it,  see \cite{HvdMV04}.
	
	\item[\nextaccessprinciple] This principle is similar to the valid formula $\lalways \phi \to \lnext \phi$ augmented by justifications. In fact, if you have evidence that a proposition is always true, then you have evidence that it is true tomorrow, and the term operator $\tnextaccess$ constructs such an evidence. 

	\item[\nextrightshiftprinciple] This principle says that agents do not forget evidence once they have gathered it and can ``take it with them''. The term operator $\tnext$ carries evidence through time. This principle is a counterpart of the axiom $K_i \lnext \phi \to \lnext K_i \phi$ in the logics of knowledge and time, which characterizes the  \emph{synchronous} systems (where each agent always knows the time) with perfect recall, see \cite{HvdMV04}.
	
	\item[\nextleftshiftprinciple] This principle implies some form of conditional prediction. The term operator $\tprev$ predicts future evidence for knowledge. This principle is a counterpart of the axiom $\lnext K_i \phi \to K_i \lnext \phi$ in the logics of knowledge and time, which characterizes the synchronous  systems with \emph{no learning} (where an agent's knowledge can not increase over time), see \cite{HvdMV04}.
\end{description}

The connecting principles involving past operators are the dual of those involving future operators, and thus the meaning of the term operators with subscript $P$ can be guessed straightforwardly. 

Given a logic ${\sf L}$ and a set of axioms $Ax$, by ${\sf L}(Ax)$ we denote the result of adding axioms  from $Ax$ to logic ${\sf L}$. In the rest of this section $Ax$ is an arbitrary set of the above connecting principles, i.e.

\begin{gather*}
Ax \subseteq \{ \generalizeprinciple, \alwaysaccessprinciple, \nextaccessprinciple, \nextrightshiftprinciple, \nextleftshiftprinciple, 
\pastgeneralizeprinciple,\\ \pastaccessprinciple, \wprevaccessprinciple, \wprevrightshiftprinciple, \sprevrightshiftprinciple, \sprevleftshiftprinciple \}.
\end{gather*}

In the following sections we introduce new axioms and allow $Ax$ to include the new axioms as well.

%
%
%
%

Let us show that a version of $\nextaccessprinciple$ is derivable from $\alwaysaccessprinciple$ and $\nextleftshiftprinciple$.

\begin{lemma}
	Let $Ax$ contains axioms $\alwaysaccessprinciple$ and $\nextleftshiftprinciple$. For every agent~$\agent$, formula~$\phi$ and term~$t$ there is a term~$s(t)$ such that
	\[
	\vdash_{\LPLTLp(Ax)_\emptyset} \jbox{t}_\agent \lalways \phi \limplies \jbox{s(t)}_\agent \lnext \phi.
	\]
\end{lemma}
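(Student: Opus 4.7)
The plan is to chain the two assumed principles with the already-established fact that $\lalways\phi\to\lnext\phi$ (item 2 of Lemma~\ref{lem:mix:1}), and to read off the witnessing term from the chain.

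First, I would apply $\alwaysaccessprinciple$ to get
\[
\vdash \jbox{t}_\agent\lalways\phi \limplies \lalways\jbox{\talwaysaccess t}_\agent\phi .
\]
Next, from $\lalways\psi\to\lnext\psi$ with $\psi := \jbox{\talwaysaccess t}_\agent\phi$ (again Lemma~\ref{lem:mix:1}(2)), propositional logic yields
\[
\vdash \lalways\jbox{\talwaysaccess t}_\agent\phi \limplies \lnext\jbox{\talwaysaccess t}_\agent\phi .
\]
Finally, $\nextleftshiftprinciple$ instantiated with the term $\talwaysaccess t$ and the formula $\phi$ gives
\[
\vdash \lnext\jbox{\talwaysaccess t}_\agent\phi \limplies \jbox{\tprev(\talwaysaccess t)}_\agent\lnext\phi .
\]
Chaining these three implications by transitivity of $\limplies$ (pure propositional reasoning plus $\mprule$) and setting $s(t) := \tprev(\talwaysaccess t)$ yields the desired
\[
\vdash \jbox{t}_\agent\lalways\phi \limplies \jbox{s(t)}_\agent\lnext\phi .
\]

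There is no real obstacle here: all three steps use only axioms, a previously proved lemma, and propositional reasoning, so no necessitation rules or constant specification axioms are invoked, which is consistent with derivability in $\LPLTLp(Ax)_\emptyset$. The only thing to double-check is that Lemma~\ref{lem:mix:1}(2) is applicable, which it is because the constant specification plays no role in its derivation (only $\mprule$, $\nextnecrule$, $\nextkax$, $\utwoax$, $\funax$ enter, all available in $\LPLTLp(Ax)_\emptyset$).
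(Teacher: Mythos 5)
Your proof is correct and is essentially identical to the paper's: both chain \alwaysaccessprinciple, item 2 of Lemma~\ref{lem:mix:1}, and \nextleftshiftprinciple\ by propositional reasoning and take $s(t) := \tprev\talwaysaccess t$. The only cosmetic slip is the remark that no necessitation rules are invoked, whereas Lemma~\ref{lem:mix:1}(2) does use \nextnecrule; as you note yourself, this is harmless since that rule is available independently of the (empty) constant specification.
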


\begin{proof}
	Construct the following proof in $\LPLTLp(Ax)_\emptyset$ where $Ax$ contains axioms $\alwaysaccessprinciple$ and $\nextleftshiftprinciple$.
	\begin{enumerate}
		\item $\jbox{t}_\agent \lalways \phi \limplies \lalways \jbox{\talwaysaccess t}_\agent \phi$, \hfill instance of axiom \alwaysaccessprinciple
		
		\item $\lalways \jbox{\talwaysaccess t}_\agent \phi \limplies \lnext \jbox{\talwaysaccess t}_\agent \phi$, \hfill Lemma \ref{lem:mix:1} item 2
		
		\item $\lnext \jbox{\talwaysaccess t}_\agent \phi \limplies \jbox{\tprev \talwaysaccess t}_\agent \lnext \phi$, \hfill instance of axiom \nextleftshiftprinciple
		
		\item $\jbox{t}_\agent \lalways \phi \limplies \jbox{\tprev\talwaysaccess t}_\agent \lnext \phi$. \hfill from 1--3 by propositional reasoning
	\end{enumerate}
Finally put $s(t) := \tprev\talwaysaccess t$.\qed

\end{proof}

\subsection{Semantics}

Now we present a semantics for $\LPLTLp(Ax)$ based on Mkrtychev models. In the next section these models will be extended to interpreted systems.

\begin{definition}\label{def: evidence functions M-models aditional principles}
Let $\CS$ be a constant specification for $\LPLTLp(Ax)$. An \linebreak $\LPLTLp(Ax)_\CS$-model   is a tuple $
\M = (r, S, \evidence_1\ldots,\evidence_\numberofagents, \valuation)
$
where $S$ is a non-empty set of states, $r$ is a run on $S$, $\valuation$ is a valuation, and the $\CS$-evidence functions $\evidence_1\ldots,\evidence_\numberofagents$ should
	satisfy conditions 2--5 of Definition \ref{def:evidence function for LPLTL} and the following conditions depending on axioms in $Ax$.
	For all $n \in \N$, all terms $s,t \in \Terms$, all formulas $\phi,\psi \in \Formulae$, and all $i \in \Ag$:
	\begin{enumerate}
		\item if $\phi \in \evidence_\agent(r(m), t)$ for all $m \geq n$, then $\lalways \phi \in \evidence_i(r(n), \tgeneralize t)$. \hfill
		$\generalizeevidence$
				
		\item if $\lalways\phi \in \evidence_\agent(r(n),t)$, then $\phi \in \evidence_\agent(r(m), \talwaysaccess t)$ for all $m \geq n$. \hfill $\alwaysaccessevidence$
		
		\item if $\lalways\phi \in \evidence_\agent(r(n),t)$, then $\lnext\phi \in \evidence_\agent(r(n), \tnextaccess t)$.  \hfill $\nextaccessevidence$
			
		\item if $\lnext \phi \in \evidence_\agent(r(n),t)$, then $\phi \in \evidence_\agent(r(n+1), \tnext t)$. 
		\hfill $\nextrightshiftevidence$
		
		\item if $\phi \in \evidence_\agent(r(n+1), t)$, then $\lnext \phi \in \evidence_\agent(r(n), \tprev t)$.
		\hfill $\nextleftshiftevidence$
		
		\vspace*{0.1cm}
		\item if $\phi \in \evidence_\agent(r(m), t)$ for all $m \leq n$, then $\lsofar \phi \in \evidence_i(r(n), \thenceforthgeneralize t)$. \hfill
		$\pastgeneralizeevidence$
		
		\item if $\lsofar \phi \in \evidence_\agent(r(n),t)$, then $\phi \in \evidence_\agent(r(m), \thenceforthaccess t)$ for all $m \leq n$. \hfill $\pastaccessevidence$
		
		\item if $\lsofar \phi \in \evidence_\agent(r(n),t)$, then $\wprevious \phi \in \evidence_\agent(r(n), \tprevaccess t)$.  \hfill $\wprevaccessevidence$
		
		\item $\wprevious \phi \in \evidence_\agent(r(n),t)$ and $n>0$, then $\phi \in \evidence_\agent(r(n-1), \twprevright t)$. \hfill $\wprevrightshiftevidence$

		\item $\sprevious \phi \in \evidence_\agent(r(n),t)$ and $n>0$, then $\phi \in \evidence_\agent(r(n-1), \tsprevright t)$.  \hfill $\sprevrightshiftevidence$
		
		\item $\phi \in \evidence_\agent(r(n-1),t)$, then $\sprevious \phi \in \evidence_\agent(r(n), \tsprevleft t)$. \hfill $\sprevleftshiftevidence$
	\end{enumerate}
\end{definition}

\begin{lemma}\label{lem: MCS temporal properties}
Let $(\overline{\Gamma_0}, \overline{\Gamma_1}, \ldots)$ be an acceptable sequence of elements of $\MCS_\chi$. 

\begin{enumerate}
	\item If $\lalways \phi \in \Gamma_n$, then $\phi \in \Gamma_m$ for all $m \geq n$.
	
	\item If $\leventually \phi \in \overline{\Gamma_n}$, then $\phi \in \overline{\Gamma_{m}}$ for some $m \geq n$.
	
	\item If $\lonce \phi \in \Gamma_n$, then $\phi \in \Gamma_{m}$ for some $m \leq n$.
	

	\item  $\lsofar \phi \in \Gamma_n$ iff $\phi \in \Gamma_m$ for all $m \leq n$.
\end{enumerate}
\end{lemma}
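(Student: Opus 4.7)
The plan is to prove each item by the obvious direct method: items 1 and 4 by an induction that propagates the box along the $R_\lnext$ chain by means of the fixed-point unfoldings from Lemma \ref{lem:mix:1}; item 2 as an immediate instance of clause 2 of Definition \ref{Def:acceptable sequence}; and item 3 by an induction that unfolds $\diamondminus$ via $\stwoax$ and bottoms out at the initial state. The main workhorses are Lemma \ref{lem:R-next} (which transports $\lnext$ and $\wprevious$ formulas across one step of the sequence) and Lemma \ref{lem:mix:1}. Note that in items 1, 3, 4 the hypothesis and conclusion refer to the full maximal consistent sets $\Gamma_n$, so no membership-in-$\Sub^+(\chi)$ side conditions arise, and we may invoke any derivable $\LTLp$-theorem freely.

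For item 1, I would do an induction on $m \geq n$ establishing the stronger statement $\Box\phi \in \Gamma_m$. The base case is the hypothesis. For the step, from $\Box \phi \in \Gamma_m$ Lemma \ref{lem:mix:1}(1) gives $\lnext \Box\phi \in \Gamma_m$, and since the acceptable sequence satisfies $\overline{\Gamma_m} R_\lnext \overline{\Gamma_{m+1}}$, Lemma \ref{lem:R-next}(1) transfers $\Box\phi$ to $\Gamma_{m+1}$. Another application of Lemma \ref{lem:mix:1}(1) at each $m$ extracts $\phi \in \Gamma_m$. Item 4 is exactly dual: use Lemma \ref{lem:mix:1}(3) to extract $\wprevious \boxminus \phi$ and Lemma \ref{lem:R-next}(3) in the other direction (relying on $\overline{\Gamma_{n-1}} R_\lnext \overline{\Gamma_n}$ when $n>0$) to step $\boxminus \phi$ back from $\Gamma_n$ to $\Gamma_{n-1}$; the induction terminates at $0$.

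For item 2, since $\Diamond\phi$ abbreviates $\top \luntil \phi$, the assumption $\Diamond \phi \in \overline{\Gamma_n}$ is precisely the hypothesis of clause 2 of Definition \ref{Def:acceptable sequence}, yielding an $m \geq n$ with $\phi \in \overline{\Gamma_m}$. Nothing further is needed.

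Item 3 is the one that requires a little care, because the notion of acceptable sequence has no ``since'' obligation built into it. I would induct on $n$, applying axiom $\stwoax$ to $\top \lsince \phi \in \Gamma_n$ to conclude that either $\phi \in \Gamma_n$ (done, take $m=n$) or $\sprevious(\top \lsince \phi) \in \Gamma_n$. If $n=0$, the second alternative is impossible by Lemma \ref{lem:initial state does not contain strong previous formulas}, since $\overline{\Gamma_0}$ is initial by clause 3 of the definition of acceptable sequence, which forces $\phi \in \Gamma_0$. If $n>0$, Lemma \ref{lem:R-next}(2) applied across $\overline{\Gamma_{n-1}} R_\lnext \overline{\Gamma_n}$ yields $\top \lsince \phi \in \Gamma_{n-1}$, and the induction hypothesis delivers some $m \leq n-1$ with $\phi \in \Gamma_m$. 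The only genuinely delicate point in the whole lemma is making the base case of item 3 work via the initial-state clause of acceptability; once that is isolated, the rest is routine modal bookkeeping.
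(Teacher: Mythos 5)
Your proposal is correct and follows essentially the same route as the paper: items 1 and 4 by stepping the box along (resp.\ against) the $R_\lnext$ chain using Lemma \ref{lem:mix:1} and Lemma \ref{lem:R-next}, item 2 directly from clause 2 of Definition \ref{Def:acceptable sequence}, and item 3 by unfolding $\top \lsince \phi$ via $\stwoax$ with the base case handled through Lemma \ref{lem:initial state does not contain strong previous formulas} at the initial state. The only difference is cosmetic: you phrase items 1, 3, 4 as explicit inductions where the paper argues by contradiction and "repeating the argument."
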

\begin{proof}
\begin{enumerate}
	\item Suppose $\lalways \phi \in \Gamma_n$ and there is $m \geq n$ such that $\phi \not \in \Gamma_{m}$. If $m=n$, then by Lemma \ref{lem:mix:1} we get $\phi \in \Gamma_n$, which is a contradiction. Now suppose $m > n$. Then, by Lemma \ref{lem:mix:1}, $\lnext \lalways \phi \in \Gamma_n$. By Lemma \ref{lem:R-next}, $\lalways \phi \in \Gamma_{n+1}$.  By repeating the argument, we get $\lalways \phi \in \Gamma_m$, and hence $\phi \in \Gamma_m$  which is a contradiction.
	
	\item Suppose $\leventually \phi \in \overline{\Gamma_n}$. Then $ \top \luntil  \phi \in \overline{\Gamma_n}$. Since $(\overline{\Gamma_0}, \overline{\Gamma_1}, \ldots)$ is an acceptable sequence, there exists $m \geq n$ such that $\phi \in \overline{\Gamma_m}$.
	
	\item Suppose $\lonce \phi \in \Gamma_n$. Then $\top \lsince \phi \in \Gamma_{n}$. Then either $\phi \in \Gamma_n$ or $\top \wedge \sprevious (\top \lsince \phi) \in \Gamma_n$. In the former case, we are done. In the latter case, by Lemma \ref{lem:R-next} we have $\top \lsince \phi \in \Gamma_{n-1}$. Again from $\top \lsince \phi \in \Gamma_{n-1}$ it follows that either $\phi \in \Gamma_{n-1}$ or $\top \wedge \sprevious (\top \lsince \phi) \in \Gamma_{n-1}$. In the former case, we are done. In the latter case, by Lemma \ref{lem:R-next} we have $\top \lsince \phi \in \Gamma_{n-2}$. By repeating this argument, we finally get  either $\phi \in \Gamma_0$ or $\sprevious (\top \lsince \phi) \in \Gamma_0$. In the former case, we are done. In the latter case, by Lemma \ref{lem:initial state does not contain strong previous formulas}, we get a contradiction since $\overline{\Gamma_0}$ is initial. 
	
	\item Suppose $\lsofar \phi \in \Gamma_n$ and there is $m \leq n$ such that $\phi \not \in \Gamma_{m}$. If $m=n$, then by Lemma \ref{lem:mix:1} we get $\phi \in \Gamma_n$, which is a contradiction. Now suppose $m < n$. Then, by Lemma \ref{lem:mix:1}, $\wprevious \lsofar \phi \in \Gamma_n$. By Lemma \ref{lem:R-next}, $\lsofar \phi \in \Gamma_{n-1}$.  By repeating the argument, we get $\lsofar \phi \in \Gamma_m$, and hence $\phi \in \Gamma_m$  which is a contradiction.
	
	For the converse suppose that $\phi \in \Gamma_m$ for all $m \leq n$, and $\lsofar \phi \not \in \Gamma_n$. Thus $\neg \lsofar \phi \in \Gamma_{n}$, and hence $\lonce \neg \phi \in \Gamma_{n}$. By clause 3 above, $\neg \phi \in \Gamma_m$ for some $m \leq n$, which would contradict the assumption.\qed
\end{enumerate}
\end{proof}

\begin{theorem}[Soundness and completeness]\label{thm:completeness M-models additional principles}
	Let  $\CS$ be a constant specification for $\LPLTLp(Ax)$. 
	\begin{enumerate}
		\item Suppose that $\generalizeprinciple \in Ax$. If $\phi$  is provable in $\LPLTLp(Ax)_\CS$, then $\M\entails \phi$ for all $\LPLTLp(Ax)_\CS$-models $\M$.
		
		\item Suppose that $\generalizeprinciple \not \in Ax$. $\phi$  is provable in $\LPLTLp(Ax)_\CS$ iff $\M\entails \phi$ for all $\LPLTLp(Ax)_\CS$-models $\M$.
	\end{enumerate}
	
\end{theorem}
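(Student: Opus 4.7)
The plan is to extend the soundness and completeness argument of Section~\ref{sec:Mkrtychev models aditional principles} by verifying, for each connecting principle, (i) that the corresponding clause on evidence functions in Definition~\ref{def: evidence functions M-models aditional principles} validates the axiom, and (ii) that the canonical evidence function from Definition~\ref{def:canonical M-model} satisfies each such clause whenever the axiom is present in $Ax$. The proof proceeds by induction on derivations for soundness and by a direct adaptation of Theorem~\ref{thm:Completeness M-models} for completeness.

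For soundness, I would add eleven new base cases to the induction on derivations, one per connecting principle. Each case is a short unfolding of the truth definition together with the matching clause of Definition~\ref{def: evidence functions M-models aditional principles}. A representative case is $\nextrightshiftprinciple$: if $(\M, r(n)) \models \jbox{t}_\agent \lnext \phi$, then $\lnext \phi \in \evidence_\agent(r(n),t)$ and $(\M, r(n+1)) \models \phi$; clause~4 on evidence functions yields $\phi \in \evidence_\agent(r(n+1), \tnext t)$, whence $(\M, r(n)) \models \lnext \jbox{\tnext t}_\agent \phi$. For $\generalizeprinciple$ the argument uses clause~1 together with the fact that $(\M, r(m)) \models \jbox{t}_\agent \phi$ entails $(\M, r(m)) \models \phi$ for every $m \geq n$; the other cases follow the same pattern.

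For the completeness direction in the case $\generalizeprinciple \notin Ax$, I would follow the template of Theorem~\ref{thm:Completeness M-models}: from $\nvdash_\CS \phi$, obtain an acceptable sequence containing a $\phi$-maximally consistent set holding $\neg \phi$ via Corollary~\ref{cor:acceptable sequence starts with Gamma}, and form the $\phi$-canonical model of Definition~\ref{def:canonical M-model}. The Truth Lemma carries over verbatim because its proof does not depend on the connecting principles. The essential new obligation is to check that the canonical evidence function satisfies the clauses of Definition~\ref{def: evidence functions M-models aditional principles} corresponding to each axiom in $Ax$. Each check follows the same recipe: translate the semantic hypothesis into membership in some $\Gamma_n$, apply the axiom inside this MCS, and then invoke Lemma~\ref{lem: MCS temporal properties} to propagate the resulting $\lalways$- or $\boxminus$-formula to the required indices. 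For instance, for $\alwaysaccessprinciple$: from $\lalways \phi \in \evidence_\agent(r(n),t)$ we obtain $\jbox{t}_\agent \lalways \phi \in \Gamma_n$; the axiom gives $\lalways \jbox{\talwaysaccess t}_\agent \phi \in \Gamma_n$; item~1 of Lemma~\ref{lem: MCS temporal properties} yields $\jbox{\talwaysaccess t}_\agent \phi \in \Gamma_m$ for every $m \geq n$, which is exactly $\phi \in \evidence_\agent(r(m), \talwaysaccess t)$. The past-time clauses corresponding to $\pastaccessprinciple$ and $\wprevaccessprinciple$ are handled identically via items~3 and~4 of the lemma, while the single-step shift principles ($\nextrightshiftprinciple$, $\nextleftshiftprinciple$, $\wprevrightshiftprinciple$, $\sprevrightshiftprinciple$, $\sprevleftshiftprinciple$) and $\nextaccessprinciple$ use only the relation $R_\lnext$ together with Lemma~\ref{lem:R-next}.

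The main obstacle, and the reason for dropping $\generalizeprinciple$ in the completeness clause, is that its semantic clause converts an infinitary assumption ($\phi \in \evidence_\agent(r(m),t)$ for \emph{every} $m \geq n$) into a single local evidence fact ($\lalways \phi \in \evidence_\agent(r(n), \tgeneralize t)$). Verifying this in the canonical model would require deducing $\jbox{\tgeneralize t}_\agent \lalways \phi \in \Gamma_n$ from the mere fact that $\jbox{t}_\agent \phi \in \Gamma_m$ for every $m \geq n$; but no single finite formula in $\Gamma_n$ encodes this infinite conjunction, and the axiom $\generalizeprinciple$ can only be fired by a premise of the form $\lalways \jbox{t}_\agent \phi \in \Gamma_n$. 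This is the familiar gap between semantic infinitary universal closure and finitary syntactic derivability, and it is precisely why completeness is stated only when $\generalizeprinciple \notin Ax$, whereas soundness is unrestricted.
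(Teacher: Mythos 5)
Your overall strategy is exactly the paper's: soundness by adding one induction case per principle, and completeness by reusing the canonical model of Theorem~\ref{thm:Completeness M-models}, noting that the Truth Lemma is unaffected, and verifying that the canonical evidence functions satisfy the new clauses of Definition~\ref{def: evidence functions M-models aditional principles}; your treatment of $\alwaysaccessprinciple$ via Lemma~\ref{lem: MCS temporal properties} is literally the paper's argument. However, there is a genuine gap: you never verify the clause for $\pastgeneralizeprinciple$, and it is the one principle that does \emph{not} fit your stated recipe (``translate the semantic hypothesis into membership in some $\Gamma_n$, apply the axiom inside this MCS''). Its hypothesis is $\jbox{t}_i\phi \in \Gamma_m$ for \emph{all} $m \leq n$, i.e.\ facts spread over many MCSs, so there is no single antecedent $\boxminus\jbox{t}_i\phi$ in $\Gamma_n$ to which the axiom can be applied. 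The paper closes this by contraposition: if $\jbox{\thenceforthgeneralize t}_i\boxminus\phi \notin \Gamma_n$, then $\diamondminus\lnot\jbox{t}_i\phi \in \Gamma_n$, and item~3 of Lemma~\ref{lem: MCS temporal properties} --- which holds for arbitrary formulas of the full MCSs, because the finite past is traced through $R_\lnext$ back to the initial state --- yields $\lnot\jbox{t}_i\phi \in \Gamma_m$ for some $m \leq n$, a contradiction.

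This omission also exposes an imprecision in your diagnosis of why $\generalizeprinciple$ is dropped. You attribute the failure to the ``infinitary hypothesis collapsed into a single local evidence fact''; but $\pastgeneralizeprinciple$ has exactly the same shape, and completeness \emph{does} go through for it (indeed the theorem you are proving allows $\pastgeneralizeprinciple \in Ax$). The real asymmetry is in Lemma~\ref{lem: MCS temporal properties}: the $\diamondminus$-clause (item~3) holds for the full sets $\Gamma_n$, whereas the $\Diamond$-clause (item~2) is only available for formulas of $\overline{\Gamma_n}$, since fulfillment of eventualities along an acceptable sequence is guaranteed only for until formulas inside $\Sub^+(\chi)$. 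Running the contraposition argument for $\generalizeprinciple$ would require fulfilling the eventuality $\top\luntil\lnot\jbox{t}_i\phi$, which in general lies outside $\Sub^+(\chi)$ --- precisely the defect that the $\Subf$-closure of Sections~\ref{sec:LPLTLp with annotated Appl} and~\ref{sec:Completeness for generalize} is designed to repair. So the correct statement of the obstacle is about unfulfilled future eventualities in the filtration, not about infinitary premises per se; with that fixed, and the $\pastgeneralizeprinciple$ check added, your proof matches the paper's.
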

\begin{proof}
The proof of soundness of $\LPLTLp(Ax)$, for arbitrary $Ax$, is straightforward. 

Now suppose $\generalizeprinciple$ is not in $Ax$. The proof of completeness of $\LPLTLp(Ax)$ is similar to the proof of Theorem \ref{thm:completeness LPLTLp M-models} by constructing a canonical model. Let $\M = (r,\evidence_1,\ldots,\evidence_\numberofagents, \valuation)$ be the $\chi$-canonical  model for $\CS$
with respect to an acceptable sequence~$(\overline{\Gamma_0}, \overline{\Gamma_1}, \ldots)$ for $\LPLTLp(Ax)_\CS$.	  Truth Lemma can be proved as before. The only new part is to show that $\M$  is an $\LPLTLp(Ax)_\CS$-model.  We only check the details for $\alwaysaccessprinciple$ and $\pastgeneralizeprinciple$, the case of other principles are straightforward.

Let us show that the $\chi$-canonical model of  $\LPLTLp(Ax)$, where $\alwaysaccessprinciple$ is in $Ax$, satisfies the $\alwaysaccessevidence$ condition of Definition \ref{def: evidence functions M-models aditional principles}. Suppose $\lalways \phi \in \evidence_\agent(r(n), t)$. We want to show that $\phi \in \evidence_i(r(m), \talwaysaccess t)$ for all $m \geq n$. It is enough to show that $ \jbox{\talwaysaccess t}_i \phi \in \Gamma_m$ for all $m \geq n$. From Definition \ref{def:canonical M-model}, we get $\jbox{t}_i \lalways \phi \in \Gamma_n$. From axiom $\alwaysaccessprinciple$, we get $ \lalways \jbox{\talwaysaccess t}_i \phi \in \Gamma_n$. Thus, by Lemma \ref{lem: MCS temporal properties}, we have $ \jbox{\talwaysaccess t}_i \phi \in \Gamma_m$ for all $m \geq n$.

Let us now show that the $\chi$-canonical model of  $\LPLTLp(Ax)$, where $\pastgeneralizeprinciple$ is in $Ax$, satisfies the $\pastgeneralizeevidence$ condition of Definition \ref{def: evidence functions M-models aditional principles}. Suppose $\phi \in \evidence_\agent(r(m), t)$ for all $m \leq n$, and thus $\jbox{t}_i \phi \in \Gamma_{m}$ for all $m \leq n$. By Lemma \ref{lem: MCS temporal properties}, we get $\lsofar \jbox{ t}_i \phi \in \Gamma_{n}$. From axiom $\pastgeneralizeprinciple$, we have $\jbox{\thenceforthgeneralize t}_i \lsofar \phi \in \Gamma_{n}$. Thus, $\lsofar \phi \in \evidence_i(r(n), \thenceforthgeneralize t)$ as desired.
\end{proof}

We leave the completeness of $\LPLTLp(Ax)$, where $Ax$ contains $\generalizeprinciple$, as an open problem. In Sections \ref{sec:LPLTLp with annotated Appl} and \ref{sec:Completeness for generalize} we achieve the completeness of  logics involving $\generalizeprinciple$ by changing the justification logic part of $\LPLTLp$.

\subsection{$\LPLTLp$ with indexed application operators}
\label{sec:LPLTLp with annotated Appl}

In this section we formalize temporal justification logics with indexed application operators, denoted by $\LPLTL^I$.\footnote{ The indexed application operators were first suggested by Renne \cite{Ren09TARK}.} Terms and formulas of temporal justification logics with indexed application operators are constructed by the following mutual grammar:  
\[
t \coloncolonequals c \mid x \mid \; \tinspect t \mid \;  t + t \mid t \tapp_\phi t \, ,
\]
\[
\phi \coloncolonequals P \mid \lfalse \mid \phi \limplies \phi \mid \lnext \phi \mid \wprevious \phi \mid \phi \luntil \phi \mid \phi \lsince \phi \mid \jbox{t}_\agent\phi \, . 
\]

Axioms and rules of $\LPLTL^I$ are exactly the same as for $\LPLTLp$, except that axiom $\appax$ is replaced by the following  axiom

\begin{itemize}
	\item  $\jbox{t}_\agent (\phi \limplies \psi) \limplies (\jbox{s}_\agent \phi \limplies \jbox{t \tapp_\phi s}_\agent \psi)$.
\end{itemize}

Interpreted systems for $\LPLTL_\CS^I$ and $\LPLTL_\CS^I$-models are defined as in Definitions \ref{def:interpreted sysytems LPLTLp} and \ref{def:M-models LPLTLp} respectively with the difference that condition (application) of Definition \ref{def:evidence function for LPLTL} is replaced by the following condition:

\begin{itemize}
	\item if $\phi \limplies \psi \in \evidence_\agent(w,t)$ and $\phi \in \evidence_\agent(w,s)$, then $\psi \in \evidence_\agent(w, t \tapp_\phi s)$.
\end{itemize}

The notions of $\LPLTL_\CS^I$-validity is defined as usual.
The proof of soundness and completeness theorems for annotated justification logics  with respect to their models is similar to that of $\LPLTLp$.

\begin{theorem}\label{thm:Sound Compl JL^a}
	Let $\CS$ be a constant specification for $\LPLTL^I$. The formula $\phi$ is provable in  $\LPLTL_\CS^I$ if{f} $\M \models \phi$ for all $\LPLTL_\CS^I$-models $\M$.
\end{theorem}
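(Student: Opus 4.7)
The plan is to mirror the proofs of Theorem \ref{thm:Completeness M-models} and Theorem \ref{thm:Weak Completeness M-models} almost verbatim, since the only difference between $\LPLTLp$ and $\LPLTL^I$ is the shape of the application axiom together with the matching semantic condition in Definition \ref{def: evidence functions M-models aditional principles}. In particular, all of the maximal-consistent-set infrastructure of Section 4 (the definition of $\Sub^+(\chi)$ and $\MCS_\chi$, the relation $R_\lnext$, Lemmas \ref{lem:R-next is serial}--\ref{lem:since-sequence}, and Corollary \ref{cor:acceptable sequence starts with Gamma}) transfers unchanged, because those proofs invoke only propositional axioms, the temporal axioms and rules of $\LTLp$, and the iterated constant necessitation rule, all of which are still present in $\LPLTL^I_\CS$.

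For soundness, I would proceed by induction on the derivation of $\vdash_\CS \phi$. Every case except the modified application axiom is handled exactly as in the soundness proof for $\LPLTLp$. For a new axiom instance $\jbox{t}_\agent(\phi \to \psi) \to (\jbox{s}_\agent \phi \to \jbox{t \tapp_\phi s}_\agent \psi)$, fix any $\LPLTL^I_\CS$-model $\M = (r, \ldots)$ and $n \in \N$, and suppose $(\M, r(n)) \models \jbox{t}_\agent(\phi \to \psi)$ and $(\M, r(n)) \models \jbox{s}_\agent \phi$. Then $\phi \to \psi \in \evidence_\agent(r(n), t)$ and $\phi \in \evidence_\agent(r(n), s)$, while reflexivity at $r(n)$ gives $(\M, r(n)) \models \psi$. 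The indexed application clause in Definition \ref{def: evidence functions M-models aditional principles} now yields $\psi \in \evidence_\agent(r(n), t \tapp_\phi s)$, so $(\M, r(n)) \models \jbox{t \tapp_\phi s}_\agent \psi$.

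For completeness, suppose $\not\vdash_\CS \phi$. Extend $\{\lnot\phi\}$ to some $\Gamma \in \MCS$, put $\overline{\Gamma} := \Gamma \cap \Sub^+(\phi)$, apply Corollary \ref{cor:acceptable sequence starts with Gamma} to embed $\overline{\Gamma}$ into an acceptable sequence, and form the $\phi$-canonical model $\M$ relative to that sequence, with evidence function $\evidence_\agent(\overline{\Gamma_n}, t) := \{\psi \mid \jbox{t}_\agent \psi \in \Gamma_n\}$ exactly as in Definition \ref{def:canonical M-model}. All the evidence conditions except application are verified as for $\LPLTLp$. For the indexed application condition, suppose $\phi' \to \psi \in \evidence_\agent(\overline{\Gamma_n}, t)$ and $\phi' \in \evidence_\agent(\overline{\Gamma_n}, s)$; then $\jbox{t}_\agent(\phi' \to \psi), \jbox{s}_\agent \phi' \in \Gamma_n$, and the modified application axiom together with the maximal consistency of $\Gamma_n$ yields $\jbox{t \tapp_{\phi'} s}_\agent \psi \in \Gamma_n$, i.e. $\psi \in \evidence_\agent(\overline{\Gamma_n}, t \tapp_{\phi'} s)$. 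The Truth Lemma is then reproved by the induction of Section 7; its $\jbox{t}_\agent \psi$ case uses only reflexivity and the definition of $\evidence_\agent$, neither of which sees the form of application. Since $\phi \not\in \Gamma_n$, the Truth Lemma gives $(\M, r(n)) \not\models \phi$, as required.

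The main step is really just the verification of the indexed application clause on the canonical evidence function, and this is essentially automatic because the axiom and the semantic condition have been adjusted in parallel. I do not expect a serious obstacle here; the point of the indexing becomes pressing only in the following section, where it is used to repair the canonical-model construction in the presence of \generalizeprinciple.
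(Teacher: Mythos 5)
Your proposal is correct and follows exactly the route the paper intends: the paper itself gives no separate argument for Theorem~\ref{thm:Sound Compl JL^a} beyond noting that the proof is ``similar to that of $\LPLTLp$,'' and your verification of the indexed application axiom in the soundness induction and of the corresponding evidence condition in the canonical model is precisely the only new content needed. The remaining infrastructure (Section~\ref{sec:Maximal consistent sets}, the acceptable-sequence construction, and the Truth Lemma) transfers unchanged, just as you say.
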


In order to prove completeness of logics involving axiom $\generalizeprinciple$, we need to change the notion of subformula. The following definition is inspired by the work of Marti and Studer \cite{MartiStuder2018}.

\begin{definition}\label{def:Subf}
	The set of subformulas, denoted by $\Subf$,
	is defined by induction on the rank of formulas as follows:
	
	\begin{itemize}
		\setlength\itemsep{0.01cm}
		\item $ \Subf(P) \colonequals \{P\}$, and $ \Subf(\bot) \colonequals \{\bot\}$.
		\item $ \Subf(* \phi) \colonequals \{* \phi\} \cup  \Subf(\phi)$, where $* \in \{ \lnext, \wprevious \}$.
		\item $ \Subf(\phi \star \psi) \colonequals \{\phi \star \psi\} \cup  \Subf(\phi) \cup  \Subf(\psi)$, where $\star \in \{ \to, \lsince, \luntil \}$.
		\vspace*{0.1cm}
		\item[$\circ$] $\Subf(\jbox{x}_\agent \phi) \colonequals \{\jbox{x}_\agent \phi\} \cup \Subf(\phi)$.
		\item[$\circ$] $\Subf(\jbox{c}_\agent \phi) \colonequals \{\jbox{c}_\agent \phi\} \cup \Subf(\phi)$.
		\item[$\circ$] $\Subf(\jbox{t+s}_\agent \phi) \colonequals \{\jbox{t+s}_\agent \phi\} \cup \Subf(\jbox{t}_\agent \phi) \cup \Subf(\jbox{s}_\agent \phi)$.
		\item[$\circ$] $\Subf(\jbox{s \cdot_\phi t}_\agent \psi) \colonequals \{\jbox{s \cdot_\phi t}_\agent  \psi\} \cup \Subf(\jbox{s}_\agent (\phi \to \psi)) \cup \Subf(\jbox{t}_\agent \phi)$.
		\item[$\circ$] $\Subf(\jbox{!t}_\agent \phi) \colonequals \{\jbox{!t}_\agent \phi\} \cup \Subf(\phi)$.
		\vspace*{0.1cm}
		\item $\Subf(\jbox{\tgeneralize t}_\agent \phi) \colonequals \{\jbox{\tgeneralize t}_\agent \phi\} \cup \Subf(\phi)$, $\phi$ is not an $\lalways$-formula.
		\item $\Subf(\jbox{\tgeneralize t}_\agent \lalways \phi) \colonequals \{\jbox{\tgeneralize t }_\agent \lalways\phi\} \cup \Subf(\jbox{t}_\agent \phi)$.
		\vspace*{0.1cm}
		\item[$\circ$] $\Subf(\jbox{\talwaysaccess t}_\agent \phi) \colonequals \{\jbox{\talwaysaccess t }_\agent \phi\} \cup \Subf(\jbox{t}_\agent \lalways \phi)$.
		\vspace*{0.1cm}
		\item $\Subf(\jbox{\tnextaccess t}_\agent \phi) \colonequals \{\jbox{\tnextaccess t}_\agent \phi\} \cup \Subf(\phi)$, $\phi$ is not a $\lnext$-formula.
		\item $\Subf(\jbox{\tnextaccess t}_\agent \lnext \phi) \colonequals \{\jbox{\tnextaccess t }_\agent \lnext \phi\} \cup \Subf(\jbox{t}_\agent \lalways \phi)$.
		\vspace*{0.1cm}
		\item[$\circ$] $\Subf(\jbox{\tnext t}_\agent \phi) \colonequals \{\jbox{\tnext t}_\agent \phi\} \cup \Subf(\jbox{t}_\agent \lnext \phi)$.
		\vspace*{0.1cm}
		\item $\Subf(\jbox{\tprev t}_\agent \phi) \colonequals \{\jbox{\tprev t}_\agent \phi\} \cup \Subf(\phi)$, $\phi$ is not a $\lnext$-formula.
		\item $\Subf(\jbox{\tprev t}_\agent \lnext \phi) \colonequals \{\jbox{\tprev t }_\agent \lnext \phi\} \cup \Subf(\jbox{t}_\agent \phi)$.
		\vspace*{0.1cm}
		\item[$\circ$] $\Subf(\jbox{\thenceforthgeneralize t}_\agent \phi) \colonequals \{\jbox{\thenceforthgeneralize t}_\agent \phi\} \cup \Subf(\phi)$, $\phi$ is not a $\lsofar$-formula.
		\item[$\circ$] $\Subf(\jbox{\thenceforthgeneralize t}_\agent \lsofar \phi) \colonequals \{\jbox{\thenceforthgeneralize t }_\agent \lsofar \phi\} \cup \Subf(\lsofar \jbox{t}_\agent \phi)$.
		\vspace*{0.1cm}
		\item $\Subf(\jbox{\thenceforthaccess t}_\agent \phi) \colonequals \{\jbox{\thenceforthaccess t}_\agent \phi\} \cup \Subf(\jbox{t}_\agent \lsofar \phi)$.
		\vspace*{0.1cm}
		\item[$\circ$] $\Subf(\jbox{\tprevaccess t}_\agent \phi) \colonequals \{\jbox{\tprevaccess t}_\agent \phi\} \cup \Subf(\phi)$, $\phi$ is not a $\wprevious$-formula.
		\item[$\circ$] $\Subf(\jbox{\tprevaccess t}_\agent \wprevious \phi) \colonequals \{\jbox{\tprevaccess t }_\agent \wprevious \phi\} \cup \Subf(\jbox{t}_\agent \lsofar \phi)$.
		\vspace*{0.1cm}
		\item $\Subf(\jbox{\twprevright t}_\agent \phi) \colonequals \{\jbox{\twprevright t}_\agent \phi\} \cup \Subf(\jbox{t}_\agent \wprevious \phi)$.
		\vspace*{0.1cm}
		\item[$\circ$] $\Subf(\jbox{\tsprevright t}_\agent \phi) \colonequals \{\jbox{\tsprevright t}_\agent \phi\} \cup \Subf(\jbox{t}_\agent \sprevious \phi)$.
		\vspace*{0.1cm}
		\item $\Subf(\jbox{\tsprevleft t}_\agent \phi) \colonequals \{\jbox{\tsprevleft t}_\agent \phi\} \cup \Subf(\phi)$, $\phi$ is not a $\sprevious$-formula.
		\item $\Subf(\jbox{\tsprevleft t}_\agent \sprevious \phi) \colonequals \{\jbox{\tsprevleft t }_\agent \sprevious \phi\} \cup \Subf(\sprevious \jbox{t}_\agent  \phi)$.
		
	\end{itemize}
	Moreover, we extend the subformula relation $\Subf$ by transitivity.
\end{definition}

\subsection{Completeness for $\generalizeprinciple$}
\label{sec:Completeness for generalize}

As before let $\LPLTL^I(Ax)$ denote the result of adding axioms  from $Ax$ to $\LPLTL^I$. To keep the notation simple,  let ${\sf L}^{\sf gen}$ denote $\LPLTL^I (\{\generalizeprinciple\})$. In this section we aim to prove completeness of ${\sf L}^{\sf gen}$.

For a formula $\chi$, let 
\begin{gather*}
	B_\chi \colonequals \Subf(\chi) \cup  \Subf(\top \lsince \wprevious \bot)  \cup \Subf \{  \leventually \neg \jbox{t}_\agent \phi \ |\ \jbox{\tgeneralize t}_\agent \lalways \phi \in \Subf(\chi) \} 
	\\
	\cup\  \Subf \{  \lalways \jbox{\talwaysaccess t}_\agent \phi \ |\ \jbox{\talwaysaccess t}_\agent  \phi \in \Subf(\chi) \} \cup \Subf \{  \lsofar \jbox{\talwaysaccess_P t}_\agent \phi \ |\ \jbox{\talwaysaccess_P t}_\agent  \phi \in \Subf(\chi) \}
	\\
	\cup\ \Subf \{  \wprevious \jbox{\twprevright t}_\agent \phi \ |\ \jbox{\twprevright t}_\agent  \phi \in \Subf(\chi) \} \cup \Subf \{  \sprevious \jbox{\tsprevright t}_\agent \phi \ |\ \jbox{\tsprevright t}_\agent  \phi \in \Subf(\chi) \},
\end{gather*}
and
$$\Subf^+ (\chi) := B_\chi \cup \{ \neg \psi \ |\  \psi \in B_\chi \}.$$

Let $\MCS_\chi^\tgeneralize$ denote the set of all $\chi$-maximally $\mathsf{L}_\CS$-consistent subsets of $\Subf^+ (\chi)$.

For $\Gamma\in \MCS$, let\footnote{For simplicity we use the same symbol $\overline{\Gamma}$ as in Section \ref{sec:Maximal consistent sets}.} $$\overline{\Gamma} := \Gamma \cap \Subf^+ (\chi).$$ 

Note that all the results of Section \ref{sec:Maximal consistent sets} are valid if  $\Sub^+ (\chi)$ is replaced by $\Subf^+ (\chi)$, and $\vdash_\CS$ is replaced by $\vdash_{{\sf L}^{\sf gen}_\CS}$. Since the  proofs of the results of Section \ref{sec:Maximal consistent sets} have been given in details, we only outline the necessary changes here while omitting the proofs.

\begin{lemma}
	\[
	\MCS_\chi^\tgeneralize = \{\overline{\Gamma} \mid \Gamma\in\MCS \}.
	\]
\end{lemma}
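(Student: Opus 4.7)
The plan is to mirror the proof of Lemma \ref{lem:characterization of MCS-X} essentially verbatim, with $\Sub^+(\chi)$ replaced by $\Subf^+(\chi)$, $A_\chi$ replaced by $B_\chi$, and $\vdash_\CS$ replaced by $\vdash_{{\sf L}^{\sf gen}_\CS}$. The only conceptual requirement is that $\Subf^+(\chi)$ enjoys the same closure-under-negation property as $\Sub^+(\chi)$, namely: if $\phi \in B_\chi$ then $\neg\phi \in \Subf^+(\chi)$ by definition, and if $\phi = \neg\psi \in \Subf^+(\chi) \setminus B_\chi$ then $\psi \in B_\chi \subseteq \Subf^+(\chi)$. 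This fact is what makes the case analysis in the $(\supseteq)$ direction go through.

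For the $(\subseteq)$ inclusion I would take $\Delta \in \MCS_\chi^\tgeneralize$, extend it (by a standard Lindenbaum construction for ${\sf L}^{\sf gen}_\CS$) to some $\Gamma \in \MCS$, and then observe that $\Delta \subseteq \Gamma \cap \Subf^+(\chi) = \overline{\Gamma}$. Since $\overline{\Gamma}$ is a ${\sf L}^{\sf gen}_\CS$-consistent subset of $\Subf^+(\chi)$ extending $\Delta$, and $\Delta$ is $\chi$-maximal, the two sets must coincide.

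For the $(\supseteq)$ inclusion, given any $\Gamma \in \MCS$, the consistency of $\overline{\Gamma}$ is immediate from the consistency of $\Gamma$. For $\chi$-maximality, I would argue by contradiction: assume $\overline{\Gamma}$ has a ${\sf L}^{\sf gen}_\CS$-consistent proper extension $\Sigma \subseteq \Subf^+(\chi)$, pick $\phi \in \Sigma \setminus \overline{\Gamma}$, and note that $\phi \not\in \Gamma$ yields $\neg\phi \in \Gamma$. Then split into the two cases $\phi \in B_\chi$ (so $\neg\phi \in \Subf^+(\chi)$, giving $\neg\phi \in \overline{\Gamma} \subseteq \Sigma$, contradicting consistency of $\Sigma$) and $\phi = \neg\psi$ with $\psi \in B_\chi$ (so $\psi \in \Gamma$ and hence $\psi \in \overline{\Gamma} \subseteq \Sigma$, again contradicting consistency).

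The only step that merits any real care is verifying the closure-under-negation property of $\Subf^+(\chi)$, because the clause $\Subf(\jbox{\tgeneralize t}_i \lalways\phi) := \{\jbox{\tgeneralize t}_i \lalways\phi\} \cup \Subf(\neg\jbox{t}_i\phi)$ puts negated justification assertions into $B_\chi$, so one should check that the intended dichotomy ``$\phi \in B_\chi$ or $\phi = \neg\psi$ for some $\psi \in B_\chi$'' is the only partition needed. Since $\Subf^+(\chi) = B_\chi \cup \{\neg\psi \mid \psi \in B_\chi\}$ by definition, this is immediate, and no additional effort beyond Lemma \ref{lem:characterization of MCS-X} is required. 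I do not anticipate a genuine obstacle.
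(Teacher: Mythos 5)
Your proposal is correct and follows exactly the route the paper intends: the paper's own proof of this lemma simply says it is similar to that of Lemma~\ref{lem:characterization of MCS-X}, which is the verbatim transfer you carry out, and your check that $\Subf^+(\chi) = B_\chi \cup \{\neg\psi \mid \psi \in B_\chi\}$ supports the same two-case argument is the right (and only) point needing verification.
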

\begin{proof}
Similar to the proof of Lemma \ref{lem:characterization of MCS-X} .\qed
\end{proof}

\begin{lemma}\label{lem: Facts about MCS-chi for generalize}
	Let  $\overline{\Gamma} \in \MCS_\chi^\tgeneralize$.
	
	\begin{enumerate}
			
		\item If $\phi \in B_\chi$ and  $\phi \not\in \overline{\Gamma}$, then $\neg \phi \in \overline{\Gamma}$.
		
		\item If $\phi \in \Subf^+(\chi)$ and  $\overline{\Gamma} \vdash_{{\sf L}^{\sf gen}_\CS} \phi$, then $\phi \in \overline{\Gamma}$.
		
		\item If $\psi \in \Subf^+(\chi)$, $\phi \in \overline{\Gamma}$ and $\vdash_{{\sf L}^{\sf gen}_\CS} \phi \rightarrow \psi$, then $\psi \in \overline{\Gamma}$.
	
	\end{enumerate}
\end{lemma}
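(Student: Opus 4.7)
The plan is to deduce all three items directly from the analogue of Lemma \ref{lem:characterization of MCS-X} for the new notion of subformula, namely the identification $\MCS_\chi^\tgeneralize = \{ \overline{\Gamma} \mid \Gamma \in \MCS \}$ where $\overline{\Gamma} = \Gamma \cap \Subf^+(\chi)$. Once that characterization is in hand, the argument is essentially identical to the proof of Lemma \ref{lem: Facts about MCS-chi}, with $\Sub^+$ replaced by $\Subf^+$ throughout. So for each item I would first fix $\Gamma \in \MCS$ with $\overline{\Gamma} = \Gamma \cap \Subf^+(\chi)$ and then transfer the relevant property between $\Gamma$ and its restriction.

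For item 1, suppose $\phi \in \Subf(\chi)$ and $\phi \notin \overline{\Gamma}$. Since $\Subf(\chi) \subseteq B_\chi \subseteq \Subf^+(\chi)$, the failure $\phi \notin \overline{\Gamma}$ forces $\phi \notin \Gamma$; by maximality of $\Gamma$ we get $\neg \phi \in \Gamma$. But $\phi \in B_\chi$ guarantees $\neg \phi \in \Subf^+(\chi)$ by the very definition $\Subf^+(\chi) = B_\chi \cup \{\neg \psi \mid \psi \in B_\chi\}$, so $\neg \phi \in \Gamma \cap \Subf^+(\chi) = \overline{\Gamma}$. For item 2, assume $\phi \in \Subf^+(\chi)$ and $\overline{\Gamma} \vdash_{{\sf L}^{\sf gen}_\CS} \phi$; since $\overline{\Gamma} \subseteq \Gamma$, the derivation lifts to $\Gamma \vdash_{{\sf L}^{\sf gen}_\CS} \phi$, and maximality of $\Gamma$ yields $\phi \in \Gamma$, whence $\phi \in \Gamma \cap \Subf^+(\chi) = \overline{\Gamma}$. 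For item 3, if $\phi \in \overline{\Gamma} \subseteq \Gamma$ and $\vdash_{{\sf L}^{\sf gen}_\CS} \phi \to \psi$, then modus ponens gives $\Gamma \vdash_{{\sf L}^{\sf gen}_\CS} \psi$, hence $\psi \in \Gamma$ by maximality, and the assumption $\psi \in \Subf^+(\chi)$ then places $\psi$ in $\overline{\Gamma}$.

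No step here is genuinely hard; the only point requiring a moment of care is item 1, where the hypothesis is the stricter $\phi \in \Subf(\chi)$ rather than $\phi \in \Subf^+(\chi)$, ensuring that $\phi$ itself lies in $B_\chi$ (and is not already a negation) so that $\neg \phi$ actually belongs to $\Subf^+(\chi)$; this asymmetry parallels the one in the corresponding item of Lemma \ref{lem: Facts about MCS-chi}. The real work has already been done in proving the characterization of $\MCS_\chi^\tgeneralize$, and the author's remark that ``all the results of Section \ref{sec:Maximal consistent sets} are valid if $\Sub^+(\chi)$ is replaced by $\Subf^+(\chi)$'' makes this transfer legitimate.
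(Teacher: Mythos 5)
Your proof is correct and matches the paper's intent: the paper dismisses these items as ``standard,'' and your argument is exactly the standard one, transferring each property via the characterization $\MCS_\chi^\tgeneralize = \{\overline{\Gamma} \mid \Gamma\in\MCS\}$ and deductive closure of maximal consistent sets, just as in Lemma~\ref{lem: Facts about MCS-chi}. Your observation about why item~1 needs $\phi \in \Subf(\chi)$ (so that $\neg\phi$ is guaranteed to lie in $\Subf^+(\chi)$) is the right point of care and is handled correctly.
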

\begin{proof}
	The proof of all items are standard. \qed 
\end{proof}

\begin{lemma}\label{lem:until-since-Subf}
    If either $\phi \luntil \psi \in \Subf^+(\chi)$ or $\phi \lsince \psi \in \Subf^+(\chi)$, then $\phi, \psi \in B_\chi$.
\end{lemma}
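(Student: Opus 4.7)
The argument is a direct membership check using the fact that $B_\chi$ is a union of $\Subf(\cdot)$ sets and is therefore closed under direct subformulas in the usual sense. The main observation is that neither $\phi \luntil \psi$ nor $\phi \lsince \psi$ is syntactically a negation (recall $\lneg \alpha$ abbreviates $\alpha \limplies \lfalse$, whose top-level connective is $\limplies$, not $\luntil$ or $\lsince$), so from $\Subf^+(\chi) = B_\chi \cup \{\lneg \alpha \mid \alpha \in B_\chi\}$ we may conclude that the formula in question already lies in $B_\chi$.

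First I would handle the until case. Suppose $\phi \luntil \psi \in \Subf^+(\chi)$. By the observation above, $\phi \luntil \psi \in B_\chi$, so it belongs to one of the three components $\Subf(\chi)$, $\Subf(\top \lsince \wprevious \bot)$, or $\bigcup\{\Subf(\top \luntil \neg \jbox{t}_i \theta) \mid \jbox{\tgeneralize t}_i \lalways \theta \in \Subf(\chi)\}$. (The middle component can actually be ruled out since $\top \lsince \wprevious \bot$ contains no until subformulas, but this is not needed.) In any case, $\phi \luntil \psi$ is a subformula in the sense of $\Subf$ of some formula, and by the clause $\Subf(\alpha \luntil \beta) = \{\alpha \luntil \beta\} \cup \Subf(\alpha) \cup \Subf(\beta)$ together with an easy induction, $\Subf$ is closed under the subformula relation. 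Hence $\phi, \psi \in B_\chi$, which gives $\phi, \psi \in \Subf^+(\chi)$, and from $\psi \in B_\chi$ we get $\lneg \psi \in \Subf^+(\chi)$ directly from the definition of $\Subf^+$.

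Next, the since case proceeds identically. If $\phi \lsince \psi \in \Subf^+(\chi)$, it is not a negation, so $\phi \lsince \psi \in B_\chi$, and by closure of $\Subf$ under direct subformulas we obtain $\phi, \psi \in B_\chi \subseteq \Subf^+(\chi)$, whence $\lneg \psi \in \Subf^+(\chi)$.

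The proof is essentially mechanical; there is no real obstacle, only the bookkeeping verification that the modified definition of $\Subf$ on justification assertions does not disturb the standard closure-under-direct-subformulas property for the propositional and temporal connectives, which is immediate from inspecting the defining clauses.
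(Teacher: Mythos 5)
Your proof is correct, and it is worth noting how it differs in organization from the paper's. The paper argues by an explicit case distinction on which component of $B_\chi$ the until/since formula comes from (and, within the third component, on whether the formula is $\top \luntil \neg \jbox{t}_i \sigma$ itself or a proper member of its $\Subf$-set), tracing $\phi$, $\psi$, $\neg\psi$ in each case; you instead isolate a single closure fact — if $\alpha \luntil \beta$ or $\alpha \lsince \beta$ lies in $\Subf(\theta)$ then so do $\alpha$ and $\beta$ — which holds because every clause of $\Subf$, including the modified clauses for justification assertions, has the shape $\{\theta\} \cup \bigcup_j \Subf(\rho_j)$, so membership propagates along the defining recursion; combined with the observation that an until/since formula is never syntactically a negation (hence lies in $B_\chi$), this settles all cases at once. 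Your uniform route buys two things: it treats $\luntil$ and $\lsince$ and all three components of $B_\chi$ simultaneously, and it bypasses the paper's intermediate claims in its third case ($\phi \luntil \psi \in \Subf(\sigma)$ and $\sigma \in \Subf(\chi)$), which are not literally guaranteed when the until formula originates in an annotation of an indexed application $\tapp_\rho$ occurring in $t$, or when $t$ is itself a $\tgeneralize$-term (the special clause for $\jbox{\tgeneralize t}_i \lalways \sigma$ omits $\Subf(\lalways\sigma)$); the inclusion $\Subf(\neg \jbox{t}_i \sigma) \subseteq \Subf(\jbox{\tgeneralize t}_i \lalways \sigma) \subseteq \Subf(\chi)$ that your transitivity argument supplies is exactly what makes that case go through cleanly. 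What the paper's case-wise proof buys is an explicit record of where each of $\phi$, $\psi$, $\neg\psi$ lands (sometimes only via the negated part of $\Subf^+(\chi)$). One caution, which you already handle correctly by restricting attention to the propositional and temporal connectives: full closure of $\Subf$ under the ordinary subformula relation is false for the modified definition (e.g.\ $\lalways\phi \notin \Subf(\jbox{\tgeneralize t}_i \lalways \phi)$ in general), so the closure lemma must be stated exactly in the limited form you use.
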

\begin{proof}
	We first show that if $\phi \luntil \psi \in \Subf^+(\chi)$, then $\phi, \psi \in B_\chi$. There are three cases:
	
	\begin{enumerate}
		\item $\phi \luntil \psi \in \Subf(\chi)$. Clearly $\phi, \psi \in \Subf(\chi)$, and hence $\phi, \psi \in B_\chi$.
		
		\item $\phi = \top$, $\psi = \neg \jbox{t}_i \sigma$ such that $\jbox{\tgeneralize t}_i \lalways \sigma \in \Subf(\chi)$. Then from $\neg \jbox{t}_i \sigma \in \Subf(\jbox{\tgeneralize t}_i \lalways \sigma)$, it follows that $\psi \in \Subf(\chi)$. Thus $\phi, \psi \in B_\chi$.
		
		\item $\phi \luntil \psi \in \Subf(\neg \jbox{t}_i \sigma)$ such that $\jbox{\tgeneralize t}_i \lalways \sigma \in \Subf(\chi)$. In this case $\phi \luntil \psi \in \Subf(\sigma)$. Since  $\sigma \in \Subf(\chi)$, we get $\phi \luntil \psi \in \Subf(\chi)$ and we reduce to case 1.
	\end{enumerate}

Now we show that if $\phi \lsince \psi \in \Subf^+(\chi)$, then $\phi, \psi \in B_\chi$. There are three cases:

\begin{enumerate}
	\item $\phi \lsince \psi \in \Subf(\chi)$. Clearly $\phi, \psi \in \Subf(\chi)$, and hence $\phi, \psi \in B_\chi$.
	
	\item $\phi = \top$, $\psi = \wprevious \bot$. Then $\phi, \psi \in B_\chi$ as desired.
	
	\item $\phi \lsince \psi \in \Subf(\top \luntil \neg \jbox{t}_i \sigma)$ such that $\jbox{\tgeneralize t}_i \lalways \sigma \in \Subf(\chi)$. In this case $\phi \lsince \psi \in \Subf(\neg \jbox{t}_i \sigma)$, and hence $\phi \lsince \psi \in \Subf(\jbox{t}_i \sigma)$. Since $\neg \jbox{t}_i \sigma \in \Subf(\jbox{\tgeneralize t}_i \lalways \sigma)$,  we get $\phi \lsince \psi \in \Subf(\chi)$ and we reduce to case 1. \qed
\end{enumerate}
\end{proof}

Using Lemma \ref{lem:until-since-Subf}, it is not difficult to show the following results (the proofs are similar to the proofs of Lemmas \ref{lem:until-sequence} and  \ref{lem:since-sequence} and thus are omitted here).

\begin{lemma}\label{lem:until-sequence for generalize}
	For every $\overline{\Gamma} \in \MCS_\chi^\tgeneralize$, if $\phi \luntil \psi \in \overline{\Gamma}$, then there exists a  $\phi \luntil \psi$-sequence starting with $\overline{\Gamma}$.
\end{lemma}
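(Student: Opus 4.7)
The plan is to mirror the proof of Lemma \ref{lem:until-sequence} almost verbatim, replacing $\Sub^+(\chi)$ by $\Subf^+(\chi)$, $\MCS_\chi$ by $\MCS_\chi^\tgeneralize$, and $\vdash_\CS$ by $\vdash_{{\sf L}^{\sf gen}_\CS}$ throughout. The one genuinely new ingredient is Lemma \ref{lem:until-since-Subf}: it guarantees that $\phi \luntil \psi \in \Subf^+(\chi)$ still implies $\phi, \psi, \neg\psi \in \Subf^+(\chi)$, which was exactly the closure property the earlier argument relied on to conclude via axiom $\utwoax$ and Lemma \ref{lem: Facts about MCS-chi for generalize} that $\neg\psi$ and $\phi$ must both lie in the original $\overline{\Gamma}$.

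First I would assume for contradiction that $\phi \luntil \psi \in \overline{\Gamma}$ while no $\phi \luntil \psi$-sequence starts with $\overline{\Gamma}$, and deduce $\neg\psi, \phi \in \overline{\Gamma}$ as just described. Next I would define $T_\luntil$ to be the smallest subset of $\MCS_\chi^\tgeneralize$ containing $\overline{\Gamma}$ and closed under the rule: if $\overline{\Delta} \in T_\luntil$, $\overline{\Delta} R_\lnext \overline{\Sigma}$, and $\phi \in \overline{\Sigma}$, then $\overline{\Sigma} \in T_\luntil$. The same minimality argument as in Lemma \ref{lem:until-sequence} shows that every $\overline{\Sigma} \in T_\luntil \setminus \{\overline{\Gamma}\}$ has a predecessor in $T_\luntil$, so each element of $T_\luntil$ is reachable from $\overline{\Gamma}$ along an $R_\lnext$-chain on which $\phi$ holds throughout; the non-existence of a witnessing $\phi \luntil \psi$-sequence then forces $\neg\psi$ into every member of $T_\luntil$.

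Setting $\rho := \bigvee\{\bigwedge \overline{\Delta} \mid \overline{\Delta} \in T_\luntil\}$, this gives $\vdash_{{\sf L}^{\sf gen}_\CS} \rho \to \neg\psi$. A case split on $\phi \in \overline{\Sigma}$ for an arbitrary $R_\lnext$-successor $\overline{\Sigma}$ of some $\overline{\Delta} \in T_\luntil$ yields either $\overline{\Sigma} \in T_\luntil$ (hence $\vdash \bigwedge\overline{\Sigma} \to \rho$) or $\vdash \bigwedge\overline{\Sigma} \to \neg\phi \wedge \neg\psi$ (since otherwise a forbidden sequence would exist). Combining these and invoking Lemma \ref{l:4.4} for each $\overline{\Delta} \in T_\luntil$ produces $\vdash \bigwedge \overline{\Delta} \to \lnext(\rho \vee (\neg\phi \wedge \neg\psi))$ and therefore $\vdash \rho \to \lnext(\rho \vee (\neg\phi \wedge \neg\psi))$. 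Applying $\uindrule$ finally delivers $\vdash \rho \to \neg(\phi \luntil \psi)$, contradicting $\phi \luntil \psi \in \overline{\Gamma} \subseteq T_\luntil$.

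The only real obstacle is bookkeeping: I must verify that Lemma \ref{l:4.4} and the structural lemmas about $R_\lnext$ transfer unchanged to $\MCS_\chi^\tgeneralize$. This is essentially automatic, because the proofs of those lemmas depended only on (i) $\Subf^+(\chi)$ being finite and closed under a single negation, and (ii) the existence of sufficiently many maximally consistent extensions, both of which hold here by construction; no property specific to the original set $\Sub^+(\chi)$ was ever used.
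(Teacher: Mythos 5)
Your proposal matches the paper's intended argument exactly: the paper omits the proof, stating that it is obtained from the proof of Lemma \ref{lem:until-sequence} by using Lemma \ref{lem:until-since-Subf} to secure the closure property $\phi, \psi, \neg\psi \in \Subf^+(\chi)$, and noting that all results of Section \ref{sec:Maximal consistent sets} carry over verbatim when $\Sub^+(\chi)$ is replaced by $\Subf^+(\chi)$ and $\vdash_\CS$ by $\vdash_{{\sf L}^{\sf gen}_\CS}$. Your reconstruction fills in precisely these steps correctly, so it is the same proof in essentially the same form.
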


\begin{lemma}\label{lem:since-sequence for generalize}
	For every $\overline{\Gamma} \in \MCS_\chi^\tgeneralize$, if $\phi \lsince \psi \in \overline{\Gamma}$, then there exists a  $\phi \lsince \psi$-sequence ending with $\overline{\Gamma}$.
\end{lemma}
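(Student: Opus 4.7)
The plan is to mirror the proof of Lemma~\ref{lem:since-sequence}, with three adjustments: we replace $\MCS_\chi$ by $\MCS_\chi^\tgeneralize$, replace $\Sub^+(\chi)$ by $\Subf^+(\chi)$, and invoke Lemma~\ref{lem:until-since-Subf} at the very beginning to guarantee that $\phi$, $\psi$ and $\neg\psi$ all belong to $\Subf^+(\chi)$. After this, every inference used in the original proof has an analogue in the new setting, since the remark preceding the lemma tells us that all results of Section~\ref{sec:Maximal consistent sets} transfer verbatim when $\Sub^+(\chi)$ is replaced by $\Subf^+(\chi)$ and $\vdash_\CS$ by $\vdash_{{\sf L}^{\sf gen}_\CS}$.

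More concretely, suppose $\phi \lsince \psi \in \overline{\Gamma}$ and assume, for contradiction, that no $\phi \lsince \psi$-sequence ends with~$\overline{\Gamma}$. By Lemma~\ref{lem:until-since-Subf}, $\phi,\psi,\neg\psi\in\Subf^+(\chi)$. Using axiom~$\stwoax$ together with Lemma~\ref{lem: Facts about MCS-chi for generalize}, exactly as in the proof of Lemma~\ref{lem:until-sequence}, one shows $\phi,\neg\psi\in\overline{\Gamma}$. Then define $T_\lsince$ to be the smallest subset of $\MCS_\chi^\tgeneralize$ containing $\overline{\Gamma}$ and closed under the rule: if $\overline{\Sigma}\in T_\lsince$, $\overline{\Delta}\in\MCS_\chi^\tgeneralize$, $\overline{\Delta}R_\lnext\overline{\Sigma}$, and $\phi\in\overline{\Delta}$, then $\overline{\Delta}\in T_\lsince$. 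By a minimality argument identical to the one used in Lemma~\ref{lem:since-sequence}, every element of $T_\lsince$ is the start of a finite $R_\lnext$-chain that ends at $\overline{\Gamma}$ and contains $\phi$ throughout. Combined with the non-existence assumption, this forces $\psi\notin\overline{\Delta}$ for every $\overline{\Delta}\in T_\lsince$; by Lemma~\ref{lem: Facts about MCS-chi for generalize}(1) we conclude $\neg\psi\in\overline{\Delta}$.

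Setting $\rho := \bigvee\{\bigwedge\overline{\Sigma}\mid\overline{\Sigma}\in T_\lsince\}$, this yields $\vdash_{{\sf L}^{\sf gen}_\CS}\rho\to\neg\psi$. Next, for any $\overline{\Sigma}\in T_\lsince$ and any $\overline{\Delta}\in\MCS_\chi^\tgeneralize$ with $\overline{\Delta}R_\lnext\overline{\Sigma}$, either $\phi\in\overline{\Delta}$ (so by definition $\overline{\Delta}\in T_\lsince$ and $\vdash\bigwedge\overline{\Delta}\to\rho$), or $\phi\notin\overline{\Delta}$ (so $\neg\phi\in\overline{\Delta}$, and since no $\phi\lsince\psi$-sequence ends at~$\overline{\Gamma}$, also $\neg\psi\in\overline{\Delta}$). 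Hence in either case $\vdash_{{\sf L}^{\sf gen}_\CS}\bigwedge\overline{\Delta}\to\rho\vee(\neg\phi\wedge\neg\psi)$. Applying the analogue of Lemma~\ref{lem:wprevious-relation} to each $\overline{\Sigma}\in T_\lsince$ gives $\vdash\bigwedge\overline{\Sigma}\to\wprevious(\rho\vee(\neg\phi\wedge\neg\psi))$, whence $\vdash\rho\to\wprevious(\rho\vee(\neg\phi\wedge\neg\psi))$.

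Finally, a single application of the rule $\sindrule$ yields $\vdash_{{\sf L}^{\sf gen}_\CS}\rho\to\neg(\phi\lsince\psi)$. Because $\overline{\Gamma}\in T_\lsince$ we obtain $\vdash_{{\sf L}^{\sf gen}_\CS}\bigwedge\overline{\Gamma}\to\neg(\phi\lsince\psi)$, contradicting $\phi\lsince\psi\in\overline{\Gamma}$. The main obstacle I anticipate is not the core reasoning — which is literally the since-case of Lemma~\ref{lem:since-sequence} — but rather checking that Lemma~\ref{lem:wprevious-relation} and the surrounding $R_\lnext$-machinery have been re-established for $\MCS_\chi^\tgeneralize$ (which the remark preceding Lemma~\ref{lem: Facts about MCS-chi for generalize} asserts but does not verify), and that Lemma~\ref{lem:until-since-Subf} covers every case of $\phi\lsince\psi\in\Subf^+(\chi)$ (including the new case where $\phi\lsince\psi$ arises inside a formula of the form $\top\luntil\neg\jbox{t}_i\sigma$ introduced by the $\tgeneralize$-closure), so that $\phi$ and $\neg\psi$ really live in $\Subf^+(\chi)$.
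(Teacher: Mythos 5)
Your proposal is correct and follows essentially the same route as the paper, which itself omits the proof with the remark that it mirrors Lemma~\ref{lem:since-sequence} (and Lemma~\ref{lem:until-sequence}) once Lemma~\ref{lem:until-since-Subf} supplies $\phi,\psi,\neg\psi\in\Subf^+(\chi)$ and the Section~\ref{sec:Maximal consistent sets} machinery is transferred to $\MCS_\chi^\tgeneralize$ with $\vdash_{{\sf L}^{\sf gen}_\CS}$. The concerns you flag at the end are exactly the points the paper handles: the transfer of the $R_\lnext$ results (including Lemma~\ref{lem:wprevious-relation}) is asserted in the remark preceding Lemma~\ref{lem: Facts about MCS-chi for generalize}, and Lemma~\ref{lem:until-since-Subf} explicitly treats the case where $\phi\lsince\psi$ sits inside $\Subf(\top\luntil\neg\jbox{t}_i\sigma)$.
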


\begin{corollary}
	For every $\overline{\Gamma} \in \MCS_\chi$, there is an acceptable sequence containing $\overline{\Gamma}$. 
\end{corollary}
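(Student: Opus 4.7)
The plan is to mimic the proof of Corollary \ref{cor:acceptable sequence starts with Gamma} verbatim, replacing $\Sub^+(\chi)$ by $\Subf^+(\chi)$ and using the ``generalize'' analogues of the underlying lemmas. The two ingredients needed are (i) that every $\overline{\Gamma}\in\MCS_\chi^\tgeneralize$ contains $\top\lsince\wprevious\bot$, which produces a finite past-sequence ending at $\overline{\Gamma}$ whose first state is initial, and (ii) that any such finite sequence can be extended into the future to an acceptable sequence.

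First I would check that $\top\lsince\wprevious\bot\in\overline{\Gamma}$. By construction $B_\chi$ contains $\Subf(\top\lsince\wprevious\bot)$, so in particular $\top\lsince\wprevious\bot\in\Subf^+(\chi)$. Since $\initialax$ is an axiom of ${\sf L}^{\sf gen}$, we have $\vdash_{{\sf L}^{\sf gen}_\CS}\top\lsince\wprevious\bot$, whence by item~3 of Lemma~\ref{lem: Facts about MCS-chi for generalize} the formula lies in $\overline{\Gamma}$. Applying Lemma~\ref{lem:since-sequence for generalize} to it yields a finite $\top\lsince\wprevious\bot$-sequence $(\overline{\Gamma_0},\overline{\Gamma_1},\ldots,\overline{\Gamma_n})$ ending with $\overline{\Gamma}$ in which $\wprevious\bot\in\overline{\Gamma_0}$ and $\overline{\Gamma_j}R_\lnext\overline{\Gamma_{j+1}}$ for $0\le j<n$. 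Thus $\overline{\Gamma_0}$ is initial, satisfying clause~3 of Definition~\ref{Def:acceptable sequence}.

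Next I would invoke the straightforward analogue of Lemma~\ref{lem:finite seq to acceptable seq} to extend this finite sequence to an infinite acceptable one. Its proof carries over without change: the only facts used there are seriality of $R_\lnext$ on $\MCS_\chi$ (Lemma~\ref{lem:R-next is serial}), Lemma~\ref{lem:R-next} for propagating until-formulas, and Lemma~\ref{lem:until-sequence} for realizing an unfulfilled $\phi\luntil\psi$. All three tools are available in the present setting, the last one in the form of Lemma~\ref{lem:until-sequence for generalize}. Iterating over all until-formulas in all states of the growing sequence and using seriality of $R_\lnext$ whenever the sequence might still be finite produces in the limit an infinite sequence $(\overline{\Gamma_0},\overline{\Gamma_1},\ldots)$ satisfying conditions 1--3 of Definition~\ref{Def:acceptable sequence} and containing $\overline{\Gamma}=\overline{\Gamma_n}$.

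No real obstacle is anticipated, since the constructive mechanism of Lemma~\ref{lem:finite seq to acceptable seq} transfers verbatim. The only point worth double-checking is that when Lemma~\ref{lem:until-sequence for generalize} is applied to newly introduced until-formulas $\phi\luntil\psi$ arising during the extension, the formula actually lies in $\Subf^+(\chi)$ so that the lemma is applicable; this is guaranteed because such until-formulas are already present in some $\overline{\Gamma_i}\subseteq\Subf^+(\chi)$, and Lemma~\ref{lem:until-since-Subf} ensures that $\phi$ and $\psi$ likewise belong to $\Subf^+(\chi)$, so the relevant membership conditions used in the extension argument continue to hold.
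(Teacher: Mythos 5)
Your proposal is correct and is essentially the paper's own argument: the paper omits the proof precisely because it is the verbatim adaptation of Corollary~\ref{cor:acceptable sequence starts with Gamma} — use \initialax{} and the fact that $\top\lsince\wprevious\bot\in\Subf^+(\chi)$ to get it into $\overline{\Gamma}$, apply Lemma~\ref{lem:since-sequence for generalize} to reach an initial state, and extend forward via the $\Subf^+(\chi)$-analogue of Lemma~\ref{lem:finite seq to acceptable seq}, with Lemma~\ref{lem:until-since-Subf} guaranteeing the needed closure for until-formulas. The only cosmetic slip is the citation: item~2 of Lemma~\ref{lem: Facts about MCS-chi for generalize} (or simply maximality of $\Gamma$) is the directly applicable fact rather than item~3, which is immaterial.
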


The following is an auxiliary lemma to be used in the proof of completeness.

\begin{lemma}\label{lem:evidence generalize condition}
Let $(\overline{\Gamma_0}, \overline{\Gamma_1}, \ldots)$ be an acceptable sequence of elements of $\MCS_\chi^\tgeneralize$. If $\jbox{\tgeneralize t}_i \lalways \phi \in \Subf(\chi)$ and $\jbox{t}_i \phi \in \Gamma_m$ for all $m \geq n$, then $ \jbox{\tgeneralize t}_i \lalways \phi \in \Gamma_n$.
\end{lemma}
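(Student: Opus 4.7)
The plan is to prove the contrapositive: assuming $\jbox{\tgeneralize t}_i \lalways \phi \notin \Gamma_n$, I will produce some $m \geq n$ with $\jbox{t}_i \phi \notin \Gamma_m$, contradicting the hypothesis. The key observation that makes this work is that $B_\chi$ was deliberately enlarged to contain $\top \luntil \neg \jbox{t}_i \phi$ whenever $\jbox{\tgeneralize t}_i \lalways \phi \in \Subf(\chi)$, so the eventuality hidden inside $\neg \jbox{\tgeneralize t}_i \lalways \phi$ is a legitimate subformula and the acceptability machinery applies to it.

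First, since $\jbox{\tgeneralize t}_i \lalways \phi \in \Subf(\chi)$, Lemma \ref{lem: Facts about MCS-chi for generalize}(1) gives $\neg \jbox{\tgeneralize t}_i \lalways \phi \in \overline{\Gamma_n}$ from the assumption that the positive formula is not in $\Gamma_n$. Next I apply the axiom $\generalizeprinciple$, i.e.\ $\lalways \jbox{t}_i \phi \to \jbox{\tgeneralize t}_i \lalways \phi$, in contrapositive form to conclude $\Diamond \neg \jbox{t}_i \phi \in \Gamma_n$, which unfolds to $\top \luntil \neg \jbox{t}_i \phi \in \Gamma_n$. Using the definition of $B_\chi$, this until-formula lies in $\Subf^+(\chi)$, so by Lemma \ref{lem: Facts about MCS-chi for generalize}(3) it belongs to $\overline{\Gamma_n}$.

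Finally I invoke acceptability of $(\overline{\Gamma_0}, \overline{\Gamma_1}, \ldots)$: clause~2 of Definition \ref{Def:acceptable sequence} applied to $\top \luntil \neg \jbox{t}_i \phi \in \overline{\Gamma_n}$ yields some $m \geq n$ with $\neg \jbox{t}_i \phi \in \overline{\Gamma_m}$, hence $\neg \jbox{t}_i \phi \in \Gamma_m$. This directly contradicts the hypothesis $\jbox{t}_i \phi \in \Gamma_m$ for all $m \geq n$, completing the proof.

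The only subtle point is step two, verifying that $\top \luntil \neg \jbox{t}_i \phi$ really sits in $\Subf^+(\chi)$; this is not a routine subformula closure but rests on the extra clause put into $B_\chi$ specifically to handle $\generalizeprinciple$. Once that membership is in hand, the argument is a direct application of the already-established acceptable-sequence apparatus, so I expect no further obstacles.
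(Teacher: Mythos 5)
Your argument is correct and follows essentially the same route as the paper's own proof: assume $\jbox{\tgeneralize t}_i \lalways \phi \notin \Gamma_n$, use $\generalizeprinciple$ to extract $\top \luntil \neg \jbox{t}_i \phi \in \overline{\Gamma_n}$ (legitimate precisely because of the extra clause in $B_\chi$), and then invoke clause~2 of the acceptability definition to contradict the hypothesis. No gaps; the subtlety you flag about membership in $\Subf^+(\chi)$ is exactly the point the paper relies on as well.
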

\begin{proof}
Suppose towards a contradiction that $ \jbox{\tgeneralize t}_i \lalways \phi \not\in \Gamma_n$. Thus $\neg \jbox{\tgeneralize t}_i \lalways \phi \in \Gamma_n$, and then by $\generalizeprinciple$ we have $\neg \lalways \jbox{t}_i \phi \in \Gamma_n$. Hence $\leventually \neg \jbox{t}_i \phi \in \Gamma_n$. Note that $\leventually \neg \jbox{t}_i \phi$ is an abbreviation for $\top \luntil \neg \jbox{t}_i \phi$. On the other hand, from $\jbox{\tgeneralize t}_i \lalways \phi \in \Subf(\chi)$ it follows that $\top \luntil \neg \jbox{t}_i \phi \in \Subf^+ (\chi)$. Thus  $\top \luntil \neg \jbox{t}_i \phi \in \overline{\Gamma_n}$. Since $(\overline{\Gamma_0}, \overline{\Gamma_1}, \ldots)$ is an acceptable sequence, there exists $m \geq n$ such that $\neg \jbox{t}_i \phi \in \overline{\Gamma_m}$, and hence $\neg \jbox{t}_i \phi \in \Gamma_m$ which contradicts the hypothesis of the Lemma.\qed
\end{proof}

Given an ${\sf L}^{\sf gen}$-model $\M = (r,S,\evidence_1,\ldots,\evidence_\numberofagents, \valuation)$ for $\CS$ and a ternary relation $\B \subseteq S \times \Terms \times \Formulae$ and an agent $i$,
we define an operator 
\[
\Op{\B}_i: \powerset(S \times \Terms \times \Formulae)  \to  \powerset(S \times \Terms \times \Formulae)
\]
for $\CS$ by
\begin{equation*}
\begin{split}
\OpB(X) := \{ &(r(n),t,\phi) \ |\ \\
& (r(n),t,\phi) \in \B \ \lor \\
& \exists r, s (t=r+s \land ( (r(n),r,\phi) \in X \lor   (r(n),s,\phi) \in X )) \ \lor \\
& \exists r,s,\psi (t = r \cdot_\psi s \land (r(n),r,\psi \to \phi) \in X \land (r(n),s,\psi) \in X) \ \lor\\
& \exists r, \psi  (t= \tinspect r \land \phi = \jbox{r}_i \psi  \land  (r(n),r,\psi) \in X )  \ \lor\\
& \exists r,\psi ( t = \tgeneralize r \land \phi = \lalways \psi \land \forall m \geq n (r(m),r,\psi) \in X) 
\}
\end{split}
\end{equation*}

Obviously $\OpB$ is monotone, i.e.
\[
X \subseteq Y
\quad\text{implies}\quad
\OpB(X) \subseteq \OpB(Y).
\]
Therefore, $\OpB$ has a least fixed point, which we denote by $\evidence_i^\mathcal{B}$. That means $\evidence_i^\mathcal{B}$ is the least $X \subseteq  S \times \Terms \times \Formulae$ with
$X = \OpB(X)$.

\begin{definition}\label{def:canonical M-model for generalize}
	Let $(X_0, X_1, \ldots)$ be an acceptable sequence of elements of $\MCS_\chi^\tgeneralize$ for ${\sf L}^{\sf gen}_\CS$. The $\chi$-canonical  model~$\M = (r,S,\evidence_1,\ldots,\evidence_\numberofagents, \valuation)$ for $\CS$
	with respect to $(X_0, X_1, \ldots)$
	is  defined as follows:
	\begin{enumerate}
		
		\item $S \colonequals \{X_0, X_1, \ldots\}$.
		\item $r(n) := X_n$.\\
		Now we define relations $\B_\agent \subseteq  S \times \Terms \times \Formulae$ for each agent $\agent \in \Ag$ by
		\[
		(X_n,t,\phi) \in \B_\agent
		\quad\text{if{f}}\quad
		X_n \vdash_{{\sf L}_\CS^{\sf gen}} \jbox{t}_\agent \phi
		\] 

		\item $\evidence_\agent(X_n,t) \colonequals \{ \phi\ |\ (X_n, t, \phi) \in \evidence_\agent^{\mathcal{B}_\agent} \}$.
		\item $\valuation(X_n) \colonequals \Prop\cap X_n$.
		
	\end{enumerate}
	
\end{definition}

%
%
%
%
%
%

\begin{lemma}
	The $\chi$-canonical  model~$\M = (r,S,\evidence_1,\ldots,\evidence_\numberofagents, \valuation)$ for $\CS$ with respect to an acceptable sequence $(X_0, X_1, \ldots)$ is an ${\sf L}^{\sf gen}_\CS$-model.
\end{lemma}
\begin{proof}
	We only verify the condition $\generalizeevidence$ of Definition \ref{def: evidence functions M-models aditional principles}.
	Suppose that 
	$\phi \in \evidence_i(r(m),t)$ for all $m \geq n$. Thus $(r(m), t, \phi) \in \evidence_i^{\mathcal{B}_i}$  for all $m \geq n$.
	Since $\evidence_i^{\mathcal{B}_i}$ is a fixed point of $\Op{\mathcal{B}_i}_i$, we immediately get
	$(r(n), \tgeneralize t, \lalways \phi) \in \evidence_i^{\mathcal{B}_i}$. Hence $\lalways \phi \in \evidence_i(r(n),\tgeneralize t)$, as desired. \qed
\end{proof}

\begin{lemma}\label{lem:fixed-point implies base}
	If $\jbox{t}_i \phi \in \Subf^+ (\chi)$ and $(r(n),t,\phi) \in \evidence_i^{\mathcal{B}_i}$, then $\jbox{t}_i \phi \in \Gamma_{n}$. 
\end{lemma}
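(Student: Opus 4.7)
The natural approach is induction on the stages of the least-fixed-point construction of $\evidence_i^{\mathcal{B}_i}$. Since $\Op{\mathcal{B}_i}_i$ is monotone on $\powerset(S \times \Terms \times \Formulae)$, we have $\evidence_i^{\mathcal{B}_i} = \bigcup_{\alpha} \Op{\mathcal{B}_i}_i^\alpha(\emptyset)$, so the plan is to prove by (transfinite) induction on $\alpha$ the following stronger statement: for every $(r(n), t, \phi) \in \Op{\mathcal{B}_i}_i^\alpha(\emptyset)$, if $\jbox{t}_i \phi \in \Subf^+(\chi)$ then $\jbox{t}_i \phi \in \Gamma_n$. The claim of the lemma then falls out by taking the union over stages.

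For the successor step, I would perform a case analysis on which disjunct of the definition of $\OpB$ witnesses the membership of $(r(n), t, \phi)$. The base case $(r(n),t,\phi) \in \mathcal{B}_i$ is immediate from the definition of $\mathcal{B}_i$. For the remaining cases, the overall pattern is the same: read off the structure of $t$, observe that the corresponding justification assertions appearing as subterms of $\jbox{t}_i \phi$ are themselves in $\Subf^+(\chi)$ by the extended definition of $\Subf$, apply the induction hypothesis at the earlier stage to land inside the full MCS $\Gamma_n$, and then close under the matching justification axiom ($\sumax$ for $t=r+s$, indexed $\appax$ for $t=r \tapp_\psi s$, and $\posintax$ for $t = \tinspect r$). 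In each case the key subformula-closure fact is exactly the clause added to $\Subf$ in the new definition, so the induction hypothesis can legitimately be applied.

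The main obstacle is the $\tgeneralize$ clause: if $t = \tgeneralize r$ and $\phi = \lalways \psi$ with $(r(m), r, \psi)$ at a lower stage for every $m \geq n$, I need to apply the induction hypothesis at every point $r(m)$ simultaneously. This is where the extended subformula set pays off: since $\jbox{\tgeneralize r}_i \lalways \psi \in \Subf^+(\chi)$, the new clause for $\Subf$ on $\tgeneralize$-formulas puts $\jbox{r}_i \psi$ into $\Subf^+(\chi)$, so the induction hypothesis yields $\jbox{r}_i \psi \in \Gamma_m$ for all $m \geq n$. At that point Lemma \ref{lem:evidence generalize condition} (whose hypothesis is tailored precisely for this moment, using that $\jbox{\tgeneralize r}_i \lalways \psi \in \Subf(\chi)$ to control the auxiliary $\top \luntil \neg \jbox{r}_i \psi$ formula through the acceptable sequence) delivers $\jbox{\tgeneralize r}_i \lalways \psi \in \Gamma_n$. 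At limit stages $\alpha$ the argument is automatic since $(r(n),t,\phi) \in \Op{\mathcal{B}_i}_i^\alpha(\emptyset)$ means it already appears at some earlier stage, so the induction hypothesis applies directly.
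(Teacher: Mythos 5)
Your proposal is correct and follows essentially the same route as the paper: the paper's proof is exactly an induction on the build-up of the least fixed point $\evidence_i^{\mathcal{B}_i}$, with the base case trivial, the $+$, $\cdot_\psi$, $\tinspect$ cases handled via the extended $\Subf$-closure plus the corresponding justification axiom, and the $\tgeneralize$ case discharged by applying the induction hypothesis at all $m \geq n$ and invoking Lemma \ref{lem:evidence generalize condition}. Your explicit treatment of transfinite stages and limit ordinals only spells out what the paper leaves implicit in ``induction on the build-up.''
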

\begin{proof}
	By induction on the build-up of $\evidence_i^{\B_i}$. We distinguish the following cases:
	
	\begin{enumerate}
		\item Base case. The case $(r(n),t,\phi) \in \mathcal{B}_i$ is trivial.
		
		\item $\exists r, s (t=r+s \land ( (r(n),r,\phi) \in \evidence_i^{\B_i} \lor   (r(n),s,\phi) \in \evidence_i^{\B_i} ))$. Since $\jbox{t}_i \phi \in \Subf^+ (\chi)$  we get $\jbox{r}_i \phi \in \Subf^+ (\chi)$ and $\jbox{s}_i \phi \in \Subf^+ (\chi)$. By I.H. we get $\jbox{r}_i \phi \in \Gamma_n$ or $\jbox{s}_i \phi \in \Gamma_n$. Then $\jbox{r+s}_i \phi \in \Gamma_n$, and thus $\jbox{t}_i \phi \in \Gamma_{n}$. The case where $t = r \cdot_\psi s$ or $t = !r$ is treated similarly.

		\item $\exists r,\psi ( t = \tgeneralize r \land \phi = \lalways \psi \land \forall m \geq n (r(m),r,\psi) \in \evidence_i^{\mathcal{B}_i})$. It is easy to show that from $\jbox{t}_i \phi \in \Subf^+ (\chi)$ it follows that $\jbox{t}_i \phi \in \Subf (\chi)$ and $\jbox{r}_i \psi \in \Subf^+ (\chi)$. By the induction hypothesis, for all $m \geq n$ we have $\jbox{r}_i \psi \in \Gamma_m$. By Lemma \ref{lem:evidence generalize condition}, we get $ \jbox{\tgeneralize r}_i \lalways \psi \in \Gamma_n$, and therefore  $\jbox{t}_i \phi \in \Gamma_{n}$ as desired. \qed
	\end{enumerate}
\end{proof}

\begin{lemma}[Truth Lemma]
	Let $\M = (r,S,\evidence_1,\ldots,\evidence_\numberofagents, \valuation)$ be the $\chi$-canonical model for $\CS$ with respect to an acceptable sequence $(X_0, X_1, \ldots)$. For every formula $\psi \in \Subf^+ (\chi)$, and every $n \in \N$ we have:
	\[ 
	(\M, r(n)) \models \psi 
	\quad\text{if{f}}\quad
	\psi \in r(n). 
	\]
\end{lemma}
\begin{proof}
	As usual, the proof is by induction on the structure of $\psi$. We show only the following case:
	\begin{itemize}
		\item $\psi = \jbox{t}_i \phi$.
		
		$(\Rightarrow)$ If $(\M, r(n)) \models \jbox{t}_i \phi$, then $(r(n),t,\phi) \in \evidence_i^{\mathcal{B}_i}$. Thus, by Lemma \ref{lem:fixed-point implies base},  $\jbox{t}_i \phi \in r(n)$.
		
		$(\Leftarrow)$ If $\jbox{t}_i \phi \in r(n)$, then $(r(n),t,\phi) \in \evidence_i^{\mathcal{B}_i}$. By $\refax$,  we have  $\phi \in \Gamma_n$ and by I.H.~we get $(\M, r(n)) \models \phi$. We conclude $(\M, r(n)) \models \jbox{t}_i \phi$.
		\qed
	\end{itemize}
\end{proof}

\begin{theorem}[Soundness and completeness]\label{thm:completeness M-models-LPLTL^I generalize}
	Let  $\CS$ be a constant specification for ${\sf L}^{\sf gen}$. Then we have
	 $ \vdash_{{\sf L}^{\sf gen}_\CS} \phi$ iff $\M\entails \phi$ for all ${\sf L}^{\sf gen}_\CS$-models $\M$.
\end{theorem}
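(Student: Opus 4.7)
The plan is to follow the same pattern as the proof of Theorem~\ref{thm:completeness LPLTLp M-models}, leveraging the infrastructure developed in Section~\ref{sec:Completeness for generalize}. For soundness, I would proceed by induction on the length of a derivation in ${\sf L}^{\sf gen}_\CS$. All the cases corresponding to axioms and rules of $\LPLTLp$ are handled exactly as in the soundness proof of Theorem~\ref{thm:Completeness M-models}. The only new case is the axiom $\generalizeprinciple$: given any ${\sf L}^{\sf gen}_\CS$-model $\M = (r, S, \evidence_1, \ldots, \evidence_\numberofagents, \valuation)$, if $(\M, r(n)) \models \lalways \jbox{t}_i \phi$, then $\phi \in \evidence_i(r(m), t)$ for all $m \geq n$, and by clause~1 of Definition~\ref{def: evidence functions M-models aditional principles} we get $\lalways \phi \in \evidence_i(r(n), \tgeneralize t)$; combined with the obvious fact that $(\M, r(n)) \models \lalways \phi$, this yields $(\M, r(n)) \models \jbox{\tgeneralize t}_i \lalways \phi$.

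For completeness, suppose $\nvdash_{{\sf L}^{\sf gen}_\CS} \phi$, so that $\{\neg \phi\}$ is $\CS$-consistent and extends to some $\Gamma \in \MCS$ with $\neg\phi \in \Gamma$. Set $\overline{\Gamma} := \Gamma \cap \Subf^+(\phi) \in \MCS_\phi^\tgeneralize$; by the corollary following Lemma~\ref{lem:since-sequence for generalize}, there is an acceptable sequence $(\overline{\Gamma_0}, \overline{\Gamma_1}, \ldots)$ containing $\overline{\Gamma}$, say $\overline{\Gamma_n} = \overline{\Gamma}$. Construct the $\phi$-canonical ${\sf L}^{\sf gen}_\CS$-model $\M$ with respect to this sequence as in Definition~\ref{def:canonical M-model for generalize}. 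By the Truth Lemma, $(\M, r(n)) \not\models \phi$, so $\M \not\models \phi$, as required.

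The potentially delicate step, namely that $\M$ is really an ${\sf L}^{\sf gen}_\CS$-model and that the Truth Lemma goes through for justification assertions, is already taken care of by the earlier lemmas: the fixed-point construction of $\evidence_i^{\B_i}$ guarantees the $\generalizeprinciple$-condition of Definition~\ref{def: evidence functions M-models aditional principles} holds by closure under $\Op{\B_i}_i$, and Lemma~\ref{lem:fixed-point implies base} (which crucially relies on Lemma~\ref{lem:evidence generalize condition}, and hence on the enlarged subformula set $\Subf^+(\chi)$) ensures that membership in $\evidence_i^{\B_i}$ implies the corresponding justification assertion lies in the associated maximal consistent set. Thus, once these ingredients are in place, the remaining proof is essentially a transcription of the completeness argument for $\LPLTLp$ with $\Sub^+$ replaced by $\Subf^+$ and the canonical evidence function replaced by the fixed-point variant; no new idea is required.
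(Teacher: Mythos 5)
Your proposal is correct and follows essentially the same route as the paper: the theorem is obtained exactly by combining the fixed-point canonical model of Definition~\ref{def:canonical M-model for generalize} (whose $\generalizeprinciple$-condition holds by closure of $\evidence_i^{\B_i}$ under $\Op{\B_i}_i$), Lemma~\ref{lem:fixed-point implies base} together with Lemma~\ref{lem:evidence generalize condition}, the Truth Lemma over $\Subf^+(\phi)$, and the acceptable-sequence corollary, with soundness checked axiom-by-axiom as you describe. No discrepancies with the paper's argument.
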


\begin{theorem}[Soundness and completeness]\label{thm:completeness-generalize-interpreted systems}
	Let $Ax = \{ \generalizeprinciple, \nextleftshiftprinciple \}$. Then $\LPLTL^I(Ax)_\CS$ is sound and complete with respect to all interpreted systems of $\LPLTL^I$ satisfying $\generalizeevidence$, $\nextleftshiftevidence$, and $\nextleftshiftR$.
\end{theorem}
\begin{proof}
	We detail the proof for the soundness part. The proof of completeness is similar to the proof of Theorem \ref{thm:completeness M-models-LPLTL^I generalize}.
	
	Let $Ax = \{ \generalizeprinciple, \nextleftshiftprinciple \}$ and $\system = (\runs, S, R_1,\ldots,R_\numberofagents, \evidence_1,\ldots,\evidence_\numberofagents, \valuation)$ be an arbitrary interpreted system for $\LPLTL^I(Ax)$. For an arbitrary $r \in \runs$ and $n \in \N$, assume $(\system, r, n) \models \lalways \jbox{t}_i \phi$. Thus, $(\system, r, m) \models  \jbox{t}_i \phi$ for every $m \geq n$. Hence, $\phi \in \evidence_\agent (r(m), t)$ for every $m \geq n$. By $\generalizeevidence$, we get $\lalways \phi \in \evidence_\agent (r(n), \tgeneralize t)$. Now let $r(n) R_i r'(n')$ and $m' \geq n'$, for arbitrary $r' \in \runs$ and arbitrary $n', m' \in \N$. By $\nextleftshiftR$ we have $r(n + m' - n') R_i r'(m')$. On the other hand, from the assumption we have $(\system, r, n + m' - n') \models  \jbox{t}_i \phi$. Thus, $(\system, r', m') \models   \phi$. Since $m' \geq n'$ was chosen arbitrary we get $(\system, r', n') \models  \lalways \phi$, and since $r'(n')$ was chosen arbitrary, we get $(\system, r, n) \models  \jbox{\tgeneralize t}_i \lalways \phi$ as desired. \qed
	
\end{proof}

We close this section with remarking that it is quit possible to extend this completeness result to extensions of ${\sf L}^{\sf gen}$. For example consider the logic $\LPLTL^I (Ax)$ where $Ax = \{ \generalizeprinciple, \pastgeneralizeprinciple \}$. In order to prove completeness for \linebreak $\LPLTL^I (Ax)$, redefine the operator $\Phi^\mathcal{B}_i$ as follows:

\begin{equation*}
\begin{split}
\OpB(X) := \{ &(r(n),t,\phi) \ |\ \\
& (r(n),t,\phi) \in \B \ \lor \\
& \exists r, s (t=r+s \land ( (r(n),r,\phi) \in X \lor   (r(n),s,\phi) \in X )) \ \lor \\
& \exists r,s,\psi (t = r \cdot_\psi s \land (r(n),r,\psi \to \phi) \in X \land (r(n),s,\psi) \in X) \ \lor\\
& \exists r, \psi  (t= \tinspect r \land \phi = \jbox{r}_i \psi  \land  (r(n),r,\psi) \in X )  \ \lor\\
& \exists r,\psi ( t = \tgeneralize r \land \phi = \lalways \psi \land \forall m \geq n (r(m),r,\psi) \in X)  \ \lor\\
& \exists r,\psi ( t = \tgeneralize_P r \land \phi = \lsofar \psi \land \forall m \leq n (r(m),r,\psi) \in X) 
\}
\end{split}
\end{equation*}

The rest of the proof of soundness and completeness is similar to that of ${\sf L}^{\sf gen}$.


\section{Internalization}
\label{sec:Internalization}

\begin{definition}
	A justification logic $\mathsf{L}$ satisfies \emph{internalization} if for each formula $\phi$ with
	$
	 \vdash_\mathsf{L} \phi
	$
	and for each agent $\agent$, there exists a term $t$ with
	$
	\vdash_\mathsf{L} \jbox{t}_\agent \phi 
	$.
\end{definition}

$\LPLTLp$ satisfies a restricted form of internalization.

\begin{lemma}\label{lem: internalization LPLTLp}
	Let\/ $\CS$ be an axiomatically appropriate constant specification for $\LPLTLp$. For each formula~$\phi$ and each  $\agent$, if $\vdash_\CS \phi$, and $\mprule$ and $\iteratedconstnecrule$ are the only rules that are used in the derivation of $\phi$, then $\vdash_\CS  \jbox{t}_\agent \phi$  for some term $t$.
\end{lemma}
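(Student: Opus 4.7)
The plan is to prove the lemma by induction on the length of the derivation of $\phi$, using the fact that the only rules in play are $\mprule$ and $\iteratedconstnecrule$. Since the temporal necessitation rules $\nextnecrule$, $\prevnecrule$, $\alwaysnecrule$, $\sofarnecrule$ are excluded, the derivation tree has leaves that are either axiom instances or outputs of $\iteratedconstnecrule$, while internal nodes are applications of $\mprule$.

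First I would dispatch the axiom base case. When $\phi$ is an instance of any propositional, temporal, or justification axiom, axiomatic appropriateness of $\CS$ applied at depth $n=1$ with agent $i$ supplies a constant $c$ with $\jbox{c}_i \phi \in \CS$. A single application of $\iteratedconstnecrule$ then yields $\vdash_\CS \jbox{c}_i \phi$, so we may take $t := c$.

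The $\mprule$ inductive step is routine and mirrors the standard justification-logic argument. If $\phi$ follows from $\chi$ and $\chi \limplies \phi$ by $\mprule$, the induction hypothesis supplies terms $s$ and $u$ with $\vdash_\CS \jbox{s}_i \chi$ and $\vdash_\CS \jbox{u}_i(\chi \limplies \phi)$. The $\appax$ instance $\jbox{u}_i(\chi \limplies \phi) \limplies (\jbox{s}_i \chi \limplies \jbox{u \tapp s}_i \phi)$, combined with two applications of $\mprule$, gives $\vdash_\CS \jbox{u \tapp s}_i \phi$, so $t := u \tapp s$ works.

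The main obstacle, and the reason the iterated form of axiom necessitation is preferred to the plain $\constnecrule$, is the case where $\phi$ itself is produced directly by $\iteratedconstnecrule$: here $\phi = \jbox{c_{j_n}}_{i_n} \cdots \jbox{c_{j_1}}_{i_1} \psi \in \CS$ for some axiom $\psi$, and the target $\jbox{t}_i \phi$ would already carry $n+1$ nested justifications around an axiom. The resolution is to apply axiomatic appropriateness to the axiom $\psi$ at depth $n+1$ with agent sequence $i, i_n, \ldots, i_1$, obtaining a constant $c$ (for the outermost slot corresponding to $i$) such that $\jbox{c}_i \jbox{c_{j_n}}_{i_n} \cdots \jbox{c_{j_1}}_{i_1} \psi = \jbox{c}_i \phi$ lies in $\CS$. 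One application of $\iteratedconstnecrule$ then delivers $\vdash_\CS \jbox{c}_i \phi$, with $t := c$. This is exactly the step that $\constnecrule$ cannot handle, since $\constnecrule$ would require $\jbox{c}_i \phi$ to have an axiom sitting immediately under its single justification box, whereas here the axiom is buried beneath $n+1$ nested boxes; hence the preference for $\iteratedconstnecrule$ remarked on earlier is essential precisely at this point.
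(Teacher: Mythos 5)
Your proof is correct and takes essentially the same route as the paper's: induction on the derivation with the same three cases --- an axiom instance handled via axiomatic appropriateness and $\iteratedconstnecrule$, the $\mprule$ step handled via $\appax$, and a conclusion of $\iteratedconstnecrule$ handled by prefixing one further constant for agent $\agent$ and applying $\iteratedconstnecrule$ again. Your closing observation about why $\constnecrule$ would fail at that last case is exactly the point the paper makes in its remark motivating the iterated rule, so nothing further is needed.
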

\begin{proof}
	We proceed by induction on the derivation of $\phi$.
	
	In case $\phi$ is an axiom, since $\CS$ is axiomatically appropriate, there is a constant $c$ with
	\[
	 \vdash_\CS \jbox{c}_\agent \phi.
	\]
	
	In case $\phi$ is derived by modus ponens from $\psi \limplies \phi$ and $\psi$, then, by the induction hypothesis, there are terms $s_1$ and $s_2$ such that $\jbox{s_1}_\agent (\psi \limplies \phi)$ and $\jbox{s_2}_\agent \psi$ are provable.
	Using $\appax$ and modus ponens, we obtain $\jbox{s_1 \tapp s_2}_\agent \phi$. 
	
	In case $\phi$ is $\jbox{c_{j_n}}_{i_n}\ldots\jbox{c_{j_1}}_{i_1} \psi$, derived using \iteratedconstnecrule, since $\CS$ is axiomatically appropriate, we can use $\iteratedconstnecrule$ again to obtain 
	$\jbox{c_{j_{n+1}}}_{\agent} \phi$ for some justification constant $c_{j_{n+1}}$. \qed
\end{proof}

Next we shall extend $\LPLTLp$ to obtain a justification logic with the internalization property. Although the following two formulas are provable in $\LPLTLp$, see Lemma \ref{lem:mix:1}, in order to get the internalization property we need to add them as axioms:

\begin{enumerate}
	\item $\lalways \phi \to \lnext \phi$ \hfill \mixaxone
	\item $\lsofar \phi \to \wprevious \phi$ \hfill \mixaxtwo
\end{enumerate}

Let  $\LPLTL^\mathsf{int}$ be the logic $\LPLTLp$ extended by the axioms $\generalizeprinciple$,\linebreak $\pastgeneralizeprinciple$, $\mixaxone$, and $\mixaxtwo$.

\begin{theorem}[Internalization]\label{thm:internalization LPLTL^int}
	Let\/ $\CS$ be an axiomatically appropriate constant specification for $\LPLTL^\mathsf{int}$. 
	The system $\LPLTL^\mathsf{int}_\CS$ enjoys internalization.
\end{theorem}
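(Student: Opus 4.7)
The plan is to extend the argument of Lemma~\ref{lem: internalization LPLTLp} by induction on the derivation of $\phi$ in $\LPLTL^\mathsf{int}_\CS$, handling the four necessitation rules $\nextnecrule$, $\prevnecrule$, $\alwaysnecrule$, $\sofarnecrule$ which were excluded there. The base cases (axioms, in particular the new axioms $\generalizeprinciple$, $\pastgeneralizeprinciple$, $\mixaxone$, $\mixaxtwo$) are covered by axiomatic appropriateness of $\CS$, and the cases for $\mprule$ and $\iteratedconstnecrule$ are handled exactly as in Lemma~\ref{lem: internalization LPLTLp}, using $\appax$ and re-application of $\iteratedconstnecrule$ respectively.

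The crucial new cases are the four necessitation rules. Suppose $\phi = \lalways \psi$ is derived by $\alwaysnecrule$ from $\vdash_\CS \psi$. By the induction hypothesis there is a term $s$ with $\vdash_\CS \jbox{s}_\agent \psi$; applying $\alwaysnecrule$ to this yields $\vdash_\CS \lalways \jbox{s}_\agent \psi$, and then the axiom $\generalizeprinciple$ together with $\mprule$ gives $\vdash_\CS \jbox{\tgeneralize s}_\agent \lalways \psi$. Symmetrically, if $\phi = \boxminus \psi$ is obtained by $\sofarnecrule$, then from $\vdash_\CS \jbox{s}_\agent \psi$ we derive $\vdash_\CS \boxminus \jbox{s}_\agent \psi$ by $\sofarnecrule$ and conclude $\vdash_\CS \jbox{\thenceforthgeneralize s}_\agent \boxminus \psi$ using $\pastgeneralizeprinciple$.

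For $\phi = \lnext \psi$ derived by $\nextnecrule$ from $\vdash_\CS \psi$, I would first internalize $\psi$ by the induction hypothesis to get $\vdash_\CS \jbox{s}_\agent \psi$, then run the $\lalways$-case above to obtain $\vdash_\CS \jbox{\tgeneralize s}_\agent \lalways \psi$. Now I invoke the new axiom $\mixaxone$: since $\lalways\psi \limplies \lnext\psi$ is an axiom of $\LPLTL^\mathsf{int}$ and $\CS$ is axiomatically appropriate, there is a constant $c$ with $\vdash_\CS \jbox{c}_\agent(\lalways\psi \limplies \lnext\psi)$. One application of $\appax$ and $\mprule$ then gives $\vdash_\CS \jbox{c \tapp \tgeneralize s}_\agent \lnext \psi$. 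The case $\phi = \wprevious \psi$ via $\prevnecrule$ is handled analogously, using $\pastgeneralizeprinciple$ and $\mixaxtwo$ to produce the term $c' \tapp \thenceforthgeneralize s$.

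The main obstacle, and the reason the four extra axioms are needed, is precisely the necessitation rules: without a justification-level counterpart of ``$\vdash \psi$ implies $\vdash \lalways \psi$'' (namely $\generalizeprinciple$) and of ``$\vdash \psi$ implies $\vdash \lnext \psi$'' (obtained by combining $\generalizeprinciple$ with $\mixaxone$), there is no syntactic way to lift an internalizing term for $\psi$ to an internalizing term for $\lnext \psi$ or $\lalways \psi$, and similarly for the past operators. Once these axioms are in place, the induction goes through smoothly and the construction of the term is effective in the derivation.
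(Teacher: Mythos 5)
Your proposal is correct and follows essentially the same route as the paper: induction on the derivation, with the necessitation cases handled by internalizing the premise, applying the corresponding necessitation rule to the justification assertion, using $\generalizeprinciple$ (resp. $\pastgeneralizeprinciple$), and then discharging $\lnext$ (resp. $\wprevious$) via an internalized instance of $\mixaxone$ (resp. $\mixaxtwo$) together with $\appax$. The terms you construct ($\tgeneralize s$, $\thenceforthgeneralize s$, $c \tapp \tgeneralize s$, $c' \tapp \thenceforthgeneralize s$) match the paper's exactly.
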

\begin{proof}
	Suppose that $\phi$ is provable in $\LPLTL^\mathsf{int}_\CS$. Let $i$ be an arbitrary agent. We have to show that $\jbox{t}_i \phi$ is provable in $\LPLTL^\mathsf{int}_\CS$, for some term $t$. We proceed by induction on the derivation of $\phi$. We only consider the following cases:
		
	In case $\phi$ is $\lalways \psi$, derived using $\alwaysnecrule$, then, by the induction hypothesis, there is a term~$s$ such that $\jbox{s}_\agent \psi$ is provable.
	Now, we can use $\alwaysnecrule$ in order to obtain $\lalways \jbox{s}_\agent \psi$ and then $\generalizeprinciple$ and modus ponens to get $
	\jbox{\tgeneralize s}_\agent \lalways \psi
	$.
	
	In case $\phi$ is $\lnext \psi$, derived using $\nextnecrule$, then, as above, we obtain $\jbox{\tgeneralize s}_\agent \lalways \psi$.
	Since $\CS$ is axiomatically appropriate, there is a constant~$c$ with $\jbox{c}_\agent (\lalways \psi \to \lnext \psi)$.
	Thus we finally conclude
	$
	\jbox{c\ \tapp \tgeneralize s}_\agent \lnext \psi
	$.

In case $\phi$ is $\lsofar \psi$, derived using $\sofarnecrule$, then, by the induction hypothesis, there is a term~$s$ such that $\jbox{s}_\agent \psi$ is provable.
Now, we can use $\sofarnecrule$ in order to obtain $\lsofar \jbox{s}_\agent \psi$ and then $\pastgeneralizeprinciple$ and modus ponens to get $
\jbox{\tgeneralize_P s}_\agent \lsofar \psi
$.

In case $\phi$ is $\wprevious \psi$, derived using $\prevnecrule$, then, as above, we obtain $\jbox{\tgeneralize_P s}_\agent \lsofar \psi$.
Since $\CS$ is axiomatically appropriate, there is a constant~$c$ with $\jbox{c}_\agent (\lsofar \psi \to \wprevious \psi)$.
Thus we finally conclude
$
\jbox{c\ \tapp \tgeneralize_P s}_\agent \wprevious \psi
$. \qed
\end{proof}

\begin{remark}
	It is worth noting that there are already some known temporal justification logics that satisfy internalization, although they are formalized using only future operators.  Bucheli in \cite{Bucheli15} show that, for axiomatically appropriate constant specifications, the logics $\LPLTL + \generalizeprinciple + \nextaccessprinciple$ and $\LPLTL + \generalizeprinciple + \alwaysaccessprinciple + \nextleftshiftprinciple$ satisfy internalization.\footnote{Note that the background logic used by Bucheli in \cite{Bucheli15} is different from $\LPLTL$.} In \cite{BucheliGhariStuder2017} the authors introduced another extension of $\LPLTL$, which was called $\LPLTL^\star$ there, that satisfies internalization.
\end{remark}

\begin{theorem}[Internalization]\label{thm:internalization}
 Let $\CS$ be an axiomatically appropriate constant specification for $\LPLTLp(Ax)$ where 
 \[\{ \generalizeprinciple, \pastgeneralizeprinciple, \nextaccessprinciple, \wprevaccessprinciple \} \subseteq Ax.\]
 Then $\LPLTLp(Ax)_\CS$ enjoys internalization.
\end{theorem}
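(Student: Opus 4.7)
The plan is to mimic the proof of Theorem~\ref{thm:internalization LPLTL^int}, replacing the role played there by the explicit mixing axioms $\mixaxone$ and $\mixaxtwo$ with the stronger connecting axioms $\nextaccessprinciple$ and $\wprevaccessprinciple$, which directly convert a justification of a $\lalways$- or $\boxminus$-formula into a justification of the corresponding $\lnext$- or $\wprevious$-formula. As usual, I would proceed by induction on the length of a derivation of $\phi$ in $\LPLTLp(Ax)_\CS$.

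The base cases are routine: if $\phi$ is an instance of one of the propositional, temporal, justification, or connecting axioms, then by axiomatic appropriateness of $\CS$ there is a constant $c$ with $\jbox{c}_i \phi \in \CS$, so $\iteratedconstnecrule$ gives $\vdash \jbox{c}_i \phi$. If $\phi$ has the iterated form $\jbox{c_{j_n}}_{i_n}\cdots \jbox{c_{j_1}}_{i_1}\psi$ obtained by $\iteratedconstnecrule$, then by axiomatic appropriateness we can prepend one more constant. The modus ponens case uses $\appax$ exactly as in Lemma~\ref{lem: internalization LPLTLp}.

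The interesting cases are the four necessitation rules. For $\phi = \lalways\psi$ derived by $\alwaysnecrule$, the induction hypothesis yields a term $s$ with $\vdash \jbox{s}_i\psi$; applying $\alwaysnecrule$ and then $\generalizeprinciple$ gives $\vdash \jbox{\tgeneralize s}_i \lalways\psi$. For $\phi = \lnext\psi$ derived by $\nextnecrule$, I would first build $\vdash \jbox{\tgeneralize s}_i \lalways\psi$ as above and then apply the axiom instance $\jbox{\tgeneralize s}_i \lalways\psi \limplies \jbox{\tnextaccess \tgeneralize s}_i \lnext\psi$ of $\nextaccessprinciple$ together with modus ponens to obtain $\vdash \jbox{\tnextaccess \tgeneralize s}_i \lnext\psi$. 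The past cases are perfectly symmetric: for $\phi = \boxminus\psi$ obtained by $\sofarnecrule$, combine the induction hypothesis with $\sofarnecrule$ and $\pastgeneralizeprinciple$ to get $\vdash \jbox{\thenceforthgeneralize s}_i \boxminus\psi$; for $\phi = \wprevious\psi$ obtained by $\prevnecrule$, continue with the instance $\jbox{\thenceforthgeneralize s}_i \boxminus\psi \limplies \jbox{\tprevaccess \thenceforthgeneralize s}_i \wprevious\psi$ of $\wprevaccessprinciple$ to conclude $\vdash \jbox{\tprevaccess \thenceforthgeneralize s}_i \wprevious\psi$.

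There is no real obstacle: every necessitation rule is matched by exactly one of the four assumed connecting axioms, and the assumption that $\CS$ is axiomatically appropriate ensures that each axiom schema (including the connecting principles themselves, used inside nested contexts) is internalized by some constant. One small technical point worth verifying is that axiomatic appropriateness must cover the connecting axioms as well, so that when the induction encounters an instance of, say, $\generalizeprinciple$ used as a step in the derivation (rather than via the necessitation rules), it still admits a constant witness. Once this is noted, the induction goes through uniformly and yields, for every provable $\phi$ and every agent $i$, a term $t$ (built from the constants supplied by $\CS$ together with the term operators $\cdot$, $+$, $\tinspect{\cdot}$, $\tgeneralize$, $\thenceforthgeneralize$, $\tnextaccess$, $\tprevaccess$) such that $\vdash_{\LPLTLp(Ax)_\CS} \jbox{t}_i\phi$. \qed
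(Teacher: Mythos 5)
Your proof is correct and follows essentially the same route as the paper: induction on the derivation with the base and modus ponens cases as in Lemma~\ref{lem: internalization LPLTLp}, $\generalizeprinciple$ and $\pastgeneralizeprinciple$ handling $\alwaysnecrule$ and $\sofarnecrule$, and then $\nextaccessprinciple$ and $\wprevaccessprinciple$ converting those witnesses into the terms $\tnextaccess\tgeneralize s$ and $\tprevaccess\thenceforthgeneralize s$ for $\nextnecrule$ and $\prevnecrule$, exactly as in the paper. Your remark that axiomatic appropriateness must also cover the connecting axioms is already built into the paper's definition (it quantifies over every axiom instance of the logic $\mathsf{L}$ at hand), so no extra argument is needed there.
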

\begin{proof}
	The proof is similar to the proof of Theorem \ref{thm:internalization LPLTL^int}. We only consider the following cases:
	
	In case $\phi$ is $\lnext \psi$, derived using $\nextnecrule$, then, as in the proof of Theorem \ref{thm:internalization LPLTL^int}, we obtain $\jbox{\tgeneralize s}_\agent \lalways \psi$. Then, by $\nextaccessprinciple$, we get $\jbox{\tnextaccess\tgeneralize s}_\agent \lnext \psi$.

	In case $\phi$ is $\wprevious \psi$, derived using $\prevnecrule$, then, as in the proof of Theorem \ref{thm:internalization LPLTL^int}, we obtain $\jbox{\tgeneralize_P s}_\agent \lsofar \psi$. Then, by $\wprevaccessprinciple$, we get $\jbox{\downarrow_P\tgeneralize_P s}_\agent \wprevious \psi$.
	\qed
\end{proof}

In Theorems \ref{thm:internalization LPLTL^int} and \ref{thm:internalization} we present two logics that satisfy internalization. We now prove that these two logics have the following relationship.

\begin{lemma}
	 Let $\CS$ be an axiomatically appropriate constant specification for $\LPLTLp(Ax)$ where $\{ \mixaxone, \mixaxtwo \} \subseteq Ax.$
	For every agent~$\agent$, formula~$\phi$ and term~$t$ there are terms~$s_1(t)$ and $s_2(t)$ such that
		\[
		\vdash_{\LPLTLp(Ax)_\CS} \jbox{t}_\agent \lalways \phi \limplies \jbox{s_1(t)}_\agent \lnext \phi, \quad \mbox{and}
		\]
		\[
		\vdash_{\LPLTLp(Ax)_\CS} \jbox{t}_\agent \lsofar \phi \limplies \jbox{s_1(t)}_\agent \wprevious \phi.
		\]
		Thus, versions of $\nextaccessprinciple$ and $\wprevaccessprinciple$ are derivable in $\LPLTLp(Ax)_\CS$.
\end{lemma}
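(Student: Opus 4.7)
The plan is to leverage the fact that $\mixaxone$ and $\mixaxtwo$ are now explicit axioms of $\LPLTL^\mathsf{int}$, so that the usual internalization recipe (axiomatic appropriateness together with the application axiom $\appax$) produces the desired terms directly, in complete parallel with the argument already used in the proof of Theorem~\ref{thm:internalization LPLTL^int}.

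First, since $\lalways\phi \limplies \lnext\phi$ is an instance of $\mixaxone$ and $\CS$ is axiomatically appropriate, there exists a constant $c_1$ with $\jbox{c_1}_\agent(\lalways\phi \limplies \lnext\phi) \in \CS$. By $\iteratedconstnecrule$ we obtain
\[
\vdash_{\LPLTL^\mathsf{int}_\CS} \jbox{c_1}_\agent(\lalways\phi \limplies \lnext\phi).
\]
Combining this with the instance
\[
\jbox{c_1}_\agent(\lalways\phi \limplies \lnext\phi) \limplies (\jbox{t}_\agent\lalways\phi \limplies \jbox{c_1\tapp t}_\agent\lnext\phi)
\]
of $\appax$ and two applications of modus ponens gives $\vdash_{\LPLTL^\mathsf{int}_\CS} \jbox{t}_\agent\lalways\phi \limplies \jbox{c_1\tapp t}_\agent\lnext\phi$, so we set $s_1(t) \colonequals c_1\tapp t$.

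The second conjunct is treated identically using $\mixaxtwo$: pick a constant $c_2$ with $\jbox{c_2}_\agent(\boxminus\phi \limplies \wprevious\phi) \in \CS$, internalize it by $\iteratedconstnecrule$, and combine with the corresponding instance of $\appax$ to derive $\vdash_{\LPLTL^\mathsf{int}_\CS} \jbox{t}_\agent\boxminus\phi \limplies \jbox{c_2\tapp t}_\agent\wprevious\phi$; set $s_2(t) \colonequals c_2\tapp t$.

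No substantive obstacle arises: the whole purpose of promoting $\mixaxone$ and $\mixaxtwo$ to axioms of $\LPLTL^\mathsf{int}$ (despite their being derivable in $\LTLp$ by Lemma~\ref{lem:mix:1}) is precisely to make constants available to justify them, so the statement is an immediate by-product of the internalization machinery already established for $\LPLTL^\mathsf{int}$.
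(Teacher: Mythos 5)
Your proof is correct and follows essentially the same route as the paper: since $\mixaxone$ and $\mixaxtwo$ are axioms of $\LPLTL^\mathsf{int}$ and $\CS$ is axiomatically appropriate, one picks constants justifying their instances and combines them with $\appax$ and modus ponens, yielding $s_1(t)$ and $s_2(t)$ of the form constant applied to $t$. Your write-up merely spells out the $\iteratedconstnecrule$ and $\appax$ steps that the paper leaves implicit.
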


\begin{proof}
	Since $\CS$ is axiomatically appropriate and $\mixaxone$ and $\mixaxtwo$ are axioms of $\LPLTL^\mathsf{int}$, there are justification constants $a$ and $b$ such that $\jbox{a}_i (\lalways \phi \to \lnext \phi) \in \CS$ and  $\jbox{b}_i (\lsofar \phi \to \wprevious \phi) \in \CS$. Thus 
	\[
	\vdash_{\LPLTL^\mathsf{int}_\CS} \jbox{t}_\agent \lalways \phi \limplies \jbox{a \cdot t}_\agent \lnext \phi, \quad \mbox{and}
	\]
	\[
	\vdash_{\LPLTL^\mathsf{int}_\CS} \jbox{t}_\agent \lsofar \phi \limplies \jbox{b \cdot t}_\agent \wprevious \phi
	\]
	Finally put $s_1(t) := a \cdot t$ and $s_2(t) := b \cdot t$.\qed
\end{proof}

Combining Theorems \ref{thm:internalization LPLTL^int}, \ref{thm:internalization} with the results of Section \ref{sec:Completeness for generalize}  we then can obtain temporal justification logics, based on $\LPLTL^I$, that satisfy both internalization and completeness. Note that, since $\mixaxone$ and $\mixaxtwo$ are true in all $\LPLTL^I$-models, the class of all models of $$\LPLTL^I (\{\generalizeprinciple, \pastgeneralizeprinciple, \mixaxone, \mixaxtwo\})$$ is the same as the class of all models of $$\LPLTL^I(\{\generalizeprinciple, \pastgeneralizeprinciple\}).$$

\begin{theorem}[Completeness and Internalization]\label{thm:completeness and internalization}
	Let ${\sf L}$ be the logic $\LPLTL^I$ extended by either of the following set of axioms:
	\begin{enumerate}
		\item $\{\generalizeprinciple, \pastgeneralizeprinciple, \mixaxone, \mixaxtwo\}$, or
		
		\item $\{ \generalizeprinciple, \pastgeneralizeprinciple, \nextaccessprinciple, \wprevaccessprinciple \}$.
	\end{enumerate}
	Let $\CS$ be an axiomatically appropriate constant specification for ${\sf L}$.	Then ${\sf L}_\CS$ enjoys internalization and is sound and complete with respect to ${\sf L}_\CS$-models.
\end{theorem}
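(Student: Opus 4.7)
The plan is to treat the three assertions -- soundness, completeness, and internalization -- separately and to reduce each to material already developed in earlier sections, with only minor adjustments needed for the move from $\LPLTLp$ to $\LPLTL^I$ and for the presence of the extra axioms.

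For soundness, I would check the two lists axiom by axiom. The indexed-application axiom and all purely temporal axioms are handled as in Theorem 3 of Section \ref{sec:Completeness}, now formulated for $\LPLTL^I_\CS$-models. For list (1), the axioms $\mixaxone$ and $\mixaxtwo$ are semantically valid in every $\LPLTL^I_\CS$-model because Lemma \ref{lem:mix:1} is a purely syntactic consequence of the temporal axioms already in $\LTLp$ and the argument is independent of whether the application operator is annotated or not. For list (2), the four connecting principles are matched one-to-one with the corresponding clauses of Definition \ref{def: evidence functions M-models aditional principles}. Thus, as observed in the paragraph preceding the theorem, the class of models of list (1) coincides with that of $\LPLTL^I(\{\generalizeprinciple,\pastgeneralizeprinciple\})$.

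For internalization, I would copy, almost verbatim, the proofs of Theorems \ref{thm:internalization LPLTL^int} and \ref{thm:internalization}: the induction is on the derivation of $\phi$; propositional and \iteratedconstnecrule\ cases are as in Lemma \ref{lem: internalization LPLTLp}; modus ponens is handled using the indexed $\appax$ of $\LPLTL^I$ (the annotation $\cdot_\psi$ depends on the recovered internalizing term for the antecedent, exactly as in Marti--Studer); and the necessitation cases for $\lalways$ and $\boxminus$ use $\generalizeprinciple$ and $\pastgeneralizeprinciple$ to internalize them. For the $\lnext$- and $\wprevious$-necessitation cases, list (1) first internalizes $\Box\psi$ (resp.\ $\boxminus\psi$) and then multiplies by a constant witness for $\mixaxone$ (resp.\ $\mixaxtwo$), while list (2) instead invokes $\nextaccessprinciple$ (resp.\ $\wprevaccessprinciple$) to pass from $\jbox{\tgeneralize s}_\agent\lalways\psi$ to $\jbox{\tnextaccess\tgeneralize s}_\agent\lnext\psi$ (resp.\ from $\jbox{\tgeneralize_P s}_\agent\boxminus\psi$ to $\jbox{\tprevaccess\tgeneralize_P s}_\agent\wprevious\psi$). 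Axiomatic appropriateness of $\CS$ provides the required constants.

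For completeness, I would adapt the canonical-model construction of Section \ref{sec:Completeness for generalize}. Enlarge the definition of $\Subf(\chi)$ so that the relevant ``unfolded'' assertions remain subformulas: for list (1) this means adding the clause for $\tgeneralize_P$ as already sketched after Section \ref{sec:Completeness for generalize}; for list (2) one additionally declares $\Subf(\jbox{\tnextaccess t}_i\lnext\phi)\supseteq\Subf(\jbox{t}_i\lalways\phi)$ and symmetrically $\Subf(\jbox{\tprevaccess t}_i\wprevious\phi)\supseteq\Subf(\jbox{t}_i\boxminus\phi)$, and extends $B_\chi$ accordingly so that Lemma \ref{lem:until-since-Subf} still holds. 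The operator $\Phi^\B_i$ then carries one disjunct for each new axiom (one for $\tgeneralize_P$, and, for list (2), additionally one each for $\tnextaccess$ and $\tprevaccess$ with the obvious semantics: if $(r(n),t,\lalways\psi)\in X$ then $(r(n),\tnextaccess t,\lnext\psi)\in\Phi^\B_i(X)$, and dually for the past), and $\evidence_i$ is defined as its least fixed point. The canonical model satisfies all evidence conditions by fixed-point inspection, and the Truth Lemma is unchanged except for the justification case, which reduces as before to a suitable analogue of Lemma \ref{lem:fixed-point implies base}.

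That analogue is where the main work sits: for each new fixed-point clause one must verify that, under the hypothesis $\jbox{t}_i\phi\in\Subf^+(\chi)$, membership in $\evidence_i^{\B_i}$ forces $\jbox{t}_i\phi\in\Gamma_n$. The $\tgeneralize_P$ case needs a past-oriented copy of Lemma \ref{lem:evidence generalize condition}, proved from $\pastgeneralizeprinciple$ together with item~3 of Lemma \ref{lem: MCS temporal properties} (using that $\diamondminus\neg\jbox{t}_i\phi$ belongs to $\overline{\Gamma_n}$ whenever negation of $\jbox{\tgeneralize_P t}_i\boxminus\phi$ does). The cases for $\tnextaccess$ and $\tprevaccess$ are comparatively easy: the axiom directly yields $\lnext\jbox{\tnextaccess t}_i\phi\in\Gamma_n$ (resp.\ $\wprevious\jbox{\tprevaccess t}_i\phi\in\Gamma_n$ when $n>0$), and Lemma \ref{lem:R-next} transfers this to the right state. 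The main obstacle I anticipate is bookkeeping for $\Subf^+(\chi)$: one must choose the closure conditions tightly enough that $\MCS_\chi^\tgeneralize$ remains finite (so that Lemma \ref{lem:finite seq to acceptable seq} still yields acceptable sequences) yet wide enough that every subformula encountered during the fixed-point induction lies inside $\Subf^+(\chi)$. Once the closure is right, everything else proceeds exactly as in the completeness proof of ${\sf L}^{\sf gen}$.
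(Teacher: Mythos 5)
Your proposal is correct and takes essentially the same route as the paper, which obtains the theorem by combining the internalization arguments of Theorems \ref{thm:internalization LPLTL^int} and \ref{thm:internalization} (transplanted to the indexed application of $\LPLTL^I$) with the fixed-point canonical-model completeness of Section \ref{sec:Completeness for generalize}, extended by the $\tgeneralize_P$ (and, for list (2), $\tnextaccess$, $\tprevaccess$) clauses, and the observation that $\mixaxone$, $\mixaxtwo$ are valid and hence do not change the model class. Two small inaccuracies in your write-up do not affect the argument: \nextaccessprinciple\ and \wprevaccessprinciple\ yield $\jbox{\tnextaccess t}_i\lnext\phi\in\Gamma_n$ (resp.\ $\jbox{\tprevaccess t}_i\wprevious\phi\in\Gamma_n$) directly at the same state, so no transfer via Lemma \ref{lem:R-next} is needed, and in the past-generalize auxiliary lemma it suffices that $\diamondminus\neg\jbox{t}_i\phi\in\Gamma_n$ (membership in $\overline{\Gamma_n}$ is neither needed nor guaranteed), which is exactly what item 3 of Lemma \ref{lem: MCS temporal properties} requires.
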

\begin{proof}
	Follows from Theorems \ref{thm:internalization LPLTL^int}, \ref{thm:internalization}, \ref{thm:completeness M-models-LPLTL^I generalize}. \qed
\end{proof}
\section{No forgetting and no learning}
\label{sec:No forgetting and no learning}

\textit{No forgetting} (or \textit{perfect recall}) and \textit{no learning} are two well known properties of systems that can be expressed in the language of logics of knowledge and time. It seems that the axioms $\alwaysaccessprinciple$ and $\pastaccessprinciple$ correspond respectively to the notions of no forgetting and no learning on justifications. Let’s make this precise. 

A formula $\phi$ is said to be \textit{stable with respect to the future} if once it is true it remains true, i.e. $\vdash \phi \to \lalways \phi$. In the framework of logics of knowledge and time, it is known that if a logic contains the axiom $\lknows_\agent \lalways \phi \limplies \lalways \lknows_\agent \phi$, then for every formula $\phi$ which is  stable with respect to the future it can be shown that $\vdash \lknows_\agent \phi \to \lalways \lknows_\agent \phi$, i.e. if $\phi$ is known at some point then it remains known at all points in the future (see \cite{FHMV95}). Likewise, we show that logics that contain axiom $\alwaysaccessprinciple$, i.e. $\jbox{t}_\agent \lalways \phi \limplies \lalways \jbox{\talwaysaccess t}_\agent \phi$, have a similar property.

\begin{theorem}\label{thm: alwaysaccess principle-no forgetting}
	Let $Ax \supseteq \{\alwaysaccessprinciple\}$ and let $\mathsf{L} = \LPLTLp(Ax)_\CS$ be a justification logic that satisfies internalization. If 
	\[
	\vdash_\mathsf{L} \phi \to \lalways \phi,
	\]
	then for every term $t$ there is a term $s(t)$ such that
	\[
	\vdash_\mathsf{L} \jbox{t}_\agent \phi \to \lalways \jbox{s(t)}_\agent \phi.
	\]
\end{theorem}
\begin{proof}
	Suppose that $\phi \to \lalways \phi$ is provable in $\LPLTLp(Ax)_\CS$, where $Ax \supseteq \{\alwaysaccessprinciple\}$. Thus, by the internalization property of $\LPLTLp(Ax)_\CS$, we get $\jbox{r}_\agent (\phi \to \lalways \phi)$  for some term $r$. Hence, for every term $t$, $\jbox{t}_\agent \phi \to \jbox{r \cdot_\phi t}_\agent \lalways \phi$, and therefore by axiom $\alwaysaccessprinciple$ we get $\jbox{t}_\agent \phi \to \lalways  \jbox{\talwaysaccess (r \cdot_\phi t)}_\agent \phi$. Thus, for every term $t$ it is enough to put $s \colonequals \talwaysaccess (r \cdot_\phi t)$. \qed 
\end{proof}

Using past time operators, a similar argument can be done for no learning. A formula $\phi$ is said to be \textit{stable with respect to the past} if once it is true it has always been true, i.e. $\vdash \phi \to \lsofar \phi$. Using axiom $\lknows_\agent \lsofar \phi \limplies \lsofar \lknows_\agent \phi$, it is easy to show that for every formula $\phi$ which is  stable with respect to the past we have $\vdash \lknows_\agent \phi \to \lsofar \lknows_\agent \phi$, i.e. if $\phi$ is known at some point then it has always been known at all points in the past. Note that, since $\lsofar \psi \limplies \lonce \psi$ is a valid formula for every $\psi$, $\lknows_\agent \phi \to \lsofar \lknows_\agent \phi$  in turn entails $\lknows_\agent \phi \to \lonce \lknows_\agent \phi$, i.e. if $\phi$ is known at some point then it was known at some point in the past. We show that logics that contain axiom $\pastaccessprinciple$, i.e. $\jbox{t}_\agent \lsofar \phi \limplies \lsofar \jbox{\talwaysaccess_P t}_\agent \phi$, have a similar property.

\begin{theorem}\label{thm: pastaccess principle-no learning}
	Let $Ax \supseteq \{ \pastaccessprinciple \}$ and let $\mathsf{L} = \LPLTLp(Ax)_\CS$ be a justification logic that satisfies internalization. If 
	\[
	\vdash_\mathsf{L} \phi \to \lsofar \phi,
	\]
	then for every term $t$ there is a term $s(t)$ such that
	\[
	\vdash_\mathsf{L} \jbox{t}_\agent \phi \to \lsofar \jbox{s(t)}_\agent \phi.
	\]
\end{theorem}
\begin{proof}
	Suppose that $\phi \to \lsofar \phi$ is provable in $\LPLTLp(Ax)_\CS$, where $Ax \supseteq \{ \pastaccessprinciple \}$. Thus, by the internalization property of $\LPLTLp(Ax)_\CS$, we get $\jbox{r}_\agent (\phi \to \lsofar \phi)$  for some term $r$. Hence, for every term $t$, $\jbox{t}_\agent \phi \to \jbox{r \cdot_\phi t}_\agent \lsofar \phi$, and therefore by axiom $\pastaccessprinciple$ we get $\jbox{t}_\agent \phi \to \lsofar \jbox{\talwaysaccess_P (r \cdot_\phi t)}_\agent \phi$. Thus, for every term $t$ it is enough to put $s \colonequals \talwaysaccess_P (r \cdot_\phi t)$. \qed  
\end{proof}

In the framework of logics of knowledge and time, it is known that the following principles would characterize systems with \textit{no forgetting} $\nfax$ and \textit{no learning} $\nlax$ respectively (cf. \cite{FHMV95,HvdMV04}):
\begin{itemize}
	\item $\lknows_\agent \phi \lsince \lknows_\agent \psi \limplies \lknows_\agent(\lknows_\agent  \phi \lsince \lknows_\agent \psi)$ \hfill \nfax
	
	\item $\lknows_\agent \phi \luntil \lknows_\agent \psi \limplies \lknows_\agent(\lknows_\agent  \phi \luntil \lknows_\agent \psi)$ \hfill \nlax
\end{itemize}
Now let us consider the justification counterparts of the above axioms. The following principles could be considered as justification counterparts of $\nfax$ and $\nlax$ respectively
\begin{itemize}
	\item $\jbox{t}_\agent \phi \lsince \jbox{s}_\agent \psi \limplies \jbox{nf(t,s)}_\agent(\jbox{t}_\agent \phi \lsince \jbox{s}_\agent \psi)$ \hfill \jnfax
	
	\item $\jbox{t}_\agent \phi \luntil \jbox{s}_\agent \psi \limplies \jbox{nl(t,s)}_\agent(\jbox{t}_\agent \phi \luntil \jbox{s}_\agent \psi)$ \hfill \jnlax
\end{itemize}
where $nf$ and $nl$ are two binary new term operators.

Now we give a semantics for the logic $\LPLTLp \{ \jnfax, \jnlax \}$ similar to the semantics of Section \ref{sec:Mkrtychev models aditional principles}. Given a constant specification $\CS$ for $\LPLTLp \{ \jnfax, \jnlax \}$, an $\LPLTLp \{ \jnfax, \jnlax \}_\CS$-model is defined in the same manner as $\LPLTLp$-models (see Definition \ref{def:M-models LPLTLp}) with the following additional conditions $\jnfevidence$ and $\jnlevidence$ on evidence functions:

\begin{itemize}
	\item If there is some $m$ with $n \geq m \geq 0$ such that  $\psi \in \evidence_\agent (r(n-m),s)$ and for all $k$ with $0 \leq k < m$ we have $\phi \in \evidence_\agent (r(n-k),t)$, then $\jbox{t}_\agent \phi \lsince \jbox{s}_\agent \psi \in \evidence_\agent(r(n), nf(t,s))$. \hfill $\jnfevidence$
	
	\item If there is some  $m \geq 0$ such that  $\psi \in \evidence_\agent(r(n+m),s)$  and $\phi \in \evidence_\agent(r(n+k),t)$ for all $k$ with $0 \leq k < m$, then $\jbox{t}_\agent \phi \luntil \jbox{s}_\agent \psi \in \evidence_\agent(r(n),nl(t,s))$. \hfill $\jnlevidence$
\end{itemize} 

In order to prove the completeness of $\LPLTLp \{ \jnfax, \jnlax \}$, it is enough to add the following closure conditions to  conditions (1)--(4) of Definition  \ref{def: evidence functions M-models aditional principles}:
\begin{itemize}
	\item If there is some $m$ with $n \geq m \geq 0$ such that  $(r(n-m),s,\psi) \in E$ and for all $k$ with $0 \leq k < m$ we have $(r(n-k),t,\phi) \in E$, then $(r(n), nf(t,s),\jbox{t}_\agent \phi \lsince \jbox{s}_\agent \psi) \in \OpB(E)$. \hfill (cl-nf)
	
	\item If there is some  $m \geq 0$ such that  $(r(n+m),s,\psi) \in E$  and $(r(n+k),t,\phi) \in E$ for all $k$ with $0 \leq k < m$, then $(r(n),nl(t,s),\jbox{t}_\agent \phi \luntil \jbox{s}_\agent \psi) \in \OpB(E)$. \hfill (cl-nl)
\end{itemize}
Now soundness and completeness of $\LPLTLp \{ \jnfax, \jnlax \}$ is proved similar to that of $\LPLTLp$ in Section \ref{sec:Connecting principles}.

\begin{theorem}[Soundness and completeness]
	Let ${\sf L} = \LPLTLp \{ \jnfax, \jnlax \}$. For each formula $\chi$ and finite set of formulas $T$, we have $T \Vdash_{{\sf L}_\CS}  \chi$ if{f} $T \vdash_{{\sf L}_\CS} \chi$.
\end{theorem}
\begin{proof}
	Straightforward.  \qed
\end{proof}

Theorems \ref{thm: alwaysaccess principle-no forgetting} and \ref{thm: pastaccess principle-no learning} show the relationship between the axioms $\alwaysaccessprinciple$ and $\pastaccessprinciple$ with the notions of no forgetting and no learning, respectively. However, the relationship between the axioms $\jnfax$ and $\jnlax$ with the notions of no forgetting and no learning is not clear yet, except that they are justification counterparts of the axioms $\nfax$ and $\nlax$ respectively. We leave the study of this issue for future work.

\section{Reasoning takes time}
\label{sec:JLTL}

In this section we explore more interactions between justification and time. Let us start with the axiom  $\appax$:
\[
\jbox{t}_\agent (\phi \limplies \psi) \limplies (\jbox{s}_\agent \phi \limplies \jbox{t \tapp s}_\agent \psi).
\]
This axiom says that if  agent $i$ knows $\phi \limplies \psi$ for reason $t$ and she knows $\phi$ for reason $s$, then  \textit{at the same time} she knows $\psi$ for reason $t \tapp s$. So the agent applies the rule Modus Ponens $\mprule$ in her reasoning, but this step of reasoning takes no time. Thus, at a given moment of time the agent knows all consequences of her knowledge. This is related to the \textit{Logical Omniscience Problem}. This would be implausible if we expect that reasoning takes time. The same argument can be applied to the axioms $\sumax$ and $\posintax$. 

In \cite{BucheliGhariStuder2017} the following principles have been suggested to formalize the idea that reasoning with justifications takes time:
\begin{gather*}
	\jbox{t}_\agent (\phi \limplies \psi) \limplies (\jbox{s}_\agent \phi \limplies \lnext\jbox{t \tapp s}_\agent \psi), \\
	\jbox{t}_\agent \phi \vee \jbox{s}_\agent \phi \limplies  \lnext \jbox{t + s}_\agent \phi, \\
	\jbox{t}_\agent \phi \limplies \lnext \jbox{\tinspect t}_\agent \jbox{t}_\agent \phi\,.
\end{gather*} 

At first sight the above principles seem to be impeccable, but it is not difficult to show that they have the following implausible consequences:
\begin{gather}
	\jbox{t}_\agent \phi \rightarrow \lnext \phi, \label{eq: implausible consequences JLTL 1}
	\\
	\jbox{t}_\agent \phi \rightarrow \lnext\jbox{t}_\agent \phi,
	\\
	\jbox{t}_\agent \phi \rightarrow \lalways \jbox{t}_\agent \phi,
	\\
	\jbox{t}_\agent \phi \rightarrow \lalways \phi. \label{eq: implausible consequences JLTL 4}
\end{gather} 
where $t \in \Terms$ and $\phi \in \Formulae$ are arbitrary.

In the following we study another variant of the above principles.\footnote{Thanks to Thomas Studer for suggesting me these axioms.} The logic $\JLTL$ is defined similar to $\LPLTLp$ with the difference that axioms of the justification part are replaced by the following axioms

\begin{enumerate}
	\item $\jbox{t}_\agent (\phi \limplies \psi) \limplies (\jbox{s}_\agent \phi \limplies \lnext\jbox{t \tapp s}_\agent \sprevious\psi)$ \hfill \fpappax
	\item $\jbox{t}_\agent \phi \rightarrow \lnext \jbox{t + s}_\agent \sprevious\phi$, \quad $\jbox{s}_\agent \phi \limplies \lnext \jbox{t + s}_\agent \sprevious\phi$ \hfill \fpsumax
	\item $\jbox{t}_\agent \phi \limplies \phi$ \hfill \refax
	\item $\jbox{t}_\agent \phi \limplies \lnext \jbox{\tinspect t}_\agent \sprevious \jbox{t}_\agent \phi$ \hfill \fpposintax\footnote{The prefix ${\sf FP}$ in the name of these axioms comes from the first letters of `{\sf F}uture' and `{\sf P}ast'.}
\end{enumerate}

The axiom $\fpappax$ says that if  agent $i$ knows $\phi \limplies \psi$ for reason $t$ and she knows $\phi$ for reason $s$, then tomorrow she will know that  $\psi$ was the case yesterday for reason $t \tapp s$. So the agent applies the rule Modus Ponens $\mprule$ in her reasoning, and here this step of reasoning takes time. Thus $\JLTL$-agents avoid the logical omniscience problem.


Next we present a semantics for $\JLTL$ similar to $\LPLTLp$-models given in Section \ref{sec:Mkrtychev models aditional principles}.

\begin{definition}\label{def: LTLJ-models}
	An  $\JLTL_\CS$-model is a tuple $\M = (r, S, \evidence_1\ldots,\evidence_\numberofagents, \valuation)$
	where
	\begin{enumerate}
		\item $S$ is a non-empty set of states;
		\item $r : \N \to S$ is a run on $S$;
		\item $\evidence_\agent$ is an $\JLTL_\CS$-evidence function for each agent~$\agent \in \Ag$;
		\item $\valuation: S \to \mathcal{P}(\Prop)$ is a valuation.
	\end{enumerate}	
	$\JLTL_\CS$-evidence functions should
	satisfy the following conditions.
	For all $n \in \N$, all terms $s,t \in \Terms$ and all formulas $\phi,\psi \in \Formulae$: 
	\begin{enumerate}
		\item 
		if $\jbox{t}_\agent \phi \in \CS$, then $\phi \in \evidence_\agent(r(n),t)$, \hfill (constant specification)
		
		\item 
		if $\phi \limplies \psi \in \evidence_\agent(r(n),t)$ and $\phi \in \evidence_\agent(r(n),s)$, then $\sprevious\psi \in \evidence_\agent(r(n+1), t \tapp s)$, \\ \text{} \hfill {\sf (FP-application)}
		\item 
		if $\phi \in \evidence_\agent(r(n),s) \cup \evidence_\agent(r(n),t)$, then $\sprevious \phi \in \evidence_\agent(r(n+1),s + t)$, \hfill {\sf (FP-sum)}
		\item 
		if $\phi \in \evidence_\agent(r(n),t)$, then $\sprevious\jbox{t}_\agent \phi \in \evidence_\agent(r(n+1),\tinspect t)$. \hfill {\sf (FP-positive introspection)}
	\end{enumerate}
\end{definition}
Given an $\JLTL_\CS$-model $\M$, the truth of a formula  in $\M$ is defined in the same manner as in Definition \ref{def:M-models LPLTLp}. The proof of completeness is similar to the proof of Theorem \ref{thm:completeness LPLTLp M-models} by constructing a canonical model. Note that in order to prove the completeness, conditions (2)--(4) of Definition \ref{def:fixed point operator} should be replaced by the following closure conditions:
\begin{enumerate}
	
	\item 
	If $(r(n),t,\psi \limplies \phi) \in E$ and $(r(n),s,\psi) \in E$, then $(r(n+1),t \cdot_\psi s,\sprevious \phi) \in \OpB(E)$. \\ \text{} \hfill (FP-cl-application)
	
	\item 
	If $(r(n),t,\phi) \in E$ or $(r(n),s,\phi) \in E$, then $(r(n+1),s+t,\sprevious \phi) \in \OpB(E)$. \\ \text{} \hfill (FP-cl-sum)
	
	\item 
	If $(r(n),t,\phi) \in E$, then $(r(n+1),!t,\sprevious \jbox{t}_\agent \phi) \in \OpB(E)$. \\ \text{} \hfill (FP-cl-positive-introspection)
\end{enumerate}
Now soundness and completeness of $\JLTL_\CS$ is proved similar to that of $\LPLTLp$ in Section \ref{sec:Connecting principles}.

\begin{theorem}[Soundness and completeness]
	Let $\CS$ be a constant specification for $\JLTL$. Then $\vdash_{\JLTL_\CS} \phi$ if{f} $\M\entails \phi$ for all $\JLTL_\CS$-models $\M$.
\end{theorem}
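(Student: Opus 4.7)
The plan is to adapt the Hilbert-style canonical model argument already used for $\LPLTLp$ (Theorems \ref{thm:Completeness M-models} and \ref{thm:Weak Completeness M-models}), adjusting only the justification-logic step to accommodate the time-shifted $\fpappax$, $\fpsumax$, $\fpposintax$ axioms. Since $\JLTL$ is an extension of $\LTLp$, the entire MCS machinery of Section \ref{sec:Maximal consistent sets} (maximal consistent sets, $R_\lnext$, until-sequences, since-sequences, acceptable sequences) carries over verbatim.

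For soundness, I would proceed by induction on derivations in $\JLTL_\CS$. The propositional and temporal axioms/rules, as well as the iterated axiom necessitation rule, are handled exactly as in the soundness proof for $\LPLTLp_\CS$. For the justification axioms, each FP-axiom is a syntactic transcription of the corresponding semantic clause, once one tracks the $\lnext/\sprevious$ shift. For $\fpappax$, if $(\M,r(n))\models \jbox{t}_i(\phi\limplies\psi)\wedge \jbox{s}_i\phi$, then $\phi\limplies\psi\in\evidence_i(r(n),t)$ and $\phi\in\evidence_i(r(n),s)$, so by (FP-application) we have $\sprevious\psi\in\evidence_i(r(n+1),t\tapp s)$; reflexivity of $\models$ for justification assertions gives $(\M,r(n))\models\psi$, hence $(\M,r(n+1))\models\sprevious\psi$, and together we obtain $(\M,r(n))\models \lnext\jbox{t\tapp s}_i\sprevious\psi$. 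The cases of $\fpsumax$ and $\fpposintax$ are analogous, and $\refax$ is immediate from the truth clause for $\jbox{\cdot}$.

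For completeness, given $\not\vdash_{\JLTL_\CS}\phi$ I would extend $\{\neg\phi\}$ to a maximal $\CS$-consistent set $\Gamma$, form $\overline{\Gamma}\in \MCS_\phi$, invoke Corollary \ref{cor:acceptable sequence starts with Gamma} to obtain an acceptable sequence $(\overline{\Gamma_0},\overline{\Gamma_1},\ldots)$ with $\overline{\Gamma_n}=\overline{\Gamma}$, and build the $\phi$-canonical model $\M$ exactly as in Definition \ref{def:canonical M-model}, i.e.\ $\evidence_i(\overline{\Gamma_k},t):=\{\psi\mid \jbox{t}_i\psi\in\Gamma_k\}$. The key step is verifying that $\M$ is a $\JLTL_\CS$-model, i.e.\ that $\evidence_1,\ldots,\evidence_\numberofagents$ satisfy the conditions of Definition \ref{def: evidence functions M-models}. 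For (FP-application), suppose $\phi\limplies\psi\in\evidence_i(r(k),t)$ and $\phi\in\evidence_i(r(k),s)$; then $\jbox{t}_i(\phi\limplies\psi),\jbox{s}_i\phi\in\Gamma_k$, and two applications of $\mprule$ with the $\fpappax$ instance give $\lnext\jbox{t\tapp s}_i\sprevious\psi\in\Gamma_k$. Since $\overline{\Gamma_k}R_\lnext\overline{\Gamma_{k+1}}$ comes from the acceptable sequence, Lemma \ref{lem:R-next}(1) yields $\jbox{t\tapp s}_i\sprevious\psi\in\Gamma_{k+1}$, i.e.\ $\sprevious\psi\in\evidence_i(r(k+1),t\tapp s)$. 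The verifications of (FP-sum), (FP-positive introspection) and (constant specification) proceed by exactly the same pattern.

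The Truth Lemma is then proved by induction on $\psi\in\Sub^+(\phi)$, with all temporal cases identical to the proof for $\LPLTLp$; the only case that differs in spirit is $\psi=\jbox{t}_i\chi$, where the $(\Leftarrow)$ direction uses $\refax$ to extract $\chi\in\Gamma_k$ and then the IH, exactly as in the $\LPLTLp$ proof. From the Truth Lemma we obtain $(\M,r(n))\not\models\phi$, so $\M\not\models\phi$, completing the completeness direction; the deduction-theorem argument of Theorem \ref{thm:Weak Completeness M-models} then upgrades this to arbitrary finite $\Gamma$. There is no real obstacle — the shape of the FP-axioms is tailored precisely so that their ``$\lnext\jbox{\cdot}\sprevious$'' conclusion drops, via $R_\lnext$, into the correct condition on the evidence function at the successor state.
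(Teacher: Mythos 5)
Your proposal is correct and follows essentially the same route as the paper: the paper's proof of this theorem simply declares soundness straightforward, reuses the canonical-model construction and Truth Lemma from the $\LPLTLp$ Mkrtychev-style completeness proof, and leaves to the reader precisely the verification you spell out, namely that the canonical evidence function satisfies the FP-conditions (via the FP-axioms, closure of the maximal consistent sets under modus ponens, and Lemma \ref{lem:R-next}). Your detailed checks of the FP-axioms for soundness and of (FP-application) etc.\ for the canonical model fill in exactly the steps the paper omits, with no deviation in method.
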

\begin{proof}
	Soundness is straightforward. The proof of completeness is similar to the proof of Theorem \ref{thm:completeness LPLTLp M-models} by constructing a canonical model.  Truth Lemma can be proved as before. The only new part is to show that any $\chi$-canonical model for $\CS$ with respect to an acceptable sequence $(\overline{\Gamma_0}, \overline{\Gamma_1}, \ldots)$ for $\JLTL_\CS$ is an $\JLTL_\CS$-model.  This is left to the reader. \qed	 
\end{proof}

Given a set $Ax$ of connecting principles from Section \ref{sec:Axioms}, by $\JLTL(Ax)$ we denote the result of adding axioms from $Ax$ to $\JLTL$. The above completeness result can  be easily extended to $\JLTL(Ax)$ as well.

It is not difficult to show that none of the formulas \eqref{eq: implausible consequences JLTL 1}-\eqref{eq: implausible consequences JLTL 4} are valid in $\JLTL$. For example, consider the following instance of \eqref{eq: implausible consequences JLTL 1}-\eqref{eq: implausible consequences JLTL 4}:
\begin{equation}\label{eq: implausible consequences JLTL-instances}
	\jbox{x}_\agent P \rightarrow \lnext P, \quad
	\jbox{x}_\agent P \rightarrow \lnext\jbox{x}_\agent P, \quad
	\jbox{x}_\agent P \rightarrow \lalways \jbox{x}_\agent P, \quad
	\jbox{x}_\agent P \rightarrow \lalways P,
\end{equation}
where $x \in \VTerms$ and $P \in \Prop$. Let $\M = (r, S, \evidence_1\ldots,\evidence_\numberofagents, \valuation)$ be defined as follows:
\begin{itemize}
	\item $S = \{ w, v\}$.
	
	\item $r(0) = w$ and $r(n) = v$ for all $n \geq 1$.
	
	\item $P \in \valuation(w)$ and $P \not \in \valuation(v)$.
	
	\item $\evidence_\agent (r(n), t) = \{ \phi \mid (r(n), t, \phi) \in \evidence_\agent^\mathcal{B} \}$, where $\mathcal{B} = \{ (r(0), x, P) \}$ and $\evidence_\agent^\mathcal{B}$ is the least fixed point of $\Phi_\agent^\mathcal{B}$.\footnote{Note that the closure conditions (FP-cl-application), (FP-cl-sum), and (FP-cl-positive-introspection) are used in the definition of $\Phi_\agent^\mathcal{B}$.} 
\end{itemize}
Now it is obvious that $\M$ is an $\JLTL_\emptyset$-model, and further none of the formulas in \eqref{eq: implausible consequences JLTL-instances} are true in $\M$ at state $r(0)$. Thus, none of the formulas in \eqref{eq: implausible consequences JLTL-instances} are valid in $\JLTL_\emptyset$.
%


\begin{remark}
	Note that since $\phi \limplies \leventually \phi$ is provable in $\LTL$ the following formulas trivially follows  in $\LPLTL$ from the axioms $\appax$, $\sumax$, and \linebreak $\posintax$:
	\begin{gather*}
	\jbox{t}_\agent (\phi \limplies \psi) \limplies (\jbox{s}_\agent \phi \limplies \leventually \jbox{t \tapp s}_\agent \psi), \\
	\jbox{t}_\agent \phi \vee \jbox{s}_\agent \phi \limplies  \leventually \jbox{t + s}_\agent \phi, \\
	\jbox{t}_\agent \phi \limplies \leventually \jbox{\tinspect t}_\agent \jbox{t}_\agent \phi\,.
	\end{gather*} 
	%
	 A more realistic set of axioms which do not suffer from the logical omniscience problem can be formulated as follows:
	\begin{gather*}
	\jbox{t}_\agent (\phi \limplies \psi) \limplies (\jbox{s}_\agent \phi \limplies \langle F \rangle\, \jbox{t \tapp s}_\agent \psi), \\
	\jbox{t}_\agent \phi \vee \jbox{s}_\agent \phi \limplies  \langle F \rangle\, \jbox{t + s}_\agent \phi, \\
	\jbox{t}_\agent \phi \limplies \langle F \rangle\, \jbox{\tinspect t}_\agent \jbox{t}_\agent \phi\,.
	\end{gather*} 
	where $\langle F \rangle\, \phi := \neg \phi \wedge \leventually \phi$. We leave the proof of completeness to possible future work.
\end{remark}

\bibliography{library}
\end{document}